\documentclass[11pt]{article}

\usepackage[utf8]{inputenc}
\usepackage[T1]{fontenc}
\usepackage{fullpage}
\usepackage{amsmath, amssymb, amsthm, amsfonts, bm, mathtools, physics}
\usepackage{mleftright}\mleftright
\usepackage[pagebackref,final]{hyperref}\hypersetup{breaklinks=true, colorlinks = true, allcolors = blue}
\usepackage{thmtools}
\usepackage{enumitem}
\usepackage[textsize=tiny]{todonotes}
\usepackage{algorithm}
\allowdisplaybreaks

\usepackage{authblk}
\usepackage{cleveref}
\usepackage{graphicx}
\usepackage{caption}
\captionsetup{font=footnotesize}
\usepackage{float}

\usepackage{silence}
\WarningFilter{latex}{Writing or overwriting file}
\WarningFilter{caption}{Unknown document class}
\WarningFilter{latex}{The quantumarticle class recommends generating the bibliography}

\usepackage{ifdraft}
\ifdraft{
	
}{
	
}

\newtheorem{theorem}{Theorem}

\newtheorem{remark}{Remark}
\newtheorem{lemma}[theorem]{Lemma}
\newtheorem{fact}[theorem]{Fact}
\newtheorem{corollary}[theorem]{Corollary}
\newtheorem{definition}{Definition}
\newtheorem{claim}{Claim}


\newcommand{\eps}{\epsilon}

\newcommand{\bigO}[1]{\mathcal{O}\left( #1 \right)}

\newcommand{\pgood}{\Pi_\text{good}}

\newcommand{\pzeroa}{\Pi_{0^a}}
\newcommand{\pperp}{\Pi_{\perp}}

\newcommand{\circclass}{\textsf{MCM}}

\newcommand{\embe}{\textsf{EMBE}}
\newcommand{\ambe}{\textsf{AMBE}}

\newcommand{\addm}{\textsf{ADD}$_m$}
\newcommand{\addp}{\textsf{ADD}_p}

\newcommand{\err}{\mathcal{E}_{BE}}
\newcommand{\seq}{\mathcal{S}_x}

\newcommand{\herm}{\mathcal{H}}
\newcommand{\general}{\mathcal{A}}


\newcommand{\calD}{\mathcal{D}}

\newcommand{\calH}{\mathcal{H}}

\newcommand{\calM}{\mathcal{M}}

\newcommand{\calO}{\mathcal{O}}

\newcommand{\calS}{\mathcal{S}}
\newcommand{\calT}{\mathcal{T}}

\newcommand{\good}{\Pi_\textnormal{good}}
\newcommand{\bad}{\Pi_\textnormal{bad}}
\newcommand{\spant}{\textnormal{span}}

\renewcommand{\epsilon}{\varepsilon}

\title{Methods for Reducing Ancilla-Overhead in Block Encodings}

\author[1]{Francisca Vasconcelos\footnote{francisca@berkeley.edu}}
\author[2]{András Gilyén\footnote{gilyen.andras@renyi.hu}}
\affil[1]{University of California, Berkeley}
\affil[2]{HUN-REN Alfréd Rényi Institute of Mathematics}
\date{}

\begin{document}
 \maketitle

\begin{abstract}\footnotesize
    Block encodings are a fundamental primitive in quantum algorithms but often require large ancilla overhead. In this work, we introduce novel techniques for reducing this overhead in two distinct ways.
    
    In Part I, we devise a method for any block encoding that approximately uncomputes all but one of the ancillae (freeing up the ancillae for reuse in later parts of a quantum algorithm), thereby effectively providing a ``space-time tradeoff''. More precisely, for any matrix $A$, such that $\|A\|\leq 1-\delta$, our protocol coherently computes an $\epsilon$-approximate single-ancilla block-encoding of $A$, constructed through $\calO(\frac{1}{\delta}\log(\frac{1}{\epsilon}))$ queries to any block encoding $U$ and $U^\dagger$. We establish a connection between this procedure and the problem of ``phase correction'' in modular quantum signal processing.

    In Part II, we evaluate the minimum number of ancillae required to perform coherent multiplication of $K$ block encodings and, in doing so, introduce a ``space-accuracy tradeoff''. First, we prove that $\lceil\log_2 K\rceil$ ancillae is optimal for \emph{exact} multiplication of block encodings. This lower-bound is saturated by a slight modification to the ``compression gadget'' of Low and Wiebe (2019). Furthermore, we study $\epsilon$-\emph{approximate} multiplication of block encodings in regimes where the block encodings have $\calO(1/K)$-bounded deviation from the identity and propose the $p$-Modular Addition Compression Gadget which has bounded error $\calO(K^{-2^p})$, using just $p$ additional ancillae. We discuss applications of the $p$-MACG to Hamiltonian simulation and quantum differential equation solvers. Finally, we demonstrate how oblivious amplitude amplification can boost the success probability of an $\epsilon$-approximate multiplcation of block encodings circuit to $1-\calO(\epsilon^2)$.
\end{abstract}


\newpage
{\tableofcontents}

\newpage

\section{Introduction} \label{sec:intro}
Block encodings are a fundamental primitive in quantum algorithms \cite{chakraborty2018BlockMatrixPowers, low2016HamSimQubitization, low2019Hamiltonian, gilyen2018QSingValTransf, dalzell2023QuantumAlgSurvey}, enabling non-unitary dynamics on quantum subsystems. In practice, however, their implementation and manipulation can be resource intensive---often requiring large ancilla overhead \cite{clader2022quantum}. For example, the Quantum Singular Value Transformation (QSVT) \cite{gilyen2018QSingValTransf}---a prominent quantum subroutine---requires repeated sequential access to a block encoding. However, each such access requires re-preparation of the block encoding with fresh ancillae, causing the number of ancillae to scale multiplicatively with the total number of calls. Another example includes multiplication of block encodings, an important subroutine for Hamiltonian simulation \cite{berry2015simulating, low2019Hamiltonian} and quantum scientific computing \cite{fang2023time}. Whereas one might naïvely expect $K$ ancillae necessary to multiply $K$ block encodings, Low and Wiebe \cite{low2019Hamiltonian} introduced a ``compression gadget'' reducing the overall ancilla overhead to $\log_2 K$.

This paper offers two novel methods for reducing the ancilla (space) resources for block encodings, applicable to both of these important settings. Although there has been substantial work on time-accuracy tradeoffs for various quantum algorithms, such as Quantum Signal Processing \cite{low2016HamSimQSignProc} and the Quantum Singular Value Transform \cite{gilyen2018QSingValTransf}, to our knowledge, this paper constitutes the first explicit proposal and study of space-time and space-accuracy tradeoffs for manipulation of block encodings.  

The first portion of this work introduces a quantum ``space-time tradeoff'', in which ancilla (space) overhead is reduced in exchange for added computational time. This result has both theoretical and practical implications. Theoretically, a novel algorithm is established for coherently uncomputing all but one of a block encoding's ancillae. To do so, we reframe amplitude amplification as a general ancilla-uncomputation procedure for \emph{unitary} block encoded matrices. Therefore, given a block encoding of a non-unitary matrix $A$, our algorithm maps it (via Hermitian dilation) into a single-ancilla unitary block encoding $U_A$, which enables amplitude amplification to return all other ancilla to $\ket{0}$. Practically, the procedure can substantially reduce ancilla-overhead in quantum algorithms (while incurring little time overhead). For example, while a quantum algorithm (e.g. the QSVT) requiring $K$ sequential calls to block encodings with $a$ ancillae each would generally require $K \cdot a$ ancillae, the uncomputation procedure can reduce total ancilla count to $K+a+c$, for some small constant $c$.

The second portion of this work introduces a quantum ``space-accuracy tradeoff'', in which ancilla (space) overhead is reduced in exchange for worsened computational accuracy. Notably, we demonstrate that, by relaxing from exact to approximate computation (for structured instances), ancilla-overhead can be substantially reduced. This mirrors the common classical phenomena in which exact resource requirements can be much larger than approximate ones---i.e., the exact versus approximate polynomial degree-gap of Boolean functions\footnote{E.g., exactly computing the OR function on $n$ bits requires degree $n$, while an $\varepsilon$-approximation can be achieved with degree $\Theta(\sqrt{n})$~\cite{nisan1994degree,paturi1992degree}.} or space-gap of streaming algorithms\footnote{E.g., exactly computing frequency moments $F_k$ of a data stream requires $\Omega(n)$ space, 
while certain moments (e.g., $F_0$ and $F_2$) admit $\varepsilon$-approximations in $\mathrm{polylog}(n)$ space~\cite{alon1999space}.}. Although we expect this to be a broad quantum phenomenon, occurring for many forms of manipulation of block encodings, this work specifically focuses on multiplication of block encodings, as a proof of concept. 

In the multiplication setting, we begin by establishing a $\log_2 K$ ancilla lower-bound for \emph{exact} multiplication of $K$ block encodings, via a Hilbert space dimension-counting argument. This result both establishes optimality of the Low-Wiebe ``compression gadget'' and, more generally, illustrates how information-theoretic  arguments can be used to certify space lower-bounds for manipulation of block encodings.  

We then formalize an appropriate notion of \emph{approximate} multiplication of block encodings so as to achieve an ancilla-overhead reduction substantially surpassing the exact multiplication lower-bound. Specifically, we show that in the large-$K$ regime, for block encodings with bounded distance from the identity matrix, high-precision approximate multiplication of $K$ block encodings can be achieved with just a single ancilla. Conceptually, the proof approach is a product-compression routine that behaves like a quantum streaming ``sketch'' for operator products. In classical sketching algorithms, limited memory induces random collisions, meaning accuracy guarantees are typically probabilistic and the collision error probability shrinks with sketch size~\cite{nelson2011sketching}. In contrast, our approximate multiplication procedure is deterministic, with collision residual controlling the $\epsilon$ factor in the resultant $\epsilon$-precise block encoding of the desired multiplication sequence. Here, ancilla count plays the role of ``sketch dimension'', with each additional ancilla further reducing the collision residual. Overall, we believe that this analogy suggests that classical sketching techniques could prove useful in further developing and tightening quantum space–accuracy tradeoffs---an exciting direction for future work.

\subsection{Background}
Although quantum computation is inherently unitary, quantum \emph{subsystems} can undergo non-unitary dynamics, if one allows post-selection. Thus, non-unitary quantum computation on an $n$-qubit system can be achieved by expanding the computational workspace, i.e., adding extra ``ancillary'' qubits, and treating the main $n$-qubit system as a subsystem of this larger workspace. Interaction with and measurement of the ancillary ``bath'' can induce non-unitary, irreversible dynamics on the main  $n$-qubit subsystem. 

While this procedure may sound abstract, decades of quantum algorithms research has formalized a simple, yet fundamental primitive for implementing such non-unitary embeddings in quantum computation\,---\,originally defined by~\cite{low2016HamSimQubitization} and subsequently termed by~\cite{chakraborty2018BlockMatrixPowers} as a ``block encoding''. 
Formally, a block encoding embeds a normalized, arbitrary (i.e., non-unitary) matrix $A$ into a larger unitary matrix,
\begin{align} \label{eqn:be}
    U_A = \begin{bmatrix}
        A & * \\
        * & *
    \end{bmatrix}, \quad \text{ where } A = (\bra{0^a} \otimes I)~U_A~(\ket{0^a} \otimes I).
\end{align}
More generally, the unitary $\widetilde{U}_A$ is called an $(\alpha, a, \epsilon)$-block encoding of $A$ if 
\begin{align}
   \| A - \alpha \cdot \bra{0^a} \widetilde{U}_A \ket{0^a} \| \leq \epsilon. 
\end{align}
Given an arbitrary $n$-qubit state $\ket{\psi}$, $A \ket{\psi}$ is implemented via the circuit that computes $U_A \ket{0^a}\ket{\psi}$, performs a measurment on a the first $a$ ancilla qubits, and post-selects on the $0^a$ outcome.

Despite its simplicity, block encoding has become one of the most important primitives in quantum algorithms~\cite{dalzell2023QuantumAlgSurvey}, largely thanks to qubitization~\cite{low2016HamSimQubitization} and quantum singular value transformation~\cite{gilyen2018QSingValTransf}, and their applications, e.g., in Hamiltonian simulation \cite{berry2015simulating, low2019Hamiltonian} and scientific computing. Indeed, block encoding methods have led to improved linear systems solvers~\cite{chakraborty2018BlockMatrixPowers} (with applications to regression), differential equation solvers \cite{fang2023time}, quantum information protocols~\cite{gilyen2020QAlgForPetzRecovery}, optimization~\cite{apeldoorn2019QAlgorithmsForZeroSumGames}, statistical estimation~\cite{gilyen2019DistPropTestQuant}, Gibbs sampling~\cite{chen2023QThermalStatePrep}, state preparation~\cite{lin2019OptimalQEigenstateFiltering}, and tomography~\cite{apeldoorn2022QTomographyWStatePrepUnis} algorithms. 
In practice, however, implementation of block encodings can be resource intensive, requiring substantial computation time and ancilla space overhead \cite{clader2022quantum}.

Several techniques have also been established for manipulation of block encodings. For example, the linear combination of unitaries procedure \cite{childs2012hamiltonian} block encodes \emph{convex combinations} (i.e., addition) of general matrices. In this work, we specifically focus on procedures for \emph{multiplication} of a large number of block-encoded matrices. Beyond being an important subroutine for data embedding and processing, multiplication of block encodings plays an important role in quantum simulation algorithms \cite{berry2015simulating, low2019Hamiltonian} and quantum differential equation solvers \cite{fang2023time}. Although multiplication of block encodings can be implemented with mid-circuit measurements (i.e. incoherent computation), such implementations are often challenging and limited in practice. For example, in superconducting architectures measurement times ($\sim100$ns-$1\mu$s) are typically much longer than gate times ($\sim 10$-$100$ns) \cite{krantz2019superconducting}, meaning repeated measurement could severely slow down the overall computation time. In neutral atom systems, imaging ejects atoms from the trap, which are time-consuming to reload \cite{bluvstein2024logical,manetsch2024tweezer}. Therefore, it is natural to consider \emph{coherent} implementations of multiplication of block encodings, via deferred measurement. Furthermore, coherent implementations have the important added benefits that they can be used as black-box subroutines in quantum algorithms and that amplitude amplification can be used to boost the probability of the successful measurement sequence to $\Omega(1)$.

\subsection{Our Contributions}
As discussed, practical implementation and manipulation of block-encodings can be resource intensive. This work focuses on reducing the number of ancillae necessary for block encoding operations, via two distinct approaches.

\paragraph{\textbf{Part I} (\Cref{sec:intro_part_i} and \Cref{sec:part_i}).} Here we assume black-box access to a block encoding $U_A$ implemented with many ancillae. We give a protocol for coherently uncomputing all but one of a $U_A$'s ancillae, effectively establishing a ``space-time tradeoff''. We show that for any matrix $A$ (s.t.\ $\|A\| <0.99$) and error $\epsilon$, one can get an $\epsilon$-precise single-ancilla block encoding of $A$ with $\calO(\log(1/\epsilon))$ uses of $U_A^{\pm1}$. 

It is important to note that the procedure does not strictly eliminate the $a$ original ancillae in implementing the block encoding, since the block encoding is itself queried in our procedure. However, the procedure approximately returns $a-1$ of the block encoding's ancillae back to the $\ket{0^{a-1}}$ state. These can, thus, be treated as ``fresh'' ancillae for reuse in other tasks later in the algorithm (without intermediate measurements or classical post-selection). This is why, as previously discussed, our procedure can be leveraged by a quantum algorithm (e.g. the QSVT) requiring sequential preparation of and access to $K$ block-encodings requiring $a$ ancillae each to reduce the overall ancilla-overhead from  $K \cdot a$ to $K+a+c$ (where $c$ is a small constant). We conclude the section with discussion on the connections between our ancilla-uncomputation procedure and the \cite{rossi2022multivariableQSP} ``phase correction'' procedure for modular quantum signal processing.

\paragraph{\textbf{Part II} (\Cref{sec:intro_part2} and \Cref{sec:part_ii}).} Here we focus specifically on the ancilla-overhead of multiplication of block encodings. We prove that \emph{exact} multiplication of $K$ block encodings requires at least $\lceil \log_2 K \rceil$ ancillae\,---\,a lower-bound saturated by a slight modification to the compression gadet of \cite{low2019Hamiltonian}. Moreover, by relaxing from exact to \emph{approximate} multiplication of block encodings we show that, in certain regimes (where each block encoding is structured), only $a=\calO(1)$ ancillae are necessary to implement an $\calO(1/K^{2^a})$-precise multiplication of $K$ block encodings. This offers a substantial saving over the $a=\Theta(\log_2 K)$ ancillae required by the \cite{low2019Hamiltonian} compression gadget for exact-multiplication. While the result assumes structure about the block encodings (and is therefore not fully generic), it serves as an exciting proof-of-concept that ancilla overhead can in fact be reduced below the logarithmic lower-bound necessary for exact multiplication. We further motivate how the guarantees of this procedure are relevant in the settings of Hamiltonian simulation and quantum differential equation solvers. Finally, we show that oblivious amplitude amplification can be used to boost the success probability for $\epsilon$-approximate multiplication to $1-\calO(\epsilon^2)$.

\section{Part I: Approximate Single-Ancilla Block Encodings} \label{sec:intro_part_i}

As discussed in the introduction, implementation of block encodings can often be resource intensive, especially with respect to ancilla overhead. Given an ``inefficient'' block encoding (i.e. a block encoding requiring more than one ancilla qubit) of matrix $\general$, with $\|\general\|<1$, we prove that there exists a simple and efficient quantum algorithm for coherent uncomputation of all but one ancilla. This protocol achieves an $\epsilon$-precise Hermitian single-ancilla block encoding of $\general$, returning the remaining ancilla back to the $\ket{0}$ state and freeing them for reuse for other tasks in a quantum algorithm. Overall, this can be interpreted as a ``space-time tradeoff'', in which ancilla space overhead is traded for additional computation time. 

Formally, we prove the following main theorem regarding ancilla uncomputation of block encodings of Hermitian matrices $\herm$.
\begin{theorem}[Ancilla Uncomputation for Hermitian Matrices] \label{thm:main_uncompute}
    Let $V_{\herm}$ be a $(1, a, 0)$-block-encoding of Hermitian matrix $\herm$, where $\norm{\herm} \leq 1-\delta$, for some $\delta \in (0,1)$. For any $\eps \in (0,1)$, there exists a quantum circuit, requiring a small constant number $c=\calO(1)$ of additional ancillae and $\calO\left(\frac{1}{\delta}\log(\frac{1}{\epsilon})\right)$-queries to $V_\herm$ and $V_\herm^\dagger$, that implements a $(1,1,\epsilon)$-block encoding of $\herm$ and returns all remaining ancilla to the $\ket{0}$ state.
\end{theorem}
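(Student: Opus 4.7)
The goal is to construct the canonical single-ancilla Hermitian block encoding
\[
    U_\herm \;:=\; e^{i X_c \arccos(\herm)} \;=\; \begin{pmatrix} \herm & i\sqrt{I - \herm^2} \\ i\sqrt{I - \herm^2} & \herm \end{pmatrix}
\]
to precision $\epsilon$ on a fresh ancilla qubit $c$, with all other registers (the $a$-qubit block-encoding register of $V_\herm$ and $\calO(1)$ workspace qubits) returned cleanly to $\ket{0}$. Since $\bra{0}_c U_\herm \ket{0}_c = \herm$, any $\epsilon$-precise implementation of $U_\herm$ is automatically a valid $(1,1,\epsilon)$-block-encoding of $\herm$. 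The gap condition $\norm{\herm} \le 1-\delta$ keeps $\arccos$ analytic on the spectrum of $\herm$ and away from its branch points at $\pm 1$, so the relevant spectral functions admit polynomial approximations of the claimed degree.

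The construction then proceeds in three steps. First, qubitize $V_\herm$ into a walk operator $W$ acting on the $a$-qubit ancilla and the system, which preserves the $2$-dimensional invariant subspace associated to each eigenvalue $\lambda = \cos\theta$ of $\herm$ and acts there as a planar rotation by $2\theta$; each application of $W$ costs $\calO(1)$ queries to $V_\herm^{\pm 1}$. Second, construct a bounded polynomial $p$ of degree $d = \calO\!\left(\tfrac{1}{\delta}\log\tfrac{1}{\epsilon}\right)$ that approximates $\arccos(x)/\pi$ on $[-1+\delta,\, 1-\delta]$ to error $\epsilon$ via standard Chebyshev truncation; the gap keeps this degree finite despite the branch-point singularities of $\arccos$. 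Third, apply QSVT to $W$ with $p$ and interleave controlled single-qubit rotations on the fresh ancilla $c$ --- equivalently, perform block-encoded Hamiltonian simulation of $X_c \otimes \arccos(\herm)$ for time $1$ --- yielding a circuit that on the $\ket{0^a}$-flagged subspace realizes $e^{i X_c \arccos(\herm)}$ to error $\epsilon$ using $\calO(d)$ queries to $V_\herm^{\pm 1}$ in total.

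The hard part is the coherent ancilla-uncomputation: the circuit must return every qubit except $c$ to $\ket{0}$ as a unitary statement, not merely under post-selection; otherwise, $\lambda$-dependent residual phases on the $a$-ancilla register would entangle those qubits with the good subspace and destroy the block encoding property. I would cancel these residuals by appending a second, carefully tuned QSP sub-sequence with reflections $2\pzeroa - I$ that zeros out the dangling $\ket{0^a}$-phase profile --- the same mechanism as the ``phase-correction'' protocol in modular QSP that the authors foreshadow. Once the phases are cancelled, correctness reduces to a direct $2 \times 2$ check on each invariant subspace of $W$, showing the composite unitary acts as $e^{i X_c \theta}$ on $\spant\{\ket{0}_c\ket{\lambda},\, \ket{1}_c\ket{\lambda}\}$ while disentangling the $a$-ancilla register; the total error is then the polynomial-approximation error plus routine $\calO(\epsilon)$ QSVT rounding, giving the claimed $(1,1,\epsilon)$-block-encoding at query cost $\calO\!\left(\tfrac{1}{\delta}\log\tfrac{1}{\epsilon}\right)$.
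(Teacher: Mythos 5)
Your overall route is genuinely different from the paper's: you qubitize $V_\herm$ and try to synthesize $e^{iX_c\arccos(\herm)}$ directly via QSVT/Hamiltonian simulation, whereas the paper builds a block encoding of $\sqrt{I-\herm^2}$ (LCU to get $\frac{I-\herm^2}{2}$, then QSVT with $f(x)=\frac12\sqrt{x}$), combines it with $V_\herm$ by LCU into a $(1,a+\calO(1),\calO(\eps))$-block encoding of $\sin(\pi/14)\cdot U_\herm$ with $U_\herm = Z\otimes\herm + X\otimes\sqrt{I-\herm^2}$, and then applies \emph{oblivious amplitude amplification}. The OAA step is not cosmetic: it is the entire mechanism by which the $a+\calO(1)$ working ancillae are coherently returned to $\ket{0}$, and it works precisely because the target $U_\herm$ is unitary.

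This is where your proposal has a genuine gap. You correctly identify that the hard part is the coherent uncomputation of the $a$-qubit register, but the fix you propose --- a ``phase-correction'' QSP subsequence built from reflections $2\pzeroa - I$ --- addresses the wrong obstruction. The problem is not a $\lambda$-dependent residual \emph{phase} on the $\ket{0^a}$ branch; it is \emph{amplitude leakage} out of the $\ket{0^a}$-flagged subspace. Any QSP/QSVT sequence applied to the walk operator realizes some bounded polynomial $P$ in its $\bra{0^a}\cdot\ket{0^a}$ corner, and since the target function $e^{\pm i\arccos(\lambda)}$ is unimodular while $P$ is a polynomial with $|P|\le 1$, one necessarily has $|P(\lambda)|<1$ on the spectrum; the missing weight $\sqrt{1-|P(\lambda)|^2}$ sits on states orthogonal to $\ket{0^a}$ and entangles the ancilla register with the system. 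No interleaving of reflections about $\ket{0^a}$ (which act trivially on the orthogonal complement up to sign) can restore that amplitude --- this is exactly why the paper resorts to OAA on a block encoding of a unitary (via \cite[Theorem 28]{gilyen2018QSingValTransf}, implemented as a degree-$7$ Chebyshev amplification of the amplitude $\sin(\pi/14)$). The modular-QSP ``phase correction'' you cite solves a different problem (aligning unknown rotation axes of twisted oracles), and the paper's connection to it is an application of the finished theorem, not an ingredient of its proof. To repair your argument you would need to either (i) quantify the leakage --- approximating $e^{i\arccos}$ to error $\eps^2$ so the leakage amplitude is $\calO(\eps)$ --- and still invoke (robust) OAA to deterministically restore the $\ket{0^a}$ register, or (ii) adopt the paper's strategy of first block-encoding a sub-normalized copy of the unitary $U_\herm$ and amplifying. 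As a secondary point, your Chebyshev truncation of $\arccos(x)/\pi$ on $[-1+\delta,1-\delta]$ must also be certified to stay bounded by $1$ on all of $[-1,1]$ before QSVT applies; the paper sidesteps this by instead approximating $\frac12\sqrt{x}$ on $[\delta,1]$, for which such bounded polynomials are given explicitly in \Cref{thm:qsvt_err}.
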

\noindent Although \Cref{thm:main_uncompute} is only proven for block encodings of Hermitian matrices $\herm$, in \Cref{sec:non_hermitian} we describe how the procedure can be generalized to non-Hermitian sub-normalized matrices $\general$, resulting in the following corollary.
\begin{corollary}[Ancilla Uncomputation for General Matrices] \label{thm:general_corollary}
     Let $V_{\general}$ be a $(1, a, 0)$-block-encoding of general matrix $\general$, where $\norm{\general} \leq 1-\delta$, for some $\delta \in (0,1)$. For any $\eps \in (0,1)$, there exists a quantum circuit, requiring a small constant number $c=\calO(1)$ of additional ancillae and $\calO\left(\frac{1}{\delta}\log(\frac{1}{\epsilon})\right)$-queries to $V_\general$ and $V_\general^\dagger$, that implements a $(1,1,\epsilon)$-block encoding of $\general$ and returns all remaining ancilla to the $\ket{0}$ state.
\end{corollary}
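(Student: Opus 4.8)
The plan is to reduce to the Hermitian case treated in \Cref{thm:main_uncompute} via the standard Hermitian dilation. Starting from the $(1,a,0)$-block-encoding $V_\general$, form
\begin{equation*}
  \widetilde{\herm} \;=\; \begin{pmatrix} 0 & \general^\dagger \\ \general & 0 \end{pmatrix},
\end{equation*}
which is Hermitian and, being block-anti-diagonal, has spectrum $\{\pm\sigma_i\}$ where the $\sigma_i$ are the singular values of $\general$; in particular $\norm{\widetilde{\herm}}=\norm{\general}\le 1-\delta$. A $(1,a,0)$-block-encoding $W$ of $\widetilde{\herm}$ is obtained from $V_\general$ and $V_\general^\dagger$ at the cost of one extra ``dilation'' qubit $b$ and $\calO(1)$ gate overhead: e.g.\ $W=(X_b\otimes I)\big(\ket{0}\!\bra{0}_b\otimes V_\general+\ket{1}\!\bra{1}_b\otimes V_\general^\dagger\big)$, since conditioning $b=0$ on input and reading $b=1$ on output extracts $\general$ from $W$ (and $b=1\mapsto b=0$ extracts $\general^\dagger$).

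Now apply \Cref{thm:main_uncompute} to $W$. Since $\norm{\widetilde{\herm}}\le 1-\delta$, it returns a $(1,1,\epsilon)$-block-encoding $\widetilde W$ of $\widetilde{\herm}$ using $\calO(1)$ helper ancillae (all restored to $\ket{0}$) and $\calO(\frac1\delta\log\frac1\epsilon)$ queries to $W^{\pm 1}$, hence $\calO(\frac1\delta\log\frac1\epsilon)$ queries to $V_\general^{\pm 1}$. Reading off $\general=(\bra{1}_b\otimes I)\,\widetilde{\herm}\,(\ket{0}_b\otimes I)$, the circuit $X_b\widetilde W$ is a block-encoding of $\general$ whose flag register is the pair (surviving ancilla $q$, dilation qubit $b$). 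The remaining point is to collapse this two-qubit flag to the single ancilla claimed. This uses that the procedure of \Cref{thm:main_uncompute} produces a \emph{qubitization-type} single-ancilla encoding: on each eigenvector $\ket\xi$ of $\widetilde{\herm}$ with eigenvalue $\mu$, $\widetilde W$ acts on $\spant\{\ket{0}_q\ket\xi,\ket{1}_q\ket\xi\}$ as the rotation $\left(\begin{smallmatrix}\mu & i\sqrt{1-\mu^2}\\ i\sqrt{1-\mu^2} & \mu\end{smallmatrix}\right)$. Propagating a right-singular vector $\ket{v_i}$ through $X_b\widetilde W$ and using that $\widetilde{\herm}$ is block-anti-diagonal in $b$, one checks that $X_b\widetilde W\ket{0}_q\ket{0}_b\ket{v_i}\approx\sigma_i\ket{0}_q\ket{0}_b\ket{w_i}+i\sqrt{1-\sigma_i^2}\,\ket{1}_q\ket{1}_b\ket{v_i}$ and, more generally, that $X_b\widetilde W$ preserves the subspace $\spant\{\ket{0}_q\ket{0}_b,\ket{1}_q\ket{1}_b\}\otimes(\text{system})$. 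On this subspace the two flag qubits are perfectly correlated, so they may be identified as a single qubit, yielding a genuine $(1,1,\epsilon)$-block-encoding of $\general$.

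Equivalently — and this is probably the cleaner route to write up — one re-runs the proof of \Cref{thm:main_uncompute} with the eigendecomposition of $\herm$ replaced by the singular value decomposition $\general=\sum_i\sigma_i\ket{w_i}\!\bra{v_i}$: the invariant $2$-dimensional ``qubitization'' subspaces $\spant\{\ket{0^a}\ket\lambda,\ket{\perp_\lambda}\}$ are replaced by \emph{pairs} of such subspaces, a ``right'' one $\spant\{\ket{0^a}\ket{v_i},\ket{\perp^R_i}\}$ that $V_\general$ maps isometrically onto a ``left'' one $\spant\{\ket{0^a}\ket{w_i},\ket{\perp^L_i}\}$; since $\sigma_i\in[0,1-\delta]$, every ingredient (the polynomial approximation, the error analysis, the $\calO(\frac1\delta\log\frac1\epsilon)$ query count, the $\calO(1)$ scratch ancillae) carries over verbatim, and only one surviving ancilla ever appears, so no collapsing step is needed. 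I expect the only genuine obstacle to be exactly this bookkeeping — confirming that the general-matrix procedure spends a single surviving ancilla rather than two; the norm identity $\norm{\widetilde{\herm}}=\norm{\general}$ and the propagation of the $\epsilon$ error through the fixed dilation unitaries are routine.
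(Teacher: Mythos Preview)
Your proposal is correct, and your ``cleaner route'' is essentially what the paper does: it writes down the explicit Hermitian unitary target
\[
U_\general=\begin{pmatrix}\sqrt{I-\general^\dagger\general} & \general^\dagger\\ \general & -\sqrt{I-\general\general^\dagger}\end{pmatrix},
\]
assembles its four blocks from $V_\general,V_\general^\dagger$ via LCU and QSVT (square roots of $I-\general^\dagger\general$ and $I-\general\general^\dagger$), and then runs OAA --- exactly the Hermitian recipe with the SVD in place of the eigendecomposition.

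Your Route 1 (Hermitian dilation $\widetilde\herm$, black-box \Cref{thm:main_uncompute}, then collapse the two flags) is a genuinely different packaging that the paper does not use. The collapsing step is valid: with the paper's actual output form $U_{\widetilde\herm}=Z_q\otimes\widetilde\herm+X_q\otimes\sqrt{I-\widetilde\herm^2}$ (a reflection, not the rotation $\left(\begin{smallmatrix}\mu & i\sqrt{1-\mu^2}\\ i\sqrt{1-\mu^2}&\mu\end{smallmatrix}\right)$ you wrote, though this does not affect the argument), the operator $X_bU_{\widetilde\herm}$ preserves $\spant\{\ket{00}_{qb},\ket{11}_{qb}\}$ because $\widetilde\herm$ is block-anti-diagonal in $b$ while $\sqrt{I-\widetilde\herm^2}$ is block-diagonal; a CNOT$_{q\to b}$ then resets $b$. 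What Route 1 buys is modularity --- you invoke \Cref{thm:main_uncompute} rather than re-deriving it --- at the price that the collapsing step requires the \emph{specific} qubitization form of its output, so the reduction is not truly black-box. This is presumably why the paper (and you, in Route 2) simply re-run the construction against the explicit $U_\general$.
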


To motivate the proof approach, let us first consider standard ways in which one might try to achieve a single-ancilla block-encoding of an arbitrary sub-normalized matrix $\general$. The simplest procedure would be to simply measure $a-1$ of $V_\general$'s $a$ ancilla and post-select for the $\ket{0^{a-1}}$ outcome. The key downside is that this procedure is \emph{incoherent}, i.e. multiple circuit runs and measurements will be required to achieve the  $\ket{0^{a-1}}$ outcome.  Alternatively, one could try to avoid post-selection by using amplitude amplification to boost the amplitude of the $\ket{0^{a-1}}$ ancilla state. However, such an oblivious amplitude amplification procedure only works for block encodings of \emph{unitary} matrices, whereas $\general$ is an arbitrary matrix. Thus, it is not immediately clear how to coherently uncompute the $a-1$ ancilla.

\begin{algorithm}[t] \small
    \caption{\small Approximate Hermitian Block Encoding with One Ancilla} \label{alg:uncompute}

    \vspace{0.05in}
    \textbf{Inputs:} $(1,a,0)$-block encoding $V_\herm$ of Hermitian matrix $\herm$ ($\|H\|\leq 1-\delta$) and error $\epsilon \in (0,1)$ 

    \vspace{0.05in}
    \textbf{Output:} $(1,1,\epsilon)$-block encoding $\widetilde{U}_\herm$ of $\herm$ 

    \vspace{0.05in}
    \textbf{Algorithm Pseudocode:}
    \begin{enumerate}
        \item Define the subroutine that computes unitary $V_{\frac{I-\herm^2}{2}}$, which is a $(1,a+1,0)$-block-encoding of the matrix $\frac{I-\herm^2}{2}$. This is implemented via linear combination of unitaries (LCU) of $V_{\herm}$.
        \item Define the subroutine that computes unitary $V_{\frac{\sqrt{I-\herm^2}}{\sqrt{8}}}$, which is a $(1,a+\calO(1),\frac{\eps}{9})$-block encoding of $\frac{\sqrt{I-\herm^2}}{\sqrt{8}}$. This is implemented by applying quantum singular value transformation (QSVT) to $V_{\frac{I-\herm^2}{2}}$, with polynomial transform function $f(x) := \frac{1}{2} \sqrt{x}$. Note that the QSVT requires $\calO(1)$ ancillae and $\calO(\frac{1}{\delta}\log(\frac{1}{\epsilon}))$ queries to $V_\herm$ and $V_\herm^\dagger$ to achieve the desired precision.
        \item Use LCU and previous subroutines to implement $W_{U_\herm} = \sin(\pi/14) \left( \sqrt{8} \cdot X \otimes V_{\frac{\sqrt{I-\herm^2}}{\sqrt{8}}}  +  Z \otimes V_{\herm}  \right)$,
        which is a $(1,a+\calO(1),\frac{\eps}{14})$-block-encoding of $\sin(\pi/14) \cdot U_{\herm}$.
        \item Perform oblivious amplitude amplification (OAA) to map $W_{U_\herm}$ to unitary $\widetilde{W}_{U_\herm}$, which is an $(1,a+c, \epsilon)$-block encoding of $U_H$, for $c=\calO(1)$.
        \item Measure $\widetilde{W}_{U_\herm}$ and post-select for the $0^{a+c-1}$ outcome to obtain unitary $\widetilde{U}_\herm = \bra{0^{a+c-1}} \widetilde{W}_{U_\herm} \ket{0^{a+c-1}}$,
        which is a $(1,1,\epsilon)$-block encoding of $\herm$.
    \end{enumerate} 
\end{algorithm}

The proof of \Cref{thm:main_uncompute} fundamentally leverages the key insight that oblivious amplitude amplification holds for any block encoding of a \emph{unitary} matrix. Thus, in our proposed protocol (\Cref{alg:uncompute}), we offer a quantum algorithm---involving strategic application of linear combination of unitaries \cite{childs2012hamiltonian} and the quantum singular value transformation \cite{gilyen2018QSingValTransf}---that implements $\widetilde{W}_{U_\herm}$, an $\eps$-precise block encoding  of the \emph{unitary} matrix
\begin{align} \label{eqn:u_herm}
    U_{\herm} = Z \otimes \herm + X \otimes \sqrt{I - \herm^2}= \begin{pmatrix}
    \herm & \sqrt{I - \herm^2} \\
    \sqrt{I - \herm^2} & -\herm
\end{pmatrix},
\end{align}
which is itself a Hermitian single-ancilla block-encoding of $\herm$.  Since $U_{\herm}$ is unitary, oblivious amplitude amplification can be used to boost the probability of $W_{U_\herm}$'s corresponding ancilla measurement outcome to $\Omega(1)$. Thus, the all of $W_{U_\herm}$'s ancilla are reset to the $\ket{0}$ state and the remaining subsystem is $\epsilon$-close to $U_\herm$. Guarantees for the algorithm are proved in \Cref{sec:part_i}.

In order to generalize this procedure for Hermitian matrices (\Cref{thm:main_uncompute}) to one for general matrices (\Cref{thm:general_corollary}) the critical change is in the form of the unitary block encoding. Namely, for general sub-normalized matrix $\general$, the protocol will implement an $\eps$-precise block encoding of the \emph{unitary} matrix
\begin{align}
    U_\general = \begin{pmatrix}
    \sqrt{I - \general^\dagger \general} & \general^\dagger \\
    \general & -\sqrt{I - \general \general^\dagger}
    \end{pmatrix},
\end{align}
which is itself a single-qubit Hermitian unitary block-encoding of $\general$\footnote{Note that the block-encoding is of the form $\general = \bra{1}U_\general \ket{0}$.}. Since $U_\general$ is Hermitian and unitary like $U_\herm$, the remainder of the protocol for the Hermitian case can be lifted to this setting.

Overall, this result is surprising because it implies that---for any matrix $\general$ (with $\| \general \| \leq 1-\delta$), $\calO(t)$ queries to $\general$ and $\general^\dagger$, and $\calO(1)$ intermediate computation ancilla---there is a simple quantum circuit implementing an $\calO(e^{-\delta t})$-precise single-ancilla block encoding of $\general$.

\subsection{Connections to Modular Quantum Signal Processing}
We forsee the main application of this ancilla uncomputation procedure for algorithms such as the QSVT, enabling preated reuse of ancilla qubits in preparation of multiple block encodings. However, we conclude by describing a less immediate, yet interesting connection between our ancilla uncomputation procedure and the problem of ``phase correction'' in modular quantum signal processing. 

Quantum signal processing (QSP) \cite{low2016HamSimQSignProc} is an important framework for implementing polynomial transformations of unitary operator eigenvalues, via sequences of controlled single-qubit rotations. 
Formally, for the scalar $x\in[-1,1]$ with block encoding
\begin{align}
    U(x) = \begin{pmatrix}
        x & \sqrt{1-x^2} \\
        \sqrt{1-x^2} & -x
    \end{pmatrix},
\end{align}
rotation matrix $O(x)=U(x) \sigma_z$, and a specially chosen set of phase factors $\Phi := (\phi_0,\cdots, \phi_d)\in\mathbb{R}^{d+1}$, a QSP protocol implementing polynomial transform $P(x)$ is:
\begin{align}
    U_\Phi(x) = e^{i \phi_0 \sigma_z} \prod_{j=1}^d O(x)e^{i\phi_j \sigma_z} = \begin{pmatrix}
        P(x) & -Q(x)\sqrt{1-x^2} \\
        Q^*(x)\sqrt{1-x^2} & -P^*(x)
    \end{pmatrix}.
\end{align}
QSP is a special instance of the more general multivariable quantum signal processing (M-QSP) framework, introduced by \cite{rossi2022multivariableQSP}. For $t\in \mathbb{N}$, a $t$-variable M-QSP protocol specified by phase factors $\Phi := (\phi_0,\cdots, \phi_n\}\in\mathbb{R}^{n+1}$ and string $s \in [t]^{\times n}$ implements the unitary circuit
\begin{align}
    \Phi[U_0, \cdots, U_{t-1}] \equiv e^{i \phi_0 \sigma_z} \prod_{j=1}^n U_{s_j}e^{i\phi_j Z},
\end{align}
where $U_0,\cdots, U_{t-1}$ are unitaries of the form $U_i \equiv e^{i\theta_i \sigma_x}$ for unrelated $\theta_i$. In this more general M-QSP framework, a single-variable QSP protocol can be expressed as $\Phi[U]=e^{i \phi_0 Z} \prod_{j=1}^d U e^{i\phi_j Z}$.

A natural question is whether polynomials can be \emph{composed} within the QSP and M-QSP frameworks, enabling modular quantum signal processing. To this end, \cite{rossi2023semanticEmbedding} showed that anti-symmetric\footnote{Anti-symmetric QSP protocols have phase factors of the form: $\Phi = (\phi_0,\phi_1,\cdots, \phi_d,-\phi_d,\cdots, -\phi_1,-\phi_0)$ or $\Phi = (\phi_0,\phi_1,\cdots, \phi_d,0,-\phi_d,\cdots, -\phi_1,-\phi_0)$.} single-variable QSP protocols can be embedded within themselves. Concretely, for anti-symmetric QSP protocols $\{\Phi_0,\cdots,\Phi_n\}$ implementing polynomials $\{P_0,P_1,\cdots,P_n\}$, the composition of polynomials $P_n \circ \cdots \circ P_1 \circ P_0 (x)$ is implemented via the QSP protocol composing $\Phi_{k+1}[\Phi_k]$ recursively. This recursion is possible because anti-symmetric QSP protocols are invariant under \emph{twisting}---i.e. for any unitary $U$ and rotation $e^{i\phi \sigma_z}$, $\Phi[e^{i\phi \sigma_z} U e^{-i\phi \sigma_z}]= e^{i\phi \sigma_z} \Phi[U] e^{-i\phi \sigma_z}$. Thus, accumulating $\sigma_z$ conjugation terms have no affect on the implemented polynomial:
\begin{align}
    P(x) = \bra{0} \Phi[U] \ket{0} = \bra{0} e^{i\phi \sigma_z}\Phi[U] e^{-i\phi \sigma_z}\ket{0}=\bra{0} \Phi[e^{i\phi \sigma_z}U e^{-i\phi \sigma_z}]\ket{0}.
\end{align}

In the multi-variable setting, however, each anti-symmetric protocol $\Phi_i$ is conjugated by a distinct $e^{i\phi_i \sigma_z}$ term, meaning it is not possible to simultaneously factor all phase terms out of the overall M-QSP protocol. To address this, \cite{rossi2023modularQSP} introduce a technique for suppressing disagreeing $\sigma_z$-conjugations, thereby preventing the conjugations from interfering with the outer M-QSP protocol. Specifically, they offer a procedure \cite[Lemma II.1]{rossi2023modularQSP} for mapping so-called ``twisted embeddable'' unitaries, matrices of the form 
\begin{align}
    \widetilde{U}(\theta,\varphi) = e^{i \varphi \sigma_z/2}e^{i \theta \sigma_x}e^{-i \varphi \sigma_z/2} = \begin{pmatrix}
        \cos(\theta) & i \sin(\theta) e^{i \varphi} \\
        i \sin(\theta) e^{-i \varphi} & \cos(\theta)
    \end{pmatrix},
\end{align}
with unknown $\theta$ and $\varphi$, to their embeddable component 
\begin{align}
    e^{i \theta \sigma_x}= \cos(\theta) \cdot I + i \sin(\theta) \cdot \sigma_x=\begin{pmatrix}
        \cos(\theta) & i \sin(\theta) \\
        i \sin(\theta) & \cos(\theta)
    \end{pmatrix}.
\end{align}
Thus, they align the axes of rotation for two or more unitary oracles in an efficient, black-box way.

We will now show how our ancilla uncomputation procedure can also straightforwardly be used to  correct these twisted embeddable unitaries. Namely, in \Cref{thm:main_thm_appendix}, we offer a procedure that, with black-box queries to block encoding $V_\herm$, implements the matrix 
\begin{align}
    U_{\herm} = Z \otimes \herm + X \otimes \sqrt{I - \herm^2} = \begin{pmatrix}
            \herm & \sqrt{I - \herm^2} \\
            \sqrt{I - \herm^2} & -\herm
        \end{pmatrix}.
\end{align}
 Since the matrix $\widetilde{U}(\theta,\varphi)$ is itself a block encoding of $\herm=\cos(\theta)$, by setting $V_\herm=\widetilde{U}(\theta,\varphi)$, the procedure implements the matrix
\begin{align}
    U_{\cos\theta} = \cos \theta \cdot Z + \sin \theta \cdot X = \begin{pmatrix}
        \cos\theta & \sin\theta \\
        \sin\theta & -\cos\theta
    \end{pmatrix}.
\end{align}
Conjugation of this matrix with the $S=\sqrt{Z}$ gate  achieves the desired embeddable component:
\begin{align}
    S \cdot U_{\cos\theta} \cdot S = \cos \theta \cdot S Z S + \sin \theta \cdot S X S = \cos \theta \cdot I + \sin \theta \cdot iX= \begin{pmatrix}
        \cos(\theta) & i\sin(\theta) \\
        i\sin(\theta) & \cos(\theta)
    \end{pmatrix} = e^{i \theta \sigma_x}.
\end{align}

\section{Part II: Approximate Multiplication of Block Encodings} \label{sec:intro_part2}
In the second half of the work, we move away from implementation of block encodings to their manipulation. Specifically, we focus on procedures for \emph{multiplication} of a large number of block-encoded matrices which, as described in the introduction, plays an important role in quantum simulation algorithms \cite{berry2015simulating, low2019Hamiltonian} and quantum differential equation solvers \cite{fang2023time}.

Given a sequence of block encoding unitaries $\{U_{A_i}\}_{i=1}^K$, multiplication of their block-encoded matrices, i.e. $A_K \cdots A_2 A_1 \ket{\psi}$, is implemented \emph{incoherently} via the circuit depicted in \Cref{fig:inter_meas_setup}. In this implementation, $U_{A_{i+1}}$ is applied only if the ancilla measurement outcome following $U_{A_i}$ is $0^a$ (indicating successful application of $A_i$). For any non-$0^a$ measurement outcome ($\perp$), we discard the entire computation and start again. This incoherent implementation requires intermediate measurements and interfacing the quantum system with classical controls which, as discussed in the introduction, is challenging to effectively implement in practice. Therefore, it is natural to consider \emph{coherent} implementations of multiplication of block encodings, via deferred measurement.

Crucially, in a coherent deferred measurement scheme, we still must identify whether all $K$ measurements measured to the desired $0^a$ outcome or if any measured to an undesired perpendicular state $\perp$. A naive coherent implementation would apply a C$_{\Pi_\perp}$NOT\footnote{Here $C_{\Pi_\perp}$NOT denotes an $X$ gate conditioned on the $\Pi_\perp=\ketbra{\perp}{\perp}=I-\ketbra{0^a}{0^a}$ state. That is, the $X$ gate is applied for any non-$\ket{0^a}$ state.} from every qubit in the ancilla register to a target ancilla measurement qubit, for each of the first $K-1$ unitaries, as depicted in \Cref{fig:def_meas_naive}. In total, this requires $K-1$ ``measurement ancilla''. 

\begin{figure}[t!]
    \centering
    \includegraphics[width=.9\textwidth]{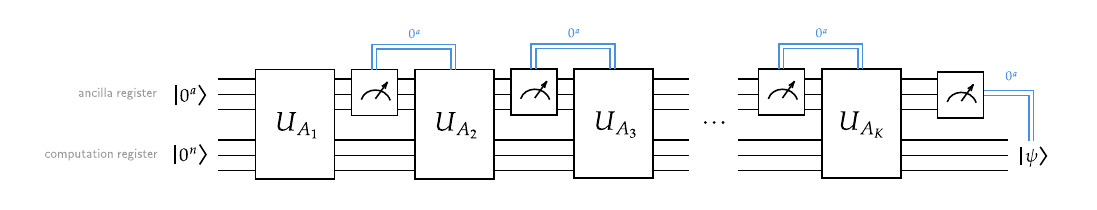}
    \caption{Multiplication of $K$ block encodings implemented with repeated incoherent measurements. Each incoherent measurement is only accepted if the outcome is $0^a$. If a measurement is rejected the entire computation must be discarded and restarted.}
    \label{fig:inter_meas_setup}
    \vspace{0.05in}
    \includegraphics[width=.85\textwidth]{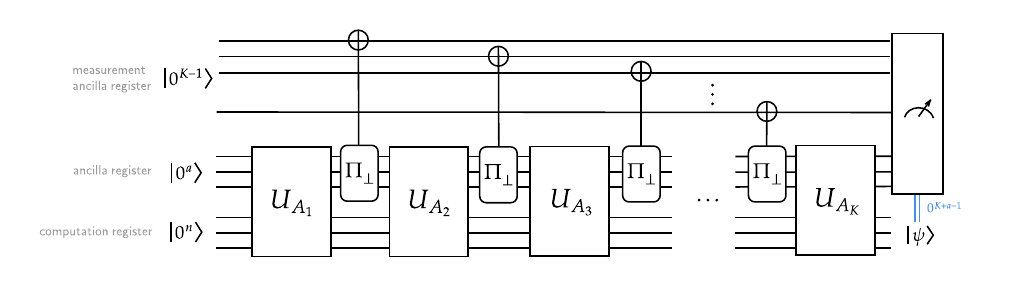}
    \caption{A naive implementation of multiplication of $K$ block encodings consisting only of coherent measurements. We refer to the implementation as ``naive'' since it requires $K-1$ ancilla qubits, which can be substantially reduced.}
    \label{fig:def_meas_naive}
\end{figure}

However, \cite{low2019Hamiltonian} showed that, via a more sophisticated procedure (replacing the C$_{\Pi_\perp}$NOT gates with C$_{\Pi_\perp}$ADD gates),  known as the ``compression gadget'', coherent multiplication of block encodings is implementable with only $m=\lceil \log_2 K\rceil$  measurement ancilla.\footnote{Note that the actual compression gadget of \cite{low2019Hamiltonian} required $\lceil \log_2 (K+1)\rceil$ ancillae, since they did not leverage the fact that the measurement of the ancilla register can be used to determine success of the $K^\text{th}$ block encoding.} (The \cite{low2019Hamiltonian} compression gadget is a special instance of the circuit depicted in \Cref{fig:add_gadget}, with $p=\lceil \log_2 K\rceil$.) Since this is the most ancilla-efficient proposal for coherent multiplication of block encodings known to-date, the key question we explore in Part II is:
\begin{quote} 
   \textit{Can multiplication of $K$ block encodings be achieved with less than $\lceil \log_2 K\rceil$ ancillae?}
\end{quote}
To this end, our work makes several contributions:
\begin{enumerate}
    \item We formalize the \circclass~circuit class for coherent multiplication of block encodings and we offer a simple universal implementation for all such circuits. 
    \item We prove a $\lceil \log_2 K \rceil$ measurement ancillae lower-bound for \circclass~circuits implementing \emph{exact} multiplication of $K$ block encodings. Thus, the \cite{low2019Hamiltonian}  compression gadget (\Cref{fig:add_gadget}) is optimal for exact multiplication.  
    \item We propose $\epsilon$-\emph{approximate} compression gadgets, which implement $\epsilon$-\emph{approximate} multiplication of block encodings, i.e. an $\epsilon$-precise block encoding of the desired multiplication sequence. 
    \item We show how oblivious amplitude amplification can be used to boost the overlap between an $\epsilon$-approximate multiplication circuit output and the desired output to $\Omega(1-\epsilon^2)$.
    \item We propose the $p$-Modular Addition Compression Gadget ($p$-MACG)---a generalization (parameterized by $p$) of the \cite{low2019Hamiltonian} compression gadget,  where addition is performed mod-$2^p$ instead of mod-$2^{\lceil \log_2 K \rceil}$ (\Cref{fig:add_gadget}). 
    \item For any sequence of $K$ block encodings with bounded distance from identity, we prove that the $p$-MACG with $p=\calO(1)$ measurement ancillae is an $\calO(1/K^{2^p})$-compression gadget. Thus, with just one ancilla, the $1$-MACG achieves $\calO(1/K^{2})$-approximate multiplication. This demonstrates that approximate multiplication can surpass the exact multiplication ancilla lower-bound, while still achieving high-precision.
    \item We describe how multiplication of block encodings with bounded distance from the identity arises naturally in applications such as Hamiltonian simulation and quantum differential equation solvers, motivating the practical relevance of the $p$-MACG. 
\end{enumerate}
We will now briefly outline each of these contributions, but for full details we refer the reader to \Cref{sec:part_ii}.

\begin{figure}[t!]
    \centering
    \includegraphics[width=.75\textwidth]{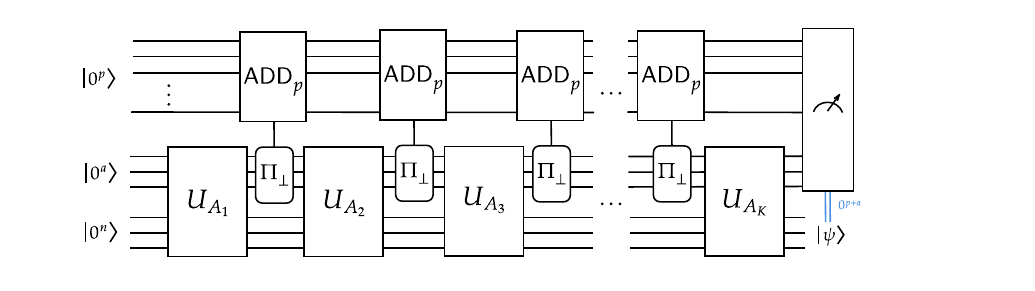}
    \caption{The proposed $p$-Modular Addition Compression Gadget, where $\addp \ket{x} = \ket{x+1 \bmod{2^p}}$.  The \cite{low2019Hamiltonian} compression gadget is recovered for $p=\lceil\log_2 K \rceil$.}
    \label{fig:add_gadget}
\end{figure}

\subsection{An Ancilla Lower-Bound for Exact Multiplication of Block Encodings}
The first main contribution of this work is an ancilla lower-bound for, what we refer to as, \emph{exact} multiplication of block encodings (EMBE). 
\begin{definition}[EMBE]
    A circuit achieves exact multiplication of block encodings (EMBE) for the sequence of block encodings $(U_{A_i})_{i=1}^K$ if it implements a unitary $U_\embe$ satisfying
    \begin{align} 
        \bra{0^{m+a}} U_\embe \ket{0^{m+a}} = A_K \cdots A_2 A_1
    \end{align}
\end{definition}

In order to prove the lower-bound, we begin by formalizing the class of circuits implementing multiplication of block encodings. Suppose that we have a sequence of $K$ block encodings $(U_{A_i})_{i=1}^K$ such that, with $n$ qubits in the computational register, and $a$ ancillae for the block encoding. We will assume that the sequence of all ``good'' measurements is governed by projectors $(\pzeroa)_{i=1}^K$
that we want to post-select for. As we saw in \Cref{sec:intro}, there are multiple possible circuits that coherently implement this multiplication of block encodings (especially consider a varying number of measurement ancillae $m$). We will, thus, refer to the class containing all such circuits as the Multiple Coherent Measurement (\circclass) circuit class. We further prove that the following is a simple universal circuit parameterization for \circclass~circuits.

\begin{definition}[\circclass$_{\vec{U}_A, \vec{\Pi}}$ Circuit Class] 
    Suppose we are given series of $K$ $(n+a)$-qubit block-encodings $\vec{U}_A = (U_{A_i})_{i=1}^K$. Without loss of generality, assume that all $U_{A_i}$ have $\pzeroa$ as the $a$-qubit projector onto ``good'' measurement outcome and $\pperp=I_a-\pzeroa$ as the $a$-qubit projector onto ``bad'' measurement outcome. With this, we define $\circclass_{\vec{U}_A, K,m}$ to be the class of all circuits of the form,
    \begin{align}
        \circclass_{\vec{U}_A, K,m} (\vec{V}, Q) = (Q\otimes U_K) \cdot \prod_{i=1}^{K-1} \left(  \left(V_{K-i}\otimes \pperp \otimes I_n \right) \left(I_{m+a} \otimes U_{K-i}\right)\right),
    \end{align}
    parameterized by the $m$-qubit unitaries $Q$ and $\vec{V} = (V_i)_{i=1}^{K-1}$. The circuit is depicted in \Cref{fig:lemma_con_uni}.
\end{definition}
\begin{figure}[t!]
    \centering
    \includegraphics[width=.8\textwidth]{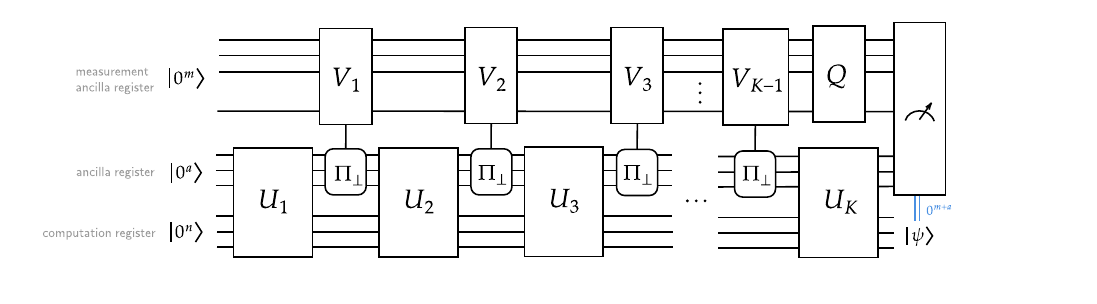}
    \caption{Universal coherent implementation of a $\circclass_{\vec{U}_A, K,m} (\vec{V}, Q)$ circuit.}
    \label{fig:lemma_con_uni}
\end{figure}
\noindent Leveraging the structure of the \circclass~circuit class and a Hilbert-space dimension counting argument, we prove the following measurement ancilla lower-bound for multiplication of block encodings.
\begin{theorem}[EMBE Ancilla Lower-Bound] 
    In order to coherently implement an exact multiplication of $K$ block encodings, at least $m=\lceil\log_2 K\rceil$ measurement ancillae are required.
\end{theorem}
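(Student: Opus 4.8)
The plan is to combine the universal parameterization of $\circclass$ circuits with a branch-by-branch expansion of the amplitude $\bra{0^{m+a}}U_\embe\ket{0^{m+a}}$, and then to exhibit a family of block encodings for which suppressing all the ``wrong'' branches already forces $2^m\geq K$ by a dimension count on the $m$-qubit measurement register. Concretely, I would fix an arbitrary $U_\embe=\circclass_{\vec{U}_A,K,m}(\vec{V},Q)$ and insert the resolution of identity $I=\pzeroa+\pperp$ on the $a$-register after each $U_j$, $1\leq j\leq K-1$, using that each controlled gate acts as $V_j$ on the $\pperp$-part, trivially on the $\pzeroa$-part, and that $Q$ touches only the measurement register. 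Indexing branches by the set $S\subseteq\{1,\ldots,K-1\}$ of steps where the $a$-register is projected onto $\pperp$, this yields
\begin{align*}
    \bra{0^{m+a}}U_\embe\ket{0^{m+a}} \;=\; \sum_{S\subseteq\{1,\ldots,K-1\}} \bra{0^m}Q\,V_S\,\ket{0^m}\; M_S,
\end{align*}
where $V_S:=V_{j_k}\cdots V_{j_1}$ for $S=\{j_1<\cdots<j_k\}$ (with $V_\emptyset:=I_m$), and $M_S:=\bra{0^a}U_K\,\Pi^{(\sigma_{K-1})}U_{K-1}\cdots\Pi^{(\sigma_1)}U_1\ket{0^a}$ with $\Pi^{(\sigma_j)}=\pperp$ for $j\in S$ and $=\pzeroa$ otherwise. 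The empty branch reproduces the target, $M_\emptyset=A_K\cdots A_1$, so the \embe\ condition says precisely that $\sum_S \bra{0^m}Q\,V_S\,\ket{0^m}\, M_S=M_\emptyset$.

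The second step is to choose block encodings making the operators $\{M_S\}$ linearly independent, so this identity forces $\bra{0^m}Q\,V_S\,\ket{0^m}=1$ for $S=\emptyset$ and $0$ otherwise. I would take $a=1$, $A_i=\alpha I$ on $\mathbb{C}^D$ with $0<\alpha<1$, and $U_{A_i}=\left(\begin{smallmatrix}\alpha I & -\beta P_i^{\dagger}\\ \beta P_i & \alpha I\end{smallmatrix}\right)$ with $\beta=\sqrt{1-\alpha^2}$, which is a valid $(1,1,0)$-block-encoding of $A_i=\alpha I$ whenever the $P_i\in U(D)$ are unitary. Expanding matrix elements, one gets $M_S=(-1)^{r(S)}\alpha^{K-2r(S)}\beta^{2r(S)}\,W_S$, where $r(S)$ is the number of maximal runs of consecutive integers in $S$ and $W_S$ is the reduced word in the $P_i^{\pm 1}$ obtained by placing $P_j$ at each ``$0\!\to\!1$'' step $j$ and $P_j^{\dagger}$ at each ``$1\!\to\!0$'' step $j$ of the branch string, read in decreasing order of $j$ (with $W_\emptyset=I$). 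Distinct $S$ produce distinct reduced words (one recovers the branch string, hence $S$, from $W_S$). Picking the $P_i$ to be the left-regular representation of the generators of a finite quotient of the free group $F_K$ large enough to keep the finitely many words $\{W_S\}$ pairwise distinct makes them, hence the $M_S$ (nonzero scalar multiples of the $W_S$), linearly independent, which gives the claim.

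Finally the dimension count. From $\bra{0^m}Q\ket{0^m}=1$ and unitarity of $Q$ we get $Q\ket{0^m}=\ket{0^m}$, hence $\bra{0^m}Q\,V_S\,\ket{0^m}=\bra{0^m}V_S\ket{0^m}$ for every $S$. Specializing to the consecutive-run sets $S=\{i+1,i+2,\ldots,j\}$ with $0\leq i<j\leq K-1$, and writing $W_k:=V_kV_{k-1}\cdots V_1$ (with $W_0:=I_m$), we have $V_S=W_jW_i^{\dagger}$, so
\begin{align*}
    0 \;=\; \bra{0^m}V_S\ket{0^m} \;=\; \bra{0^m}W_j W_i^{\dagger}\ket{0^m} \;=\; \langle \chi_j | \chi_i\rangle, \qquad \ket{\chi_k}:=W_k^{\dagger}\ket{0^m}.
\end{align*}
Hence $\ket{\chi_0},\ket{\chi_1},\ldots,\ket{\chi_{K-1}}$ are $K$ pairwise orthogonal unit vectors in the $2^m$-dimensional measurement space, so $2^m\geq K$, i.e.\ $m\geq\lceil\log_2 K\rceil$.

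I expect the main obstacle to be the middle step: one must guarantee that the off-block parts of the block encodings make the branch operators $M_S$ ``independent enough'' that no cancellation between branches can rescue a circuit with too few measurement ancillae. The free-group / residual-finiteness construction does this cleanly; alternatively one can observe that a linear dependence among the $\{M_S\}$ is a nontrivial polynomial constraint on the off-block unitaries, so a generic choice on a large enough computational register already works. The path decomposition and the orthogonality argument are then routine bookkeeping.
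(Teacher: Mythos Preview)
Your argument is correct and follows the same skeleton as the paper's proof---branch expansion of the $\circclass$ circuit, reduction to the vanishing of the coefficients $\bra{0^m}QV_S\ket{0^m}$ for nonempty $S$, then a dimension count on the measurement register---but you handle both of the two substantive steps differently.

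First, the paper simply asserts that EMBE forces $\bra{0^m}\prod_{i\in S}V_i\ket{0^m}=0$ for every nonempty $S$, treating this as the condition for the gadget to ``distinguish'' good from bad sequences; it never explicitly rules out cancellations among the branch operators $M_S$. You close this gap by exhibiting a concrete family of block encodings (via left-regular representations of a residually finite quotient of $F_K$) for which the $M_S$ are linearly independent, so the vanishing of the off-branch coefficients is genuinely forced. This makes the lower bound unconditional rather than implicitly relying on the gadget being oblivious to the choice of $U_{A_i}$.

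Second, the dimension count is organized differently. The paper builds an increasing chain of subspaces $\widetilde{\calH}_i=\spant\{V_S\ket{0^m}:S\subseteq[i],\,S\neq\emptyset\}$ and argues inductively that $\dim\widetilde{\calH}_{i+1}\geq\dim\widetilde{\calH}_i+1$, reaching $\dim\calH_{K-1}\geq K$. You instead restrict attention to the consecutive subsets $S=\{i+1,\ldots,j\}$, write $V_S=W_jW_i^\dagger$ with $W_k=V_k\cdots V_1$, and read off $K$ pairwise-orthogonal unit vectors $W_k^\dagger\ket{0^m}$ directly. Your route is shorter and avoids the somewhat informal bijection step in the paper's contradiction; the paper's route, on the other hand, does not need to single out any particular family of subsets.
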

\noindent Since the previously described \cite{low2019Hamiltonian} compression gadget saturates this bound, it is therefore proven to be  ancilla-optimal for EMBE.

\subsection{Approximate Multiplication of Block Encodings}
In an attempt to surpass the ancilla lower-bound for \emph{exact} multiplication of block encodings (EMBE), we propose the notion of \emph{approximate} multiplication of block encodings (AMBE).
\begin{definition}[$\eps$-AMBE]
    A circuit achieves $\epsilon$-approximate multiplication of block encodings ($\epsilon$-AMBE) for the sequence of block encodings $(U_{A_i})_{i=1}^K$ if it implements a unitary $U_\ambe$ satisfying
    \begin{align} 
        \|A_K \cdots A_2 A_1-\bra{0^{m+a}} U_\ambe \ket{0^{m+a}}\| \leq \epsilon.
    \end{align}
\end{definition}
\noindent In other words, a circuit implementing $\eps$-AMBE implements an $\eps$-precise block encoding of the desired multiplication matrix $A_K \cdots A_2 A_1$.

Since the logarithmic overhead of EMBE circuits is most problematic in the large-$K$ limit, we focus on reducing the ancilla overhead in near-asymptotic regimes. In order for multiplication of block-encodings to have high enough success probability to be useful in near-asymptotic regimes, the block encodings themselves must have high success probability. Therefore, we specifically focus on regimes in which each block encoding has $\calO(1/K)$-bounded deviation from the identity, i.e.
\begin{align} \label{eqn:constraint}
    \| U_{A_i} -  I\| \leq \eta_\text{max} = \calO(1/K).
\end{align}
Although this constraint might seem strong, in \Cref{sec:applications} we argue that it is well-motivated by practical quantum algorithms, such as Hamiltonian simulation and differential equation solvers.

In this regime, to surpass the EMBE ancilla lower-bound, we propose the $p$-Modular Addition Compression Gadget ($p$-MACG), as depicted in \Cref{fig:add_gadget}. This is a generalization of the \cite{low2019Hamiltonian} compression gadget which, instead of acting on $m=\lceil \log_2 K \rceil$ measurement ancilla, parametrizes the number of measurement ancilla $m$ by $p=\calO(1)$ and thus performs addition mod-$2^p$. Similar to the \cite{low2019Hamiltonian} compression gadget, the $p$-MACG always correctly identifies the good measurement sequence $\{0^a\}^K$. However, since $2^p<K$, there are certain bad measurement sequences with $2^p$ total bad measurements that the $p$-MACG misclassifies as the good measurement sequence. This incurs an error in the block encoding. However, we prove the following key result, bounding the total error as a function of $p$.
\begin{theorem}[$p$-MACG Error Bound]
    Let $(U_{A_i})_{i=1}^K$ be a sequence of $K$ block encodings, with maximal deviation coefficent $\eta_{\max}=\frac{c}{K}$ for some constant $c\in \mathbb{R}$.
    For any $p\in \mathbb{Z}$ such that $p=\calO(1)$, the  $p$-ACG implements an $\epsilon$-precise block encoding of the desired multiplication sequence $A_K \cdots A_2 A_1$, with inverse-polynomial precision in $K$,
    \begin{align}
        \epsilon = 2 e^c \cdot \left(\frac{e \cdot c^2 }{K\cdot 2^p}\right)^{2^p} = \calO\left(\frac{1}{K^{2^p}}\right).
    \end{align}
\end{theorem}

Therefore, in the regime of block encodings with $\calO(1/K)$-bounded operator norm distance from the identity, we prove that the $p$-MACG is an $\calO(K^{-2^p})$-AMBE. This implies that even a $1$-MACG, using only a single measurement ancilla, can achieve an $\calO(K^{-2})$-precise block encoding of $A_K \cdots A_2 A_1$. Thus, the EMBE ancilla lower-bound can be surpassed in the approximate multiplication setting.

\subsection{Practical Applications of the \texorpdfstring{$p$}{p}-MACG} \label{sec:applications}
As previously discussed, the guarantees of the $p$-MACG only hold in regimes where all $K$ block encodings have $\calO(1/K)$-bounded distance from the identity (\Cref{eqn:constraint}). We will now motivate this constraint by demonstrating how it naturally emerges in the settings of Hamiltonian simulation and quantum differential equation solvers.

In Hamiltonian simulation, the goal is to implement the evolution operator of some Hamiltonian $H$ at a specified time $t$, i.e. $U(t)=e^{-iHt}$. However, in practice, direct simulation of the full evolution is often intractable and approximations are necessary. A common approach is Trotterization, in which the Hamiltonian is decomposed into a sum of simpler terms, i.e. $H=\sum_i H_i$, to which the Trotter approximation $e^{\delta (H_i+H_j)}\approx e^{\delta H_i}e^{\delta H_j}$ can be applied. Thus, the evolution operator can be expressed as a product of short time-scale  evolution terms $e^{-iH_i t/K}$, as
\begin{align}
    U(t)= e^{-iHt} = \left(e^{-i(\sum_i H_i)t/K}\right)^K \approx \left(\prod_i e^{-iH_i t/K}\right)^K.
\end{align}
Expanding these small evolution terms via a Taylor series gives
\begin{align}
    e^{-iH_i t/K} = I - i \frac{t}{K} H_i - i \frac{t^2}{2K} H_i^2 + \cdots.
\end{align}
Under the standard assumption that $\|H_i\|\leq 1$, for a fixed time $t$ and sufficiently large $K$, the dominant deviation from the identity comes from the first term, i.e.
\begin{align}
    \|I - e^{-iH_i t/K}\| = \calO\left(1/K\right).
\end{align}
All that remains is to ensure the block encoding of $e^{-iH_i t/K}$ also has $\calO\left(1/K\right)$-bounded distance from identity. Theoretically, this can be straightfowardly achieved using the block encoding 
\begin{align}
    U_{e^{-iH_i t/K}}=\ketbra{0}{0}\otimes e^{-iH_i t/K} + \ketbra{1}{1}\otimes I,
\end{align}
which has the desired distance from the identity:
\begin{align}
    \|I - U_{e^{-iH_i t/K}}\| = \left\|\ketbra{0}{0}\otimes (I-e^{-iH_i t/K}) + \ketbra{1}{1}\otimes (I-I)\right\| = \|I - e^{-iH_i t/K}\| = \calO\left(1/K\right). 
\end{align}
While such a block encoding might be practically challenging to implement, it demonstrates that block encodings satisfying the $p$-MACG criterion exist for Hamiltonian simulation in large-$K$ regimes.

In the setting of quantum differential equation solvers, specifically for linear ordinary differential equations, the goal is to solve the initial value problem 
\begin{align}
    \frac{d}{dt} \ket{\psi(t)} = A(t) \ket{\psi(t)}, \quad \ket{\psi(0)}=\ket{\psi_0}
\end{align}
for some $A(t)$, a matrix-valued function of the time variable $t$, to recover the state $\ket{\psi(T)}$ at desired time $t=T$. This can be directly related to the problem of time-dependent Hamiltonian simulation by setting $A(t)=-iH(t)$ for some Hamiltonian $H(t)$. For appropriately bounded $A$, the unique solution to the time-dependent differential equation is given by the Dyson evolution operator
\begin{align}
    \Xi = \calD(t,t_0) &= \calT \exp\left(-i\int_{t_0}^t A(s) ds\right) = \sum_{k=0}^\infty (-i)^k \int_{t_0}^t d s_k \int_{t_0}^{s_k} d s_{k-1} \cdots \int_{t_0}^{s_2} d s_1 A(s_k) \cdots A(s_1)
\end{align}
where $\calT$ is the time-ordering operator. A common approximation approach is the time-marching strategy, as described in \cite{fang2023time}, in which the full time $T$ is divided into shorter time intervals for which the system is solved and then the information is propagated. In the case that these intervals are choosen uniformly, i.e. $\Delta t = t/K$, the evolution operator is decomposed as
\begin{align}
    \Xi = \calD(t,0) \approx \calD(K\Delta t,(K-1)\Delta t) \cdots\calD(2\Delta t,\Delta t) \cdot \calD(\Delta t,0) = \Xi_K \Xi_{K-1} \cdots \Xi_2 \Xi_1,
\end{align}
where $\Xi_j$ are themselves short time-scale Dyson evolution operators,
\begin{align}
    \Xi_j = \calD(t_j+\Delta t, t_j) = I - i \int_{t_j}^{t_j+\Delta t} d s_1 A(s_1) - \int_{t_j}^{t_j+\Delta t} d s_2 \int_{t_j}^{s_2} d s_1 A(s_2)A(s_1) + \cdots.
\end{align}
If $A(t)$ is uniformly bounded, i.e. $\|A(t)\| \leq \lambda$ for all time $t$, then the terms of $\Xi_j$ are bounded as
\begin{align}
    \left\| \int_{t_j}^{t_j+\Delta t} d s_1 A(s_1) \right\| \leq  \lambda \Delta t, \quad \left\| \int_{t_j}^{t_j+\Delta t} d s_2 \int_{t_j}^{s_2} d s_1 A(s_2)A(s_1) \right\| \leq  \frac{(\lambda \Delta t)^2}{2}, \quad \cdots.
\end{align}
Thus, for sufficiently large $K$ (sufficiently small $\Delta t = t/K$), the distance of $\Xi_j$ from the identity is bounded, via triangle inequality, as
\begin{align}
    \| I - \Xi_j\| \leq \sum_{k \geq 1} \frac{(\lambda \Delta t)^k}{k!} = e^{\lambda \Delta t}-1 \approx \lambda \Delta t = \calO(1/K).
\end{align}
Thus, as described in the previous example, we can strategically block encode $\Xi_j$ such that $U_{\Xi_j}$ also has $\calO(1/K)$-bounded distance from the identity, thereby satisfying the $p$-MACG criterion.

\subsection{Oblivious Amplitude Amplification for Approximate Multiplication}
As discussed in the introduction, a key benefit of coherent multiplication of block encodings is that amplitude amplification can be used to boost the probability of the good measurement sequence. Recall that amplitude amplification (AA) is a Grover-style procedure, in which reflections are repeatedly performed around the desired ``winner'' state $\ket{\psi_\text{good}} = A_K \cdots A_2 A_1 \ket{\psi}$ and the intial state $\ket{\psi_0}$. In oblivious amplitude amplification (OAA) the ancillary state $\ket{0^{m+a}}$ signals the subspace of the desired block encoding, meaning reflections about the winner state are oblivious to the form of the state itself. For more detailed descriptions of AA and OAA, we refer the reader to \Cref{sec:aa} and \Cref{sec:oaa}, respectively. 

In \Cref{sec:oaa_embe}, we describe how OAA can be implemented with queries to an EMBE circuit, so as to boost the probability of the good measurement sequence to $\Omega(1)$. In \Cref{sec:oaa_ambe}, we further prove the following lemma, demonstrating how OAA can similarly boost the probability of the good measurement sequence for an $\eps$-AMBE circuit. 
\begin{lemma}[OAA for $\eps$-AMBE]
    Via repeated queries to an $\eps$-approximate multiplication of block encodings circuit, oblivious amplitude amplification can be used to boost the fidelity between the circuit output and the desired state to $1-O(\eps^2)$.
\end{lemma}

\section{Acknowledgements}
FV would like to acknowledge Louis Golowich, Nathan Ju, Lin Lin, Zeph Landau, Fermi Ma, Jelani Nelson, and Ewin Tang for helpful technical discussions. FV would, in particular, also like to thank Zeph Landau for helpful feedback on the manuscript. AG is supported by the EU’s Horizon 2020 Marie
Sklodowska-Curie program QuantOrder-89188. FV is supported by the Paul and Daisy Soros Fellowship for New Americans, NSF Graduate Research Fellowship Grant \#DGE2146752, and NSF Grant \#2311733. This work was done in part while the authors visited the Simons Institute for the Theory of Computing, supported by DOE QSA Grant \#FP00010905.

\newpage
\bibliographystyle{alphaUrlePrint.bst}
\bibliography{qc_gily}

\newpage
\appendix

\section{Part I: Approximate Single-Ancilla Block Encodings} \label{sec:part_i}
In this section, we will show how to uncompute all but one of a block encoding's ancillae. The uncomputed ancillae can then be reused for other tasks (beyond implementing the block encoding) in a quantum algorithm. This establishes a ``space-time tradeoff'' in which we can free up space (ancilla) via computation (time). Alternatively, this demonstrates that for any matrix $\general$, there exists an $\epsilon$-precise block-encoding of $\general$ requiring only one ancilla qubit.

In \Cref{sec:hermitian_uncompute} we describe this uncomputation procedure, in detail, for single-ancilla block encodings of Hermitian matrices $\herm$. In \Cref{sec:non_hermitian} we describe how this procedure can be generalized to single-ancilla block encodings of general matrices $\general$. 

\subsection{Single-Ancilla Block Encodings of Hermitian Matrices} \label{sec:hermitian_uncompute}

Let $V_{\herm}$ be a $(1, a, 0)$-block-encoding of $n$-qubit Hermitian matrix $\herm$, where $\norm{\herm} \leq 1-\delta$ for any $\delta \in (0,1)$.
In this section, we will show that we can efficiently map $V_{\herm}$, which requires several ancilla qubits, into $U_{\herm}$, an $\calO(\eps)$-block encoding of $\herm$ requiring only a single ancilla, where $\epsilon \in (0,1)$. 
In particular, we will prove the following main theorem.
\begin{theorem} \label{thm:main_thm_appendix}
    Let $V_{\herm}$ be a $(1, a, 0)$-block-encoding of Hermitian matrix $\herm$, where $\norm{\herm} \leq 1-\delta$, for some $\delta \in (0,1)$. For any $\eps \in (0,1)$, there exists a quantum circuit, requiring $c=\calO(1)$ additional ancillae and $\calO\left(\frac{1}{\delta}\log(\frac{1}{\epsilon})\right)$-queries to $V_\herm$ and $V_\herm^\dagger$, that implements a $(1,1,\epsilon)$-block encoding of $\herm$.
\end{theorem}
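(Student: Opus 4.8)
The plan is to realize Algorithm~\ref{alg:uncompute} as a pipeline of three standard block-encoding manipulations --- linear combination of unitaries (LCU), quantum singular value transformation (QSVT), and oblivious amplitude amplification (OAA) --- and to track the query count and error through each stage. The target object is the Hermitian unitary $U_\herm = Z\otimes \herm + X\otimes\sqrt{I-\herm^2}$ of \eqref{eqn:u_herm}; because it is unitary, OAA applies to it, and because it is itself a single-ancilla block encoding of $\herm$, post-selecting on the remaining ancilla after OAA gives the claimed $(1,1,\epsilon)$-block encoding with all other ancillae returned to $\ket{0}$.

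First I would construct a $(1,a+\calO(1),0)$-block encoding of $\tfrac{I-\herm^2}{2}$ from $V_\herm$ by LCU: $\tfrac{I-\herm^2}{2} = \tfrac12 I - \tfrac12 \herm^2$, so one ``$V_\herm$ squared'' plus identity, combined with a one-qubit $1\!:\!1$ LCU prepare/unprepare pair, costs $2$ queries and is exact. Next I would apply QSVT with an odd polynomial $\tilde f$ that $\tfrac{\epsilon}{9}$-uniformly approximates $f(x)=\tfrac12\sqrt{x}$ on the interval $[\tfrac{\delta'}{2},\,\tfrac12]$, where $\delta' = 2\delta - \delta^2$ so that the singular values of $\tfrac{I-\herm^2}{2}$ lie in $[\tfrac{\delta'}{2},\tfrac12]$ (here I use $\norm{\herm}\le 1-\delta$). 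Standard polynomial-approximation bounds for $\sqrt{\cdot}$ bounded away from zero (e.g.\ the $\sqrt{}$ approximation used in QSVT-based matrix inversion) give degree $\calO(\tfrac1\delta\log\tfrac1\epsilon)$, hence $\calO(\tfrac1\delta\log\tfrac1\epsilon)$ queries to $V_\herm,V_\herm^\dagger$ and $\calO(1)$ extra ancillae, yielding a $(1,a+\calO(1),\tfrac{\epsilon}{9})$-block encoding of $\tfrac{\sqrt{I-\herm^2}}{\sqrt 8}$. Then a second LCU combines $\sqrt 8\cdot X\otimes V_{\sqrt{I-\herm^2}/\sqrt 8}$ with $Z\otimes V_\herm$ --- both of operator norm $\le 1$ after the $\sqrt 8$ rescaling --- with weights proportional to $(\sqrt 8,1)$ normalized by $\tfrac1{1+\sqrt 8}$; one checks $\sin(\pi/14) = \tfrac{1}{1+\sqrt8}$ up to the tiny discrepancy absorbed into the error budget, so this produces a $(1,a+\calO(1),\tfrac{\epsilon}{14})$-block encoding $W_{U_\herm}$ of $\sin(\pi/14)\,U_\herm$. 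The precise subnormalization $\sin(\pi/14)$ is chosen so that OAA with a fixed $\calO(1)$ number of rounds exactly (or $\epsilon$-nearly) amplifies $\sin(\pi/14)U_\herm$ to $U_\herm$: since $\sin(3\theta)=1$ at $\theta = \pi/6$ and we need to accommodate the block-encoding error, the relevant amplitude is $\sin(\pi/14)$ with a small number of reflection steps; I would invoke the standard OAA lemma (Lemma in \Cref{sec:oaa}) stating that OAA on an $\epsilon'$-approximate block encoding of $\sin(\theta)\cdot(\text{unitary})$ with $\theta$ of the amplifiable form yields an $\calO(\epsilon')$-approximate block encoding of the unitary. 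Finally, post-selecting the $a+c-1$ remaining ancillae on $0^{a+c-1}$ extracts $\widetilde U_\herm = \bra{0^{a+c-1}}\widetilde W_{U_\herm}\ket{0^{a+c-1}}$; because OAA has driven those ancillae (in the no-error case) back to $\ket{0}$ deterministically on the good subspace, this ``post-selection'' is in fact a norm-$\ge 1-\calO(\epsilon)$ projection, so no measurement is needed and the ancillae are freed, giving a genuine $(1,1,\epsilon)$-block encoding of $\herm$.

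The error bookkeeping is routine: the $\tfrac{\epsilon}{9}$ QSVT error propagates through the $\sqrt 8$-rescaled LCU (operator norm of the prefactor is $\le \tfrac{\sqrt 8}{1+\sqrt 8}<1$, so it does not amplify) to at most $\tfrac{\epsilon}{14}$ in $W_{U_\herm}$, OAA with $\calO(1)$ rounds multiplies errors by an $\calO(1)$ Lipschitz constant to keep them below $\epsilon$, and the final projection contributes another $\calO(\epsilon)$; choosing the constants $9,14$ (rather than, say, $2,3$) gives the slack needed to land at a clean $\epsilon$. The query count is dominated entirely by the QSVT step, $\calO(\tfrac1\delta\log\tfrac1\epsilon)$, and the ancilla count is $a+\calO(1)$ inside the circuit, collapsing to $1$ after the projection.

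The main obstacle I anticipate is step~4: verifying that a \emph{fixed}, $\calO(1)$-round oblivious amplitude amplification genuinely maps the subnormalized, \emph{approximate} block encoding $W_{U_\herm}$ to a true $(1,\cdot,\epsilon)$-block encoding of the \emph{unitary} $U_\herm$ --- OAA in its textbook form requires the inner object to be exactly a $\sin(\theta)$-scaled unitary with $\theta$ of the form $\pi/(2\ell)$ for the $\ell$-round amplifier, and here we have both a subnormalization mismatch (the LCU weights only approximate $\sin(\pi/14)$) and a $\tfrac\epsilon{14}$ operator-norm error on top. I would handle this by working with the exact $\sin(\pi/14)$ weights in the idealized analysis, bounding the perturbation to $U_\herm$ from the imperfect weights and from the QSVT error jointly in operator norm, and then invoking a robust version of the OAA lemma (perturbations of the amplified signal propagate linearly with an $\calO(1)$ constant determined by the number of rounds), exactly analogous to how robust OAA is used elsewhere in the block-encoding literature. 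A secondary, lesser obstacle is pinning down the polynomial-approximation degree for $\tfrac12\sqrt x$ on $[\tfrac{\delta'}{2},\tfrac12]$ with the stated $\tfrac1\delta$ (not $\tfrac1{\delta^2}$ or $\tfrac1{\sqrt\delta}$) dependence; this follows from the known near-optimal polynomial approximations of the square root bounded away from the origin, which I would cite rather than re-derive.
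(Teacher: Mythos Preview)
Your pipeline is exactly the paper's: LCU to get $\tfrac{I-\herm^2}{2}$, QSVT for $\tfrac12\sqrt{x}$ (citing the $x^c$ approximation from \cite{gilyen2018QSingValTransfThesis} for the $\calO(\tfrac1\delta\log\tfrac1\epsilon)$ degree), a second LCU to assemble a subnormalized block encoding of $U_\herm$, and then robust OAA. The error constants $9$ and $14$ and the query count match.

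The one genuine slip is your handling of the subnormalization constant. It is \emph{not} true that $\sin(\pi/14)=\tfrac{1}{1+\sqrt 8}$; numerically $\sin(\pi/14)\approx 0.2225$ while $\tfrac{1}{1+\sqrt 8}\approx 0.2612$, so this is not a ``tiny discrepancy'' absorbable into an $\epsilon$-budget. The paper's resolution is the clean one: since $\sin(\pi/14)<\tfrac{1}{1+\sqrt 8}$, you simply arrange the LCU preparation to hit subnormalization \emph{exactly} $\sin(\pi/14)$ (any block encoding can be further sub-normalized at will), and then OAA with precisely $n=7$ rounds amplifies it because $T_7(\sin(\pi/14))=T_7(\cos(3\pi/7))=\cos(3\pi)=-1$. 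The choice of $\pi/14$ has nothing to do with ``accommodating the block-encoding error'' as you suggest; it is dictated by picking an odd $n$ with $\sin(\pi/(2n))$ below the natural LCU normalization, and $n=7$ is the first that works. Your anticipated ``main obstacle'' --- robustness of OAA to the $\tfrac{\epsilon}{14}$ perturbation --- is handled exactly as you guessed, via the robust-OAA statement (\cite[Theorem~28]{gilyen2018QSingValTransf}), which turns a $\tfrac{\epsilon}{14}$-approximate block encoding of $\sin(\pi/14)U_\herm$ into a $2\cdot 7\cdot\tfrac{\epsilon}{14}=\epsilon$-approximate block encoding of $U_\herm$; the paper then appeals to \cite[Lemma~23]{gilyen2018QSingValTransf} to argue the remaining ancillae are genuinely $\calO(\epsilon)$-close to $\ket{0}$ rather than merely post-selected.
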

\begin{proof}[Proof of \Cref{thm:main_thm_appendix}]
We will propose a specific quantum circuit (parameterized by $\delta$ and $\epsilon$) that implements an $(n+a+c)$-qubit matrix $\widetilde{W}_\herm$ (for $c=\calO(1))$, which is a $(1, a+c-1, \epsilon)$-block encoding of the $(n+1)$-qubit matrix
\begin{align}
    U_{\herm} = Z \otimes \herm + X \otimes \sqrt{I - \herm^2}= \begin{pmatrix}
        \herm & \sqrt{I - \herm^2} \\
        \sqrt{I - \herm^2} & -\herm
    \end{pmatrix}.
\end{align}
We will now prove that $U_{\herm}$ is a $(1,1,0)$-block encoding of $\herm$. (To see this intuitively, consider the trivial case $\herm = \cos(x)$).
\begin{claim} \label{thm:hermitian}
Let $\herm$ be any Hermitian matrix with operator norm at most 1. Then, the matrix 
\[
U_{\herm} = Z \otimes \herm + X \otimes \sqrt{I - \herm^2}
\]
is a single-qubit Hermitian unitary block-encoding of $ \herm $, in the sense that $\herm= (\bra{0}\otimes I) U_\herm (\ket{0}\otimes I)$.
\end{claim}
\begin{proof}[Proof of \Cref{thm:hermitian}]
Note that $ Z $ and $ X $ are Pauli matrices, which are Hermitian, and $\sqrt{I - \herm^2}$ is also Hermitian. 
Since the tensor product and sum of Hermitian matrices is Hermitian, $U$ is Hermitian.

Next, consider the block structure of $ U $:
\begin{align}\label{eq:HermitianU}
    U_{\herm} = \begin{pmatrix}
        \herm & \sqrt{I - \herm^2} \\
        \sqrt{I - \herm^2} & -\herm
    \end{pmatrix}.
\end{align}
This matrix has $ \herm $ as the block component we are interested in and is thus a block-encoding of $ \herm $. Now, we verify that $ U_{\herm} $ is unitary by showing that
\begin{align}
U_{\herm}^\dagger U_{\herm} = U_{\herm}^2 = (Z \otimes \herm + X \otimes \sqrt{I - \herm^2})^2 = I.
\end{align}
In the last equality we used the properties of Pauli matrices, $ Z^2 = X^2 = I $, and $ ZX = -XZ $~to~get:
\begin{align}
U_{\herm}^2 = Z^2 \otimes \herm^2 + X^2 \otimes (I - \herm^2) + (ZX + XZ) \otimes (\herm \sqrt{I - \herm^2}) = I \otimes \herm^2 + I \otimes (I - \herm^2) = I. 
\end{align}
This concludes the proof of \Cref{thm:hermitian}.
\end{proof}
\noindent With this, we will now prove that the matrix block encoded by $\widetilde{W}_{\herm}$, which we will denote $\widetilde{U}_{\herm}$, is itself a $(1,1,\epsilon)$-block encoding of $\herm$.
\begin{claim} \label{thm:be}
    The matrix $\widetilde{U}_{\herm}=\bra{\bar{0}^{a+c-1}} \widetilde{W}_\herm\ket{\bar{0}^{a+c-1}}$ is a $(1,1,\epsilon)$-block encoding of $\herm$.
\end{claim}
\begin{proof}[Proof of \Cref{thm:be}]
    Let $\widetilde{U}_{\herm}$ denote the $n+1$ qubit matrix block-encoded by $\widetilde{W}_\herm$, i.e.
    \begin{align}
        \widetilde{U}_{\herm} = \left(\bra{0^{a+c-1}} \otimes I^{n+1}\right) \widetilde{W}_\herm \left(\ket{0^{a+c-1}} \otimes I^{n+1}\right).
    \end{align}
    Since $\widetilde{W}_\herm$ is an $\epsilon$-precise block encoding of $U_\herm$, i.e. $\| U_\herm - \widetilde{U}_{\herm} \| \leq \epsilon$, this implies that, i.e.
    \begin{align}
         \| \herm - \bra{0}\widetilde{U}_{\herm}\ket{0} \| =\| \bra{0}U_\herm\ket{0} - \bra{0}\widetilde{U}_{\herm}\ket{0} \|\leq \| U_\herm - \widetilde{U}_{\herm} \| \leq \eps.
    \end{align}
    Thus, $\widetilde{U}_{\herm}$ is a $(1,1,\eps)$-block encoding of $\herm$.
\end{proof}
\noindent Therefore, by computing $\widetilde{W}_\herm$ and post-selecting for the $a+c-1$-qubit measurement corresponding to block-encoding $\widetilde{U}_{\herm}$, we obtain a $(1,1,\eps)$-block encoding of $\herm$, as desired. 

The remainder of the proof will be dedicated to specifying a quantum algorithm
that computes $\widetilde{W}_\herm$. Specifically, we will formally prove each of the steps delineated by the high-level pseudocode of \Cref{alg:uncompute}.

\paragraph{Step 1.} We begin by leveraging linear combination of unitaries (LCU) to construct an explicit circuit mapping $V_\herm$ to $V_{\frac{I-\herm^2}{2}}$, which is a $(1,a+1,0)$-block encoding of the matrix $\frac{I-\herm^2}{2}$. This circuit requires two controlled-$V_\herm$ queries.
\begin{claim} \label{thm:lcu_1}
    The circuit 
    \begin{align} \label{eqn:i-h_be}
        V_{\frac{I-\herm^2}{2}} = (XH \otimes I_a)~C_{\Pi_1} V_{\herm}^2~(H \otimes I_a)
    \end{align}
    is a $(1,a+1,0)$-block-encoding of $\frac{I-\herm^2}{2}$.
\end{claim}
\begin{proof}[Proof of \Cref{thm:lcu_1}]
    Since $V_{\herm}$ is a $(1,a,0)$-block-encoding of $\herm$, by \cite[Lemma 54]{gilyen2018QSingValTransf}, $V_{\herm}^2$ is a $(1,a,0)$-block-encoding of $\herm^2$. Furthermore, the circuit defined in \Cref{eqn:i-h_be} is a $(1,1,0)$-block-encoding of $\frac{I-V_{\herm}^2}{2}$. Note that the unitary $\frac{I-V_{\herm}^2}{2}$ is a $(1,1,0)$-block encoding of $\frac{I-\herm^2}{2}$.
\end{proof}

\paragraph{Step 2.} With this, we can now leverage quantum singular value transformation (QSVT) to approximately apply a function to the singular values of $V_{\frac{I-\herm^2}{2}}$, implementing the matrix $V_{\frac{\sqrt{I - \herm^2}}{\sqrt{8}}}$, which is a $(1,a+\calO(1),\frac{\eps}{9})$-block-encoding of  $\frac{\sqrt{I-\herm^2}}{\sqrt{8}}$. 
\begin{claim} \label{thm:qsvt}
    The QSVT can be used to approximately apply the function $f(x) := \frac12 \sqrt{x}$ to $V_{\frac{I-\herm^2}{2}}$ so as to implement the unitary $V_{\frac{\sqrt{I - \herm^2}}{\sqrt{8}}}$, which is a $(1,a+\calO(1),\frac{\eps}{9})$-block-encoding of  $\frac{\sqrt{I-\herm^2}}{\sqrt{8}}$.  
\end{claim}
\begin{proof}[Proof of \Cref{thm:qsvt}]
    Since $\|\herm\| \leq 1-\delta$, we have that $I-\herm^2 \succeq 1-\delta$. Therefore, when applying functions to the singular values of $\frac{I-\herm^2}{2}$, we only need to consider the action of the function on the domain $[\delta, 1]$.

    The QSVT can be used to apply many polynomials to block-encoded matrices, with arbitrary precision. Specifically, in the case of functions of degree $c$ such that  $0< c\leq 1$, we can leverage the following result.
    \begin{fact}[\cite{gilyen2018QSingValTransfThesis} Corollary 3.4.14] \label{thm:qsvt_err}
    Let $\delta,\eta\in (0,\frac12]$, $c \in (0,1]$, and $f(x) := \frac12 x^c$. Then, there exist even/odd polynomials $P,P' \in \mathbb{R}[x]$ such that $\norm{P-f}_{[\delta,1]}\leq \eta$, $\norm{P}_{[-1,1]}\leq 1$ and $\norm{P'-f}_{[\delta,1]}\leq \eta$, $\norm{P'}_{[-1,1]}\leq 1$ \footnote{Note that for a polynomial $Q$, $\norm{Q}_{[a,b]}$ denotes the norm of $Q(x)$ in the interval $x\in [a,b]$.}. Moreover, the degree of the polynomials is $\bigO{\frac{1}{\delta}\log\left(\frac{1}{\eta}\right)}$.
    \end{fact}

    By \Cref{thm:qsvt_err}, there exists a degree-$\bigO{\frac{1}{\delta}\log\left(\frac{1}{\eps}\right)}$ polynomial $P^*$, such that $\norm{P^*}_{[-1,1]}\leq 1$ and, for $f(x) := \frac12 \sqrt{x}$,
    \begin{align} \label{eqn:poly_err}
        \norm{P^*-f}_{[\delta,1]}\leq \frac{\eps}{9}.
    \end{align}
    Via the QSVT, we can use $\calO(1)$ additional ancillae to apply the polynomial $P^*$ to the singular values of $V_{\frac{I-\herm^2}{2}}$. We will denote the output matrix of this QSVT operation as $V_{\frac{\sqrt{I - \herm^2}}{\sqrt{8}}} = P^* \left(V_{\frac{I-\herm^2}{2}}\right)$. Leveraging the triangle inequality, the error induced by this QSVT procedure is bounded as
    \begin{align}
        &\norm{\bra{0^{a+2}} V_{\frac{\sqrt{I - \herm^2}}{\sqrt{8}}} \ket{0^{a+2}} - \frac{\sqrt{I - \herm^2}}{\sqrt{8}}} = \norm{\bra{0^{a+2}} P^* \left(V_{\frac{I-\herm^2}{2}}\right) \ket{0^{a+2}} - \frac{\sqrt{I - \herm^2}}{\sqrt{8}}} \\
        & \leq \norm{ \bra{0^{a+2}} P^* \left(V_{\frac{I-\herm^2}{2}}\right) - f \left(V_{\frac{I-\herm^2}{2}}\right) \ket{0^{a+2}} } + \norm{\bra{0^{a+2}} f \left(V_{\frac{I-\herm^2}{2}}\right) \ket{0^{a+2}} - \frac{\sqrt{I - \herm^2}}{\sqrt{8}} } \\
        & \leq \norm{ P^* \left(V_{\frac{I-\herm^2}{2}}\right) - f \left(V_{\frac{I-\herm^2}{2}}\right) } + \norm{f \left(\bra{0^{a+2}} V_{\frac{I-\herm^2}{2}}\ket{0^{a+2}}\right)  - \frac{\sqrt{I - \herm^2}}{\sqrt{8}} } \\
        & \leq \frac{\eps}{9} + \norm{  \frac{\sqrt{I - \herm^2}}{\sqrt{8}}  - \frac{\sqrt{I - \herm^2}}{\sqrt{8}} } =  \frac{\eps}{9}. \label{eqn:temp} 
    \end{align}
    This concludes the proof of \Cref{thm:qsvt}.
\end{proof}

\paragraph{Step 3.} With a subroutine to compute $V_{\frac{\sqrt{I - \herm^2}}{\sqrt{8}}}$, we will leverage linear combination of unitaries again in order to compute a 
matrix $W_{U_\herm}$, which is a $(1,a+\calO(1),\frac{\eps}{14})$-block-encoding of $\sin(\pi/14) \cdot U_\herm$.
\begin{lemma} \label{thm:lcu}
    Linear combination of unitaries with $V_\herm$ and $V_{\frac{\sqrt{I - \herm^2}}{\sqrt{8}}}$, can be used to implement
    \begin{align}
        W_{U_\herm} =\sqrt{8}\sin(\pi/14) \cdot X \otimes V_{\frac{\sqrt{I - \herm^2}}{\sqrt{8}}}  + \sin(\pi/14) \cdot Z  \otimes V_{\herm},
    \end{align}
    which is a $(1,a+\calO(1),\frac{\eps}{14})$-block-encoding of $\sin(\pi/14) \cdot U_\herm$.
\end{lemma}
\begin{proof}[Proof of \Cref{thm:lcu}]
    Recall that $V_{\frac{\sqrt{I - \herm^2}}{\sqrt{8}}}$ is a $(1,a+\calO(1),\frac{\eps}{9})$-block-encoding of  $\frac{\sqrt{I-\herm^2}}{\sqrt{8}}$, operating on $n+a+\calO(1)$ qubits. Meanwhile, $V_{\herm}$ is a $(1, a, 0)$-block-encoding of $\herm$, operating on $n+a$ qubits. 

    Performing the standard LCU procedure \cite{childs2012hamiltonian} with the selection oracle
    \begin{align}
        U_\text{SEL} = \ketbra{0}{0} \otimes  X \otimes \widetilde{V}_{\sqrt{I-\herm^2}} + \ketbra{1}{1} \otimes Z \otimes I^{\otimes \calO(1)} \otimes V_{\herm}
    \end{align}
    and preparation oracle
    \begin{align}
        V_\text{PREP} \ket{0} = \frac{1}{3 \sqrt{\sin(\pi/14)}} \left( \sqrt{8 \sin(\pi/14)} \ket{0} + \sqrt{ \sin(\pi/14)} \ket{1}\right),
    \end{align}
    we can implement
    \begin{align}
        W_{U_\herm} &=V_\text{PREP}~U_\text{SEL}~V_\text{PREP}^\dagger = \sin(\pi/14) \cdot \left(\sqrt{8} \cdot X \otimes \widetilde{V}_{\sqrt{I-\herm^2}}  +  Z \otimes I^{\otimes \calO(1)}\otimes V_{\herm} \right) 
    \end{align}
    According to the selection $\ket{\bar{0}}=\ket{0^{a+\calO(1)}}$, this matrix block-encodes
    \begin{align}
        \bra{\bar{0}}W_{U_\herm} \ket{\bar{0}} &= \sin(\pi/14) \cdot \left(\sqrt{8} \cdot X \otimes \bra{\bar{0}}V_{\frac{\sqrt{I - \herm^2}}{\sqrt{8}}} \ket{\bar{0}}  +  Z \otimes \bra{0^a}V_{\herm} \ket{0^a} \right). 
    \end{align}
    Using the facts that $\norm{A \otimes B} \leq \norm{A}\norm{B}$, $\norm{U}=1$ for any unitary matrix $U$, and $\norm{a \cdot A}=|a|\cdot \norm{A}$ for any scalar $a$, we will now prove that  $\bra{\bar{0}}W_{U_\herm} \ket{\bar{0}}$ is $\frac{\eps}{14}$-close to $U_\herm$.
    \begin{align}
        &\norm{\sin(\pi/14) \cdot U_\herm - \bra{\bar{0}}W_{U_\herm} \ket{\bar{0}}} \\
        &= \sin(\pi/14) \cdot \norm{(X \otimes \sqrt{I - \herm^2}+Z \otimes \herm) - \left(\sqrt{8} \cdot X \otimes \bra{\bar{0}}V_{\frac{\sqrt{I - \herm^2}}{\sqrt{8}}} \ket{\bar{0}}  +  Z \otimes \bra{0^a}V_{\herm} \ket{0^a} \right)} \\
        &= \sin(\pi/14) \cdot \norm{X \otimes \left(\sqrt{I - \herm^2} - \sqrt{8} \cdot \bra{\bar{0}}V_{\frac{\sqrt{I - \herm^2}}{\sqrt{8}}} \ket{\bar{0}}\right)+Z \otimes \left(\herm -  \herm \right)} \\
        & \leq \sin(\pi/14) \cdot \left( \norm{X \otimes \left(\sqrt{I - \herm^1} - \sqrt{8} \cdot \bra{\bar{0}}V_{\frac{\sqrt{I - \herm^2}}{\sqrt{8}}} \ket{\bar{0}}\right)} \right) \\
        & \leq \sin(\pi/14) \cdot \left( \norm{X} \norm{\sqrt{I - \herm^2} - \sqrt{8} \cdot \bra{\bar{0}}V_{\frac{\sqrt{I - \herm^2}}{\sqrt{8}}} \ket{\bar{0}}} \right) \\
        &  = \sin(\pi/14) \cdot \left( \sqrt{8} \cdot \norm{\frac{\sqrt{I - \herm^2}}{\sqrt{8}} - \bra{\bar{0}}V_{\frac{\sqrt{I - \herm^2}}{\sqrt{8}}} \ket{\bar{0}}} \right) \\
        &  \leq \sin(\pi/14) \cdot \left( \sqrt{8} \cdot \frac{\eps}{9} \right) \\
        & \leq \frac{\eps}{14}
    \end{align}
    This concludes the proof of \Cref{thm:lcu}.
\end{proof}

\paragraph{Step 4.} In order to produce a block-encoding that is $\eps$-close to $U_\herm$, we will perform oblivious amplitude amplification (OAA) on $W_{U_\herm}$. The following result guarantees our ability to amplify the desired coefficient.

\begin{fact}[\cite{gilyen2018QSingValTransf} Theorem 28] \label{thm:oaa}
Let $n \in \mathbb{N}_+$ be odd, let $\eps\in\mathbb{R}_+$, let $Q$ be a unitary, let $\widetilde{\Pi}$, $\Pi$ be orthogonal projectors and let $M: \textnormal{img}(\Pi) \mapsto \textnormal{img}(\widetilde{\Pi})$ be an isometry, such that 
\begin{align}
    \norm{\sin\left(\frac{\pi}{2n}\right)M\ket{\psi}-\widetilde{\Pi} Q \ket{\psi}} \leq \eps
\end{align}
for all $\ket{\psi} \in \textnormal{img}(\Pi)$. Then, we can construct a unitary $\widetilde{Q}$ such that for all $\ket{\psi} \in \textnormal{img}(\Pi)$,
\begin{align}
    \norm{M \ket{\psi}-\widetilde{\Pi} \widetilde{Q} \ket{\psi}} \leq 2n \eps,
\end{align}
which uses a single ancilla qubit, with $n$ uses of $Q$ and $Q^\dagger$, $n$ uses of $C_\Pi NOT$ and $n$ uses of $C_{\widetilde{\Pi}} NOT$ gates and $n$ single qubit gates.
\end{fact}
\noindent Specifically, in our case, we have that
\begin{align}\label{eq:closeBlockEncoding}
    \norm{\sin\left(\frac{\pi}{14}\right)\cdot U_{\herm}\ket{\psi}- \bra{\bar{0}}W_{U_\herm} \ket{\bar{0}} \ket{\psi}} \leq \frac{\eps}{14}.
\end{align}
Thus, by \Cref{thm:oaa}, we can implement matrix $\widetilde{W}_{U_\herm}$, which is an $\eps$-precise block-encoding of $U_{\herm}$, i.e.
\begin{align}
    \norm{U_{\herm}\ket{\psi}- \bra{\bar{0}}\widetilde{W}_{U_\herm} \ket{\bar{0}} \ket{\psi}} \leq \eps.
\end{align}
In practice, $\widetilde{W}_{U_\herm}$ is implemented via the QSVT. Using one ancilla qubit,  four $W_{U_\herm}$ queries, and three $W_{U_\herm}^\dagger$ queries a Chebyshev polynomial of $7^\text{th}$ order is applied to the block-encoded matrix. Since $T_7 \left(\sin\left(\frac{\pi}{14}\right)\right)=-1$, this amplifies the amplitude of the matrix block-encoded by $W_{U_\herm}$ to the desired amplitude (up to a phase).

\paragraph{Step 5.}
By \Cref{thm:oaa}, we know that the top left corner of the implemented matrix is close to the desired. It is not too difficult to see that this in particular implies that the implemented approximate block encoding is $\bigO{\eps}$-close to an ideal ancilla-saving unitary $U$ for which $(\bra{\bar{0}}\otimes I)U(\ket{\bar{0}}\otimes I)=U_{\herm}$, see for example~\cite[Lemma 23]{gilyen2018QSingValTransf} and its proof for a detailed argument.

Indeed, \Cref{eq:closeBlockEncoding} combined with the proof of~\cite[Lemma 23]{gilyen2018QSingValTransf} shows that there is a unitary $V$ such that $\bra{\bar{0}} V \ket{\bar{0}}=\sin\left(\frac{\pi}{14}\right)\cdot U_{\herm}$ and $\norm{W_{U_\herm} - V}=\bigO{\eps}$. If we were using $V$ in the oblivious amplitude amplification we would end up with a unitary whose top left corner is exactly $U_\herm$, and that unitary is $\bigO{n\eps}$ close to $\widetilde{W}_{U_\herm}$.
\end{proof}

\subsection{Single-Ancilla Block Encodings of General Matrices} \label{sec:non_hermitian}
In the previous section, we focused on single-ancilla block encodings of Hermitian matrices $\herm$. We will now show how an analogous procedure can be leveraged to implement a single-ancilla block-encoding of general sub-normalized matrices $\general$.
\begin{corollary}[Ancilla Uncomputation for General Matrices] \label{thm:general_corollary_appendix}
     Let $V_{\general}$ be a $(1, a, 0)$-block-encoding of general matrix $\general$, where $\norm{\general} \leq 1-\delta$, for some $\delta \in (0,1)$. For any $\eps \in (0,1)$, there exists a quantum circuit, requiring a small constant number $c=\calO(1)$ of additional ancillae and $\calO\left(\frac{1}{\delta}\log(\frac{1}{\epsilon})\right)$-queries to $V_\general$ and $V_\general^\dagger$, that implements a $(1,1,\epsilon)$-block encoding of $\general$ and returns the remaining $a-1$ ancilla to the state $\ket{0^{a-1}}$.
\end{corollary}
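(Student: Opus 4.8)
The plan is to mirror the Hermitian construction of \Cref{thm:main_thm_appendix} (equivalently \Cref{sec:hermitian_uncompute}), replacing the Hermitian single-ancilla unitary block encoding $U_\herm$ by the Hermitian unitary
\begin{align}
    U_\general = \begin{pmatrix}
        \sqrt{I - \general^\dagger \general} & \general^\dagger \\
        \general & -\sqrt{I - \general \general^\dagger}
    \end{pmatrix},
\end{align}
and then running the same LCU $+$ QSVT $+$ OAA pipeline. First I would verify, by direct computation analogous to \Cref{thm:hermitian}, that $U_\general$ is Hermitian and unitary: Hermiticity is immediate since $(\general^\dagger)^\dagger=\general$ and $\sqrt{I-\general^\dagger\general}$, $\sqrt{I-\general\general^\dagger}$ are positive semidefinite Hermitian; unitarity follows from $U_\general^\dagger U_\general = U_\general^2$, using the polar-decomposition-type identities $\general\sqrt{I-\general^\dagger\general}=\sqrt{I-\general\general^\dagger}\,\general$ and $\general^\dagger\sqrt{I-\general\general^\dagger}=\sqrt{I-\general^\dagger\general}\,\general^\dagger$ (which hold because $\general f(\general^\dagger\general)=f(\general\general^\dagger)\general$ for any function $f$), so that the cross terms cancel and the diagonal blocks give $\general^\dagger\general+(I-\general^\dagger\general)=I$ and $\general\general^\dagger+(I-\general\general^\dagger)=I$. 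Note $U_\general$ is a block encoding of $\general$ in the lower-left corner, i.e.\ $\general=(\bra{1}\otimes I)U_\general(\ket{0}\otimes I)$, which is why the claim states the block encoding in this slightly nonstandard form; conjugating by $X$ on the ancilla puts it back in the top-left corner if desired.

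Next I would reconstruct the three building-block subroutines. In Step~1, instead of $\frac{I-\herm^2}{2}$ I would use LCU on $V_\general$ and $V_\general^\dagger$ (via two controlled queries plus Hadamards) to block encode $\frac{I-\general^\dagger\general}{2}$, and separately $\frac{I-\general\general^\dagger}{2}$; here one uses that $V_\general^\dagger V_\general$ is a $(1,a,0)$-block encoding of $\general^\dagger\general$ and $V_\general V_\general^\dagger$ of $\general\general^\dagger$ (by \cite[Lemma 54]{gilyen2018QSingValTransf} applied to the product of block encodings). Since $\|\general\|\leq 1-\delta$ gives $I-\general^\dagger\general\succeq(2\delta-\delta^2)I\succeq\delta I$ and likewise for $\general\general^\dagger$, Step~2 applies \Cref{thm:qsvt_err} with $f(x)=\frac12\sqrt{x}$ on the domain $[\delta,1]$ to obtain $(1,a+\calO(1),\frac{\epsilon}{9})$-block encodings of $\frac{\sqrt{I-\general^\dagger\general}}{\sqrt8}$ and $\frac{\sqrt{I-\general\general^\dagger}}{\sqrt8}$ using $\calO(\frac1\delta\log\frac1\epsilon)$ queries each. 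Then Step~3 is an LCU combining these with $V_\general$ and $V_\general^\dagger$: one forms a block encoding of $\sin(\pi/14)\cdot U_\general$ as a linear combination of the four Pauli-tensored terms (with $\sqrt8$ rescalings on the square-root blocks), exactly as in \Cref{thm:lcu} with triangle-inequality bookkeeping giving error $\le\frac{\epsilon}{14}$. Step~4 applies \Cref{thm:oaa} (oblivious amplitude amplification, degree-$7$ Chebyshev with $T_7(\sin(\pi/14))=-1$) since $U_\general$ is \emph{unitary}, producing $\widetilde W_{U_\general}$, an $\epsilon$-precise block encoding of $U_\general$; and Step~5 post-selects the $a+c-1$ extra ancillae to the $\ket{0^{a+c-1}}$ state, which as in \Cref{thm:be} yields a $(1,1,\epsilon)$-block encoding of $\general$, with all remaining ancillae returned to $\ket{0}$ (up to $\bigO{\epsilon}$, via the near-unitary argument of \cite[Lemma 23]{gilyen2018QSingValTransf}).

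The main obstacle — and the only place the proof genuinely departs from the Hermitian case — is the verification that $U_\general$ is unitary, since $\general$ and $\general^\dagger$ do not commute and one cannot simply square a sum of anticommuting Paulis; the needed cancellations rely on the intertwining relations $\general\, g(\general^\dagger\general)=g(\general\general^\dagger)\,\general$, which I would justify either by the singular value decomposition $\general=W\Sigma V^\dagger$ (so $\general^\dagger\general=V\Sigma^2V^\dagger$, $\general\general^\dagger=W\Sigma^2W^\dagger$, and the identity is diagonal in these bases) or by noting it holds for polynomials and extending by continuity. A secondary bookkeeping point is that the square-root functions now act on two different matrices ($\general^\dagger\general$ versus $\general\general^\dagger$), so Step~2 must be invoked twice and the LCU in Step~3 has four terms rather than two; this is routine given the Hermitian template. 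Everything else — query counts, ancilla counts, and the $\calO(e^{-\delta t})$ precision interpretation — carries over verbatim.
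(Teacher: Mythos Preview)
Your proposal is correct and follows essentially the same route as the paper: verify that $U_\general$ is Hermitian and unitary via the SVD intertwining identities (exactly the paper's \Cref{thm:hermitian_general}), then build block encodings of $I-\general^\dagger\general$ and $I-\general\general^\dagger$ by LCU, apply QSVT for the square roots, combine via a four-term LCU into a subnormalized block encoding of $U_\general$, and finish with oblivious amplitude amplification. One minor slip to fix: $V_\general^\dagger V_\general$ is literally the identity, not a block encoding of $\general^\dagger\general$; you need the standard product-of-block-encodings construction on a fresh ancilla register (as in \cite[Lemma~53]{gilyen2018QSingValTransf}), which does not change the $\calO(1)$ ancilla or $\calO(\frac{1}{\delta}\log\frac{1}{\epsilon})$ query counts.
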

\begin{proof}[Proof of \Cref{thm:general_corollary_appendix}]
    The critical difference in the procedure for general matrices $\general$ from that of Hermitian matrices $\herm$ (\Cref{thm:main_thm_appendix}) is in the form of the single-ancilla block encoding to be implemented. For general matrices we will implement a block encoding of the following Hermitian matrix, which we also prove to be unitary. (To get some intuition for the following statement it is worth considering the trivial case $\general=\sin(x)$.)
    \begin{claim} \label{thm:hermitian_general} 
        Let $\general$ be any matrix with operator norm at most 1. Then, the matrix 
        \begin{align}
            U_\general = \begin{pmatrix}
            \sqrt{I - \general^\dagger \general} & \general^\dagger \\
            \general & -\sqrt{I - \general \general^\dagger}
            \end{pmatrix}
        \end{align}
        is a Hermitian unitary block-encoding of $\general$ in the sense that $\general= (\bra{1}\otimes I) U_\general (\ket{0}\otimes I)$.
    \end{claim}
    \begin{proof}[Proof of \Cref{thm:hermitian_general}] 
        $U_\general$ is Hermitian by design, since $ \sqrt{I - \general^\dagger \general} $ and $ \sqrt{I - \general \general^\dagger} $ are Hermitian. 
        
        We will now prove that $ U $ is unitary. To do so, we leverage the following facts,
        \begin{align}
            \sqrt{I - \general^\dagger \general} \general^\dagger &= \general^\dagger \sqrt{I - \general \general^\dagger} \\
            \quad \general \sqrt{I - \general^\dagger \general} &= \sqrt{I - \general \general^\dagger} \general, \\
            \left(\sqrt{I - \general^\dagger \general}\right)^2 &+ \general^\dagger \general = I \\
            \left(\sqrt{I - \general \general^\dagger}\right)^2&+ \general \general^\dagger = I
        \end{align}
        which can be verified via the singular value decomposition of $\general$. Thus,
        \begin{align}U_\general^\dagger U_\general = U_\general^2 
        &= \begin{pmatrix}
        \left(\sqrt{I - \general^\dagger \general}\right)^2 + \general^\dagger \general & \sqrt{I - \general^\dagger \general} \general^\dagger - \general^\dagger \sqrt{I - \general \general^\dagger} \\
        \general \sqrt{I - \general^\dagger \general} - \sqrt{I - \general \general^\dagger} \general & \general \general^\dagger + \left(\sqrt{I - \general \general^\dagger}\right)^2
        \end{pmatrix} \\
        & = \begin{pmatrix}
        I & \general^\dagger \left( \sqrt{I - \general \general^\dagger}-\sqrt{I - \general \general^\dagger} \right)\\
        \general \left(\sqrt{I - \general^\dagger \general} -\sqrt{I - \general^\dagger \general}\right) & I
        \end{pmatrix}
         = I,
        \end{align}
        confirming that $U_\general$ is in fact unitary.
    \end{proof}
    The protocol for implementing $U_\general$ follows closely to that of $U_\herm$. However, we offer a brief outline for completeness. We are given access to $V_\general$, an $a$-ancilla block encoding of $\general$, as well as $V_\general^\dagger$, which block encodes $\general^\dagger$. Linear combination of unitaries is used to implement block encodings of $I-\general \general^\dagger$ and $I- \general^\dagger\general$. Application of the QSVT to these block encodings results in approximate block encodings of $\sqrt{I-\general \general^\dagger}$ and  $\sqrt{I- \general^\dagger\general}$. Via linear combination of unitaries, these are combined into an approximate block encoding of $\alpha \cdot U_\general$, for a coefficient $\alpha$. Since $U_\general$ is a unitary matrix, oblivious amplitude amplification is used to boost the amplitude $\alpha$, resulting in the desired approximate block encoding of $U_\general$. 
\end{proof}

\section{Part II: Approximate Multiplication of Block Encodings} \label{sec:part_ii}

In general, an $(n+a)$-qubit block encoding can be defined for any bit-strings $b_1, b_2 \in \{0,1\}^a$,
\begin{align}
    A = (\bra{b_1} \otimes I_n) U_A (\ket{b_2} \otimes I_n).
\end{align}
Note, however, that we can straightforwardly modify the block encoding unitary as
\begin{align}
    \widetilde{U}_A = (X^{b_1} \otimes I_n) U_A (X^{b_2} \otimes I_n)
\end{align}
to obtain a new unitary $\widetilde{U}_A$ such that
\begin{align}
    A = (\bra{0^a} \otimes I_n) \widetilde{U}_A (\ket{0^a} \otimes I_n).
\end{align}
Therefore, without loss of generality, in this section, we will only consider block encodings where $A$ is embedded in the top left corner of the unitary. Where obvious, we will also not explicitly write the $I_n$ terms to improve readability.

For a series of $K$ $(n+a)$-qubit block encodings $(U_{A_i})_{i=1}^K$, we will denote the desired multiplication of block encodings as
\begin{align}
    A_{[K]} = A_K \cdots A_2 A_1.
\end{align}
In terms of the full unitaries, this can be decomposed as  
\begin{align} \label{eqn:multibe}
    A_{[K]} &= \bra{0^a} U_{A_K} \ketbra{0^a}{0^a} \cdots \ketbra{0^a}{0^a} U_{A_2} \ketbra{0^a}{0^a} U_{A_1} \ket{0^a} \\
    & = \bra{0^a} U_{A_K} \pzeroa \cdots \pzeroa U_{A_2} \pzeroa U_{A_1} \ket{0^a},
\end{align}
where $\pzeroa = \ketbra{0^a}{0^a}$ with orthogonal projector $\pperp = I - \pzeroa$
However, the projectors $\pzeroa$ are not unitary and can thus only be implemented either by performing measurements incoherently or coherently block encoding them. 

A coherent block encoding of the projector $\pzeroa$ takes the general form
\begin{align}
    U_{\pzeroa} = G \otimes \pzeroa + B \otimes \pperp,
\end{align}
where  $G$ and $B$ are both unitaries. We will prove that this construction is in fact unitary. 
\begin{claim} \label{thm:unitary}
    For arbitrary unitaries $G$ and $B$, $U_{\pzeroa} = G \otimes \pzeroa + B \otimes \pperp$ is unitary.
\end{claim}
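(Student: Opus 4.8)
The plan is to verify unitarity directly by computing $U_{\pzeroa}^\dagger U_{\pzeroa}$ and $U_{\pzeroa} U_{\pzeroa}^\dagger$, exploiting that $\pzeroa$ and $\pperp$ are complementary orthogonal projectors. First I would record the elementary identities $\pzeroa^\dagger = \pzeroa$, $\pperp^\dagger = \pperp$, $\pzeroa^2 = \pzeroa$, $\pperp^2 = \pperp$, $\pzeroa \pperp = \pperp \pzeroa = 0$, and $\pzeroa + \pperp = I_a$, together with the unitarity hypotheses $G^\dagger G = G G^\dagger = I$ and $B^\dagger B = B B^\dagger = I$.

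The main computation is then routine: using $(M \otimes N)^\dagger = M^\dagger \otimes N^\dagger$ and bilinearity of the tensor product,
\begin{align}
    U_{\pzeroa}^\dagger U_{\pzeroa} &= \left(G^\dagger \otimes \pzeroa + B^\dagger \otimes \pperp\right)\left(G \otimes \pzeroa + B \otimes \pperp\right) \\
    &= (G^\dagger G) \otimes \pzeroa^2 + (G^\dagger B)\otimes(\pzeroa\pperp) + (B^\dagger G)\otimes(\pperp\pzeroa) + (B^\dagger B) \otimes \pperp^2 \\
    &= I \otimes \pzeroa + I \otimes \pperp = I \otimes I = I,
\end{align}
where the two cross terms vanish by orthogonality $\pzeroa\pperp = \pperp\pzeroa = 0$ and the two diagonal terms collapse because $G,B$ are unitary and the projectors are idempotent. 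An entirely symmetric computation gives $U_{\pzeroa} U_{\pzeroa}^\dagger = I$ as well, which establishes that $U_{\pzeroa}$ is unitary.

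There is essentially no obstacle here; the statement is a one-line consequence of the orthogonality and completeness of the pair $\{\pzeroa, \pperp\}$. The only point worth emphasizing is that the argument nowhere uses any relationship between $G$ and $B$ (they need not commute, and are in general unrelated): it only uses that each is unitary and that the two projectors are complementary and mutually orthogonal. In fact the identical proof shows the more general fact that $\sum_i U_i \otimes \Pi_i$ is unitary whenever each $U_i$ is unitary and the $\Pi_i$ form a complete family of pairwise-orthogonal projectors, of which \Cref{thm:unitary} is the two-term case.
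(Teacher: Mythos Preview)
Your proof is correct and follows essentially the same approach as the paper: direct verification that $U_{\pzeroa}^\dagger U_{\pzeroa} = U_{\pzeroa} U_{\pzeroa}^\dagger = I$ using the orthogonality and completeness of $\pzeroa,\pperp$ and the unitarity of $G,B$. Your version is slightly more explicit about the vanishing cross terms, and your closing remark about the general $\sum_i U_i\otimes \Pi_i$ case is a nice addition not present in the paper.
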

\begin{proof}[Proof of \Cref{thm:unitary}]
    To confirm that $U_{\pzeroa}$ is unitary, we must show that $U_{\pzeroa}^\dagger U_{\pzeroa} = U_{\pzeroa} U_{\pzeroa}^\dagger = I$.
    \begin{align}
        U_{\pzeroa}^\dagger U_{\pzeroa} &= G^\dagger G \otimes \pzeroa + B^\dagger B \otimes \pperp = I \otimes (\pzeroa + \pperp) = I \\
        U_{\pzeroa}^\dagger U_{\pzeroa} &= G G^\dagger \otimes \pzeroa + B B^\dagger \otimes \pperp = I \otimes (\pzeroa + \pperp)=I
    \end{align}
    This concludes the proof of \Cref{thm:unitary}.
\end{proof}

If $G$ and $B$ are $m$-qubit unitaries, $U_{\pzeroa}$ specifically block-encodes the matrix
\begin{align}
    \bra{0^m} U_{\pzeroa} \ket{0^m} = \bra{0^m} G \ket{0^m} \cdot \pzeroa + \bra{0^m} B \ket{0^m} \cdot \pperp.
\end{align}
Thus, by setting $m=1$, $G = I$ and $B=X$, we obtain the desired block-encoding $\bra{0} U_{\pzeroa} \ket{0} = \pzeroa$
of a single $\pzeroa$ projector.

Returning to \Cref{eqn:multibe}, to implement the multiplication of $K$ block encodings, an $m$-qubit block encoding of $A_{[K]}$ takes the general form 
\begin{align} \label{eqn:encode_circuit}
    U_{A_{[K]}} = U_{A_K} (G_{K-1} \otimes \pzeroa + B_{K-1} \otimes \pperp) U_{A_{K-1}} \cdots  U_{A_2} (G_1 \otimes \pzeroa + B_1 \otimes \pperp) U_{A_1}.
\end{align}
Specifically, this matrix block encodes
\begin{align}
    \bra{0^m} U_{A_{[K]}} \ket{0^m} &= \sum_{x \in \{0,1\}^{K-1}} \bra{0^m} \prod_{i=1}^{K-1} G_i^{x_i} B_i^{1-x_i} \ket{0^m} \cdot U_{A_K}\prod_{i=1}^{K-1} \pzeroa^{x_i} \pperp^{1-x_i} U_{A_i} \\
    & = \bra{0^m} G_{K-1}\cdots G_1 \ket{0^m} \cdot U_{A_K} \pzeroa U_{A_{K-1}} \cdots  U_{A_2} \pzeroa U_{A_1} \\
    & \quad \quad \quad + \sum_{x \in \{0,1\}^{K-1}\backslash 0^{K-1}} \bra{0^m} \prod_{i=1}^{K-1} G_i^{x_i} B_i^{1-x_i} \ket{0^m} \cdot U_{A_K}\prod_{i=1}^{K-1} \pzeroa^{x_i} \pperp^{1-x_i} U_{A_i} \\
    & = \bra{0^m} \prod_{i=1}^{K-1} G_i \ket{0^m} \cdot A_{[K]} \\
    & \quad\quad\quad+ \sum_{x \in \{0,1\}^{K-1}\backslash 0^{K-1}} \bra{0^m} \prod_{i=1}^{K-1} G_i^{x_i} B_i^{1-x_i} \ket{0^m} \cdot U_{A_K}\prod_{i=1}^{K-1} \pzeroa^{x_i} \pperp^{1-x_i} U_{A_i}
\end{align}
For large enough $m$ and careful choice of the unitaries $(G_i)_{i=1}^{K-1}$ and $(B_i)_{i=1}^{K-1}$, this construction is guaranteed to implement the desired multiplication of block encodings, such that 
\begin{align} 
    \bra{0^{m+a}} U_{A_{[K]}} \ket{0^{m+a}} =\bra{0^a}U_{A_K} \pzeroa \cdots  \pzeroa U_{A_2} \pzeroa U_{A_1}\ket{0^a}= A_{[K]}.
\end{align}
The key question that we explore in this section is how small we can make the number of measurement ancillae $m$, while still achieving this desired block encoding (or something close to it).

\subsection{The Multiple Coherent Measurement (\circclass)~Circuit Class}

We begin by formalizing the Multiple Coherent Measurement  (\circclass) circuit class, which implements unitaries of the form of $U_{A_{[K]}}$, as described in \Cref{eqn:encode_circuit}. This circuit class will serve as the central object of study in Part II. We will show that all known procedures for implementing multiplication of block encodings can be implemented in this circuit class. Furthermore, restriction to this circuit class will enable us to prove a lower-bound on the number of measurement ancillae $m$ required to achieve \emph{exact} multiplication of block encodings (\Cref{def:exact_mbe}).
\begin{definition}[\circclass$_{\vec{U}_A, K, \vec{\Pi}}$ Circuit Class] 
    Suppose we are given series of $K$ $(n+a)$-qubit block-encodings $\vec{U}_A = (U_{A_i})_{i=1}^K$. Without loss of generality, assume that all $U_{A_i}$ have $\pzeroa$ as the $a$-qubit projector onto ``good'' measurement outcome and $\pperp=I_a-\pzeroa$ as the $a$-qubit projector onto ``bad'' measurement outcome. With this, we define $\circclass_{\vec{U}_A, K,m}$ to be the class of all circuits of the form,
    \begin{align} \label{eqn:mcm_general}
        \circclass_{\vec{U}_A, K,m} (\vec{W}, \vec{G}, \vec{B}) = (W_{K-1} \otimes U_K) \prod_{i=1}^{K-1}   \left( \left(G_{K-i} \otimes \pzeroa \otimes I_n \right)\left( B_{K-i} \otimes \pperp \otimes I_n \right)    \left(W_{K-i-1} \otimes U_{K-i}\right) \right),
    \end{align}
    parameterized by the series of $m$-qubit unitaries $\vec{W} = (W_i)_{i=0}^{K-1}$, $\vec{G} = (G_i)_{i=1}^{K-1}$, and $\vec{B} = (B_i)_{i=1}^{K-1}$. The circuit is illustrated in \Cref{fig:mcm}.
\end{definition}
\begin{figure}[t!]
    \centering
    \includegraphics[width=.8\textwidth]{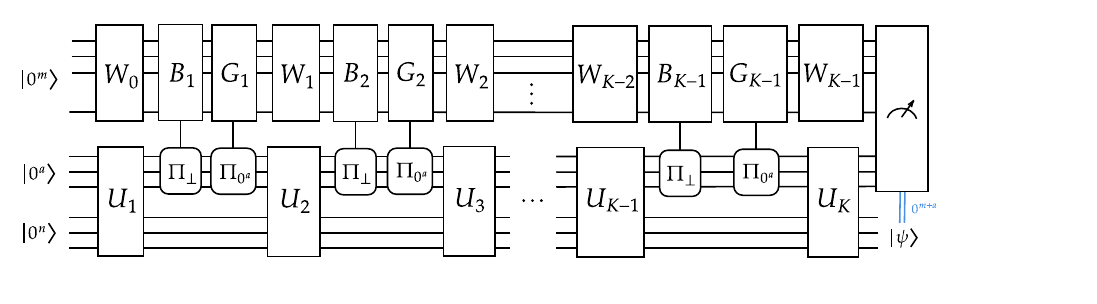}
    \caption{An $\circclass_{\vec{U}, K,m} (\vec{W}, \vec{G}, \vec{B})$ circuit.}
    \label{fig:mcm}
\end{figure}
\noindent  The \circclass$_{\vec{U}_A, K, m}$ circuit class was to defined to be as general possible, making it clear that it contains all circuits of the form of \Cref{eqn:encode_circuit}. However, as defined, the circuit class is overparameterized and its circuits can be simplified as follows.
\begin{lemma}[Simplified \circclass~Structure] \label{thm:simplify}
    For every $\circclass_{\vec{U}_A, K,m} (\vec{W}, \vec{G}, \vec{B})$ circuit, there exists a series of $m$-qubit unitaries $\vec{V} = (V_i)_{i=1}^{K-1}$ and  a single $m$-qubit unitary $Q$ such that $\circclass_{\vec{U}_A, K,m} (\vec{W}, \vec{G}, \vec{B})$ can be simplified to a circuit of the form
    \begin{align}
        \circclass_{\vec{U}_A, K,m} (\vec{V}, Q) = (Q\otimes U_K) \cdot \prod_{i=1}^{K-1} \left(  \left(V_{K-i}\otimes \pperp \otimes I_n \right) \left(I_{m+a} \otimes U_{K-i}\right)\right),
    \end{align}
    as depicted in \Cref{fig:lemma_con_uni}.
\end{lemma}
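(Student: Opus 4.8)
The plan is to show that the three‑sequence parameterization of \Cref{eqn:mcm_general} collapses to the two‑parameter form by (i) peeling each ``good/bad'' gate into a purely‑measurement‑register unitary times a gate that acts nontrivially on the measurement register \emph{only on the bad branch}, and then (ii) commuting all of the purely‑measurement‑register unitaries to the far left, absorbing them into the single final unitary $Q$ and into conjugations of the surviving controlled gates.

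The one elementary identity driving everything is that, for any $m$‑qubit unitaries $G,B$,
\begin{align*}
    G\otimes\pzeroa + B\otimes\pperp = (G\otimes I_a)\bigl(I_m\otimes\pzeroa + (G^\dagger B)\otimes\pperp\bigr),
\end{align*}
which follows from $\pzeroa+\pperp = I_a$ and $GG^\dagger = I_m$. Writing $C_\pperp(V):=I_m\otimes\pzeroa + V\otimes\pperp$ for the gate that applies $V$ to the measurement register conditioned on the ancilla register lying in $\pperp$ (this is precisely the object denoted $V\otimes\pperp$ in the statement), each controlled gate $G_j\otimes\pzeroa + B_j\otimes\pperp$ appearing in \Cref{eqn:mcm_general} becomes $(G_j\otimes I_a)\,C_\pperp(G_j^\dagger B_j)$. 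Substituting this everywhere turns the $\circclass$ circuit into an alternating product of (a) purely‑measurement‑register unitaries of the form $W_i\otimes I_a$ and $G_i\otimes I_a$, (b) controlled gates $C_\pperp(\cdot)$, and (c) block encodings $I_m\otimes U_{A_i}$ (all implicitly tensored with $I_n$).

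Next I would sweep the purely‑measurement‑register unitaries to the left. Two facts suffice: such a unitary $M\otimes I_a$ commutes with every $I_m\otimes U_{A_i}$ (disjoint registers), and it conjugates a controlled gate into a controlled gate,
\begin{align*}
    C_\pperp(V)\,(M\otimes I_a) = (M\otimes I_a)\,C_\pperp(M^\dagger V M),
\end{align*}
which is immediate by expanding both sides. Scanning the factored circuit from right to left while carrying a running product $\mathsf A$ (itself of the form $(\text{$m$-qubit unitary})\otimes I_a$) of all the measurement‑register unitaries encountered so far: each time $\mathsf A$ passes an $I_m\otimes U_{A_i}$ it slides through unchanged, and each time it passes a $C_\pperp(V)$ it relabels that gate to $C_\pperp(\mathsf A^\dagger V\mathsf A)$ while remaining to its left. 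When $\mathsf A$ reaches the left end it equals $Q\otimes I_a$ with $Q = W_{K-1}G_{K-1}W_{K-2}G_{K-2}\cdots G_1W_0$, and the circuit has become
\begin{align*}
    (Q\otimes U_K)\prod_{i=1}^{K-1}\Bigl(C_\pperp(V_{K-i})\,(I_{m+a}\otimes U_{K-i})\Bigr),
\end{align*}
where $V_j = (W_{j-1}G_{j-1}\cdots G_1W_0)^\dagger\,(G_j^\dagger B_j)\,(W_{j-1}G_{j-1}\cdots G_1W_0)$ — exactly the claimed form (cf.\ \Cref{fig:lemma_con_uni}). The sweep is made rigorous by a straightforward induction on $K$.

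I do not anticipate a genuine obstacle here: the content is just the peeling identity together with the commutation/conjugation bookkeeping, and the one ``lemma‑ish'' fact is that conjugating $I_m\otimes\pzeroa + V\otimes\pperp$ by $M\otimes I_a$ again has the controlled form with the same projectors. The only point that needs care is keeping straight which measurement‑register unitaries conjugate which $V_j$ (namely all those lying to the right of the original $j$‑th good/bad gate, in multiplication order), which is cleanest to pin down by the induction on $K$ just mentioned.
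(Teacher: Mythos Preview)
Your proposal is correct and follows essentially the same two–step strategy as the paper: first the ``peeling'' identity $G\otimes\pzeroa+B\otimes\pperp=(G\otimes I_a)\,C_\pperp(G^\dagger B)$, then commuting all purely–measurement‑register unitaries leftward using that $(M\otimes I_a)$ conjugates $C_\pperp(V)$ into $C_\pperp(M^\dagger VM)$. The paper organizes the second step via a recursive definition of the new $V_{j}$, whereas you do it as a single left‑to‑right sweep with a running product and give closed‑form expressions for $Q$ and each $V_j$; these are equivalent presentations of the same argument.
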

\begin{proof}[Proof of \Cref{thm:simplify}]
    For ease of notation, we will denote the unitary implemented by $\circclass_{\vec{U}_A, K,m} (\vec{W}, \vec{G}, \vec{B})$, in \Cref{eqn:mcm_general}, as $U_\circclass$. Leveraging the following identity
    \begin{align}
        (G_{j}\otimes \pzeroa) (B_{j}\otimes \pperp) &= \left( G_j \otimes \pzeroa + I \otimes\pperp \right) \left(I \otimes \pgood^{(j)} + B_j\otimes \pperp\right) \\
        &= G_j \left( I \otimes \pzeroa + G_j^\dagger \otimes\pperp \right) \left(I \otimes \pzeroa + B_j\otimes \pperp\right) \\
        &=  G_j \left(I  \otimes \pzeroa +  G_j^\dagger B_j \otimes \pperp\right). 
    \end{align}
    and setting $W_j^* = W_j G_{j+1}$ and $V_j^* = G_j^\dagger B_j$, the unitary simplifies to
    \begin{align}
        U_\circclass = (W_{K} W^*_{K-1} \otimes U_K) \prod_{i=1}^{K-1}  \left( \left( V^*_{K-i} \otimes \pperp \otimes I_n \right) \left(W_{K-i-1}^* \otimes U_{K-i}\right) \right),
    \end{align}
    which can be depicted as:
    \begin{figure}[H]
        \centering
        \includegraphics[width=.75\textwidth]{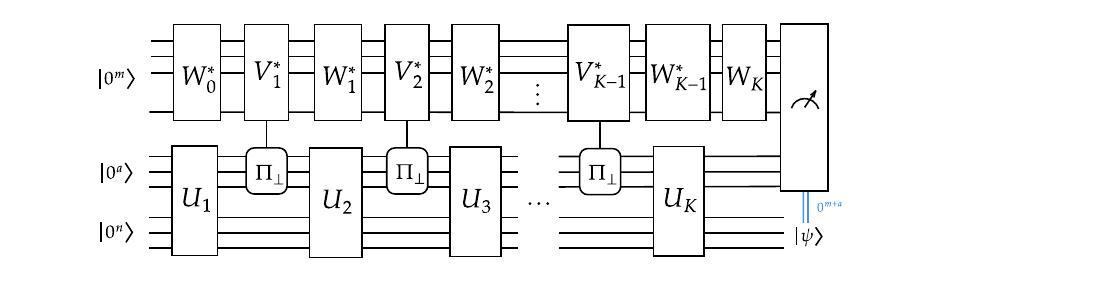}
    \end{figure}
    
    \noindent We will now recursively commute all the $W_i^*$ gates through all the $V_i^*$ gates to their right. Begin by defining the set of unitaries $V_{j+1}$ such that
    \begin{align}
        \left(\prod_{i=0}^j W^*_{j-1}\right) V_{j+1}  =  V^*_{j+1} \left(\prod_{i=0}^j W^*_{j-i}\right),
    \end{align}
    which implies that 
    \begin{align}
        (V^*_{j+1}\otimes \pperp) \left(\prod_{i=0}^j W^*_{j-i} \otimes I_a\right) &= \left( I  \otimes \pzeroa +  V^*_{j+1} \otimes \pperp  \right) \left(\prod_{i=0}^j W^*_{j-i} \otimes I_a\right)  \\
        &= \left(\prod_{i=0}^j W^*_{j-i}\right)  \otimes \pzeroa +  V^*_{j+1} \left(\prod_{i=0}^j W^*_{j-i}\right) \otimes \pperp \\
        &= \left(\prod_{i=0}^j W^*_{j-i}\right)  \otimes \pzeroa +   \left(\prod_{i=0}^j W^*_{j-i}\right) V_{j+1} \otimes \pperp \\
        &= \left(\prod_{i=0}^j W^*_{j-i} \otimes I_a\right) \left( I  \otimes \pzeroa +  V_{j+1} \otimes \pperp  \right) \\
        &=  \left(\prod_{i=0}^j W^*_{j-i} \otimes I_a\right) (V_{j+1}\otimes \pperp).
    \end{align}
    Thus, the circuit reduces to
    \begin{align}
        U_{\circclass} = (W_{K} \otimes I_{a+n}) \prod_{j=1}^{K} (W^*_{K-j} \otimes I_{a+n})  \prod_{i=0}^{K-1}  \left( \left( V_{K-i}\otimes \pperp \otimes I_n \right)  \left(I_m \otimes U_{K-i}\right) \right),
    \end{align}
    which is depicted as:
    \begin{figure}[H]
        \centering
        \includegraphics[width=.8\textwidth]{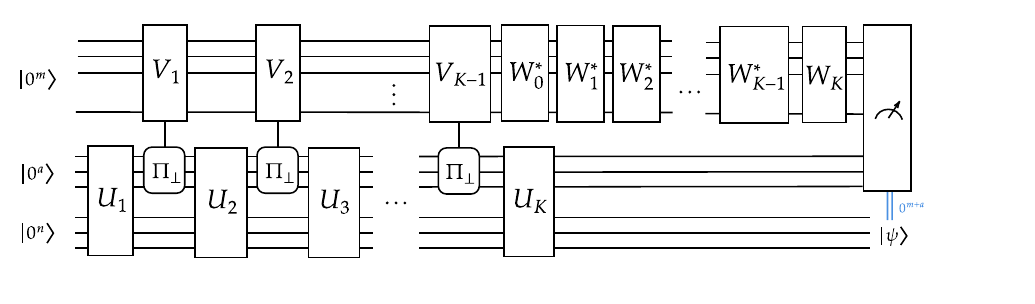}
    \end{figure}
    
    \noindent Therefore, setting $Q = W_K \prod_{j=1}^{K} W^*_{K-j} $, it is clear that $U_{\circclass} = \circclass_{\vec{U}_A, K,m} (\vec{V}, Q)$.
\end{proof}
\noindent From hereon, when we refer to the \circclass~circuit class, we will be referring to circuits of the form $\circclass_{\vec{U}_A, K,m} (\vec{V}, Q)$.

\subsection{Exact Compression Gadgets}
We will now introduce the notions of \emph{exact} multiplication of block encodings and \emph{exact} compression gadgets. Although multiplication of block encodings and compression gadgets are typically defined this way, we emphasize that they are exact to contrast with our eventual introduction of \emph{approximate} multiplication of block encodings and \emph{approximate} compression gadgets.

\begin{definition}[Exact Multiplication of Block Encodings (EMBE)] \label{def:exact_mbe}
    We will refer to an $\circclass_{\vec{U}_A, K,m} (\vec{V}, Q)$ circuit as an  \textbf{exact multiplication of block encodings (EMBE)} if it satisfies
    \begin{align} 
        \bra{0^{m+a}} \circclass_{\vec{U}_A, K,m} (\vec{V}, Q) \ket{0^{m+a}} = A_{[K]},
    \end{align}
    thereby exactly implementing the desired multiplication of block encodings.
\end{definition}

As described in the introduction (\Cref{sec:intro_part2}), the naive approach to implementing an exact multiplication of block encodings involves adding a measurement ancilla qubit for each block encoding. We can formally define this procedure in the \circclass~framework as follows.
\begin{definition}[Naive EMBE] \label{def:naive_dmbe}
    For a sequence of $K$ block encoding unitaries $(U_{A_i})_{i=1}^K$, a naive deterministic multiplication of block encodings can be implemented by a $\circclass_{\vec{U}_A, K,m} (\vec{V}, Q)$~circuit setting with $m=K-1$ measurement ancillae, $Q=I_m$, and   
    \begin{align}
        V_i  = X_i \otimes I_{[K-1]\backslash i} \quad \forall i \in [K-1].
    \end{align}
\end{definition}
\noindent However, as discussed in the introduction, this Naive EMBE is far from optimal in terms of the number of measurement ancillae $m$. This leads us to define the notion of an exact compression gadget\footnote{Note that the name ``compression gadget'' comes from the work of \cite{low2019Hamiltonian}.}. 
\begin{definition}[Exact Compression Gadget (ECG)] 
    For a sequence of $K$ block encoding unitaries $(U_{A_i})_{i=1}^K$,  any $\circclass_{\vec{U}_A, K,m} (\vec{V}, Q)$~circuit which implements an exact multiplication of block encodings using $m<K-1$ measurement ancillae is an \textbf{exact compression gadget (ECG)}.
\end{definition}

In this subsection, we will discuss the original ECG proposal of \cite{low2019Hamiltonian} and then prove a lower bound on the number of measurement ancillae required to implement \emph{any} ECG within the \circclass~circuit class. With this lower bound, we establish that the \cite{low2019Hamiltonian} ECG is in fact optimal in terms of number of measurement ancillae used. Therefore, the only way to use less measurement ancillae is to move to an approximate model of compression, as will be discussed in the next subsection.

\subsubsection{The \texorpdfstring{\cite{low2019Hamiltonian}}{[LW19]} Exact Compression Gadget} \label{sec:lw_dcg}

In their work on quantum simulation via truncated Dyson series, Low and Wiebe \cite{low2019Hamiltonian} proposed an ECG requiring only $m=\lceil \log_2 (K+1)\rceil$ measurement ancillae. Note, however, that their construction did not take advantage of the fact that we can determine whether the $k^{th}$ block-encoding was successful simply via the measurement outcome on the $a$-qubit block encoding ancilla register (it will have outcome $0^a$ if the block encoding was successful and outcome $x \in\{0,1\}^a\backslash 0^a$ if unsuccessful). Thus, we can implement their proposal within the \circclass~circuit class using only $\lceil \log_2 K \rceil$ measurement ancillae as follows (the gadget is a special case of the circuit depicted in \Cref{fig:add_gadget}).

\begin{definition}[\cite{low2019Hamiltonian} Exact Compression Gadget] \label{def:lw_dcg}
    For a sequence of $K$ block encoding unitaries $(U_{A_i})_{i=1}^K$, the \cite{low2019Hamiltonian} exact compression gadget is implemented by a $\circclass_{\vec{U}_A, K,m} (\vec{V}, Q)$~circuit with $m=\lceil \log_2 (K)\rceil$ measurement ancillae, $Q=I_m$, and   
    \begin{align}
        V_i  = \textnormal{\addm} \quad \forall i \in [K-1],
    \end{align}
    where  $\textnormal{\addm} \ket{c} = \ket{c+1 \bmod{2^m}}$ is the addition operation mod-$2^m$.
\end{definition}
To see gain intuition for why this DCG successfully implements a DMBE, note that the $m$-qubit measurment ancillae register can count bit strings up to $2^m = 2^{\lceil \log_2 (K+1)\rceil}\geq K+1$. The register is initialized to $\ket{0^m}$ and, for each bad ($\perp$) measurement, $+1$ is added, mapping to a basis state $\ket{x} \neq \ket{0}$. Furthermore, since at most, the \addm operation is applied $K$ times and the addition is at least mod-$(K+1)$, there will never be a case in which the modulo maps the sum back to the $\ket{0}$ state. Thus, if all measurements are good, the measurement register will remain in the $\ket{0^m}$ state and, for any bad measurement sequence, the measurement register will be in a basis state $\ket{x}\neq\ket{0}$. Therefore, come measurement time, if we measure the $\ket{0^m}$ state, the final state in the computation register is accepted. Otherwise, the final state is rejected.

\subsubsection{An Ancilla Lower-Bound for Exact Compression Gadgets} \label{sec:lb}
We will now prove a tight lower bound on the number of measurement ancillae necessary to implement an ECG and, thus, EMBE within the \circclass~circuit class.

\begin{theorem}(ECG Ancilla Lower-Bound) \label{thm:ancilla_lb}
    Any $\circclass_{\vec{U}_A, K,m} (\vec{V}, Q)$~circuit that implements an exact multiplication of $K$ block encodings requires  at least $m=\lceil\log_2(K)\rceil$ measurement ancillae.
\end{theorem}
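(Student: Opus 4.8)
The plan is to reduce the theorem to a pure statement about the $m$-qubit unitaries $\vec V,Q$ and then count dimensions of the measurement register $\mathbb{C}^{2^m}$. First I would expand the matrix element $\bra{0^{m+a}}\circclass_{\vec{U}_A,K,m}(\vec V,Q)\ket{0^{m+a}}$ by inserting $I_a=\pzeroa+\pperp$ between consecutive block encodings (reading the gate $V_i\otimes\pperp$ as ``apply $V_i$ to the measurement register iff the $a$-qubit block-encoding register lies in the bad subspace''). This yields a sum over strings $x\in\{0,1\}^{K-1}$ recording which of the first $K-1$ measurements is good ($x_i=1$) or bad ($x_i=0$)\,---\,the $K$-th being recorded by the final $\bra{0^a}$\,---\,in which the term for $x$ factors as a scalar $c_x=\bra{0^m}Q\,V_{i_t}\cdots V_{i_1}\ket{0^m}$, where $\{i_1<\dots<i_t\}=\{i:x_i=0\}$, times the operator $O_x=\bra{0^a}U_{A_K}\Pi^{x_{K-1}}U_{A_{K-1}}\cdots\Pi^{x_1}U_{A_1}\ket{0^a}$ with $\Pi^1=\pzeroa,\ \Pi^0=\pperp$, exactly mirroring the expansion in \Cref{eqn:encode_circuit}. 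So the EMBE condition reads $\sum_x c_x O_x=A_{[K]}=O_{1^{K-1}}$.

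Next I would pick the block encodings so that the $2^{K-1}$ operators $\{O_x\}$ are linearly independent\,---\,e.g.\ by enlarging the computational register and arranging that the bad part $\pperp U_{A_i}$ flips a private computational qubit while the good part does not, so that distinct $x$ produce operators with distinct support\,---\,which forces $c_{1^{K-1}}=1$ and $c_x=0$ for all $x\neq 1^{K-1}$. Since $c_{1^{K-1}}=\bra{0^m}Q\ket{0^m}=1$ and $Q$ is unitary, $Q\ket{0^m}=\ket{0^m}$, hence also $\bra{0^m}Q=\bra{0^m}$, and the surviving conditions become: for every nonempty set $\{i_1<\dots<i_t\}\subseteq[K-1]$,
\begin{align}
    \bra{0^m}V_{i_t}V_{i_{t-1}}\cdots V_{i_1}\ket{0^m}=0.\label{eq:plan-constraints}
\end{align}

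Finally I would read off $K$ orthonormal vectors from \eqref{eq:plan-constraints}. Set $\ket{\psi_j}:=V_{K-1}V_{K-2}\cdots V_j\ket{0^m}$ for $j=1,\dots,K-1$ and $\ket{\psi_K}:=\ket{0^m}$; these are unit vectors since the $V_i$ are unitary. For $1\le j<k\le K$ the inner product telescopes,
\begin{align}
    \braket{\psi_j}{\psi_k}=\bra{0^m}V_j^\dagger\cdots V_{K-1}^\dagger\,V_{K-1}\cdots V_k\ket{0^m}=\bra{0^m}V_j^\dagger\cdots V_{k-1}^\dagger\ket{0^m}=\overline{\bra{0^m}V_{k-1}\cdots V_j\ket{0^m}},
\end{align}
which vanishes by \eqref{eq:plan-constraints} applied to the nonempty set $\{j,j+1,\dots,k-1\}\subseteq[K-1]$. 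Hence $\ket{\psi_1},\dots,\ket{\psi_K}$ are $K$ pairwise-orthogonal unit vectors in $\mathbb{C}^{2^m}$, so $2^m\ge K$, i.e.\ $m\ge\lceil\log_2 K\rceil$, which also shows the \cite{low2019Hamiltonian} gadget is optimal.

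I expect the main obstacle to be the second step: making rigorous that some legitimate family of block encodings renders the $O_x$ linearly independent\,---\,or at least the $O_{1^{K-1}}$ together with the $O_x$ for the ``contiguous-interval'' strings actually used above\,---\,so that the operator identity genuinely collapses to the scalar constraints \eqref{eq:plan-constraints}; a Gram-determinant/genericity argument together with an explicit padded construction should suffice. The orthogonality computation is the conceptual core but is short once \eqref{eq:plan-constraints} is in hand, and the telescoping relies essentially on the fact that the measurement-register operations are applied in a fixed left-to-right order.
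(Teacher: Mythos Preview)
Your proposal is correct and follows the same overall skeleton as the paper: expand the block-encoded matrix element over measurement-outcome strings, argue that for suitably chosen block encodings the scalar coefficients must satisfy $\bra{0^m}V_{i_t}\cdots V_{i_1}\ket{0^m}=0$ for every nonempty ordered subset of $[K-1]$, and then deduce $2^m\ge K$ by a dimension count in the measurement register. The paper is equally informal about your second step---it simply asserts that in order to ``deterministically distinguish'' the all-good sequence from every bad one these orthogonality constraints must hold, without exhibiting a block-encoding family that forces them---so you are not on weaker ground there; your explicit acknowledgment of it as the main obstacle, and the suggestion of a padded construction making the $O_x$ independent, is in fact more careful than what the paper writes.

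Where you genuinely diverge is the final dimension count. The paper defines the nested subspaces $\widetilde{\calH}_i=\spant\{\prod_{j\in S}V_j\ket{0^m}:\emptyset\ne S\subseteq[i]\}$ and proves $\dim\widetilde{\calH}_{i+1}\ge\dim\widetilde{\calH}_i+1$ by contradiction (if equal, unitarity of $V_{i+1}$ would force $\ket{0^m}\in\widetilde{\calH}_i$). You instead directly exhibit $K$ pairwise-orthogonal unit vectors $\ket{\psi_j}=V_{K-1}\cdots V_j\ket{0^m}$ together with $\ket{\psi_K}=\ket{0^m}$, verifying orthogonality by telescoping. Your route is shorter and more elementary, and it only consumes the constraints for \emph{contiguous-interval} subsets $\{j,\dots,k-1\}$---exactly the observation you flag as easing the linear-independence requirement in step two. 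The paper's recursive argument tracks more structure (the entire reachable subspace at each stage), but for the bare lower bound your explicit orthonormal family is the cleaner path.
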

\begin{remark}
    This lower bound is tight, since the modified \cite{low2019Hamiltonian} ECG saturates this bound.
\end{remark}
\begin{proof}[Proof of \Cref{thm:ancilla_lb}]
    In order to implement a DMBE, we must ensure that we have enough degrees of freedom (measurement ancillae) to distinguish two cases: 1) all intermediate measurements are $0^a$ and 2) any combination of intermediate measurements such that at least one is not $0^a$. 
    
    Note that we can directly determine whether or not the $K^{th}$ block-encoding was successful via our ultimate measurement of the $a$-qubit block encoding ancilla register. Therefore, we only need to determine the success of the first $K-1$ intermediate block encodings. If we think of each of these intermediate measurements as taking value $0^a$ or value $\perp=($not $0^a)$, then the set of all possible intermediate measurement outcomes is $\calM = \{0^a, \perp\}^{K-1}$, such that $|\calM|=2^{K-1}$. Our goal is to distinguish the $\{0^a\}^{K-1}$ outcome from the $\calM \backslash \{0^a\}^{K-1}$ outcomes. 
    
    Leveraging the \circclass~circuit structure, described in \Cref{thm:simplify}, we can reason about the number of ancillary degrees of freedom needed to distinguish these measurement outcomes. Let $\calS_0$ denote the intermediate measurement indices corresponding to good $0^a$ outcomes after the deferred measurement. We will define $\calS_\perp=[K]\backslash \calS_0$ to be indices corresponding to bad $\perp$ outcomes. Thus, before the deferred measurement, the measurement ancilla register of a $\circclass_{\vec{U}_A, \vec{\Pi}} (\vec{V}, Q)$~circuit will contain the state 
    \begin{align}
        \ket{\phi} = Q \left(\prod_{i \in \calS_\perp} V_i\right) \ket{0^m}.
    \end{align}
    Note that if all the measurements are good, i.e. $\calS_0=[K-1]$ and $\calS_\perp=\emptyset$, the measurement ancilla register will simply contain the state $\ket{\phi_0}=Q\ket{0^m}$. At the end of the computation, to distinguish ``good'' from ``bad'' measurement sequences, we must perform a projective measurement onto the basis $\{\good,\bad\}$, with projectors
    \begin{align}
        \good &:= Q\ketbra{0^m}{0^m} Q^\dagger  \\
        \bad &:= I- \good.
    \end{align}
    
    With this projective measurement basis, if all the measurements are good, then
    \begin{align}
        \bra{\phi_0}\good\ket{\phi_0} = \bra{0^m} Q^\dagger Q\ketbra{0^m}{0^m} Q^\dagger Q \ket{0^m}=1.
    \end{align}
    Therefore, in order to determinstically distinguish this state from all the bad measurement outcome sequences, there must exist sufficient ancilliary degrees of freedom to ensure that, for all $\calS_\perp \subseteq [K]$ such that $\calS_\perp \neq \emptyset$, 
    \begin{align}
        \bra{\phi}\bad\ket{\phi} &= 1- \bra{0^m} \left(\prod_{i \in \calS_\perp} V_i\right)^\dagger \ketbra{0^m}{0^m} \prod_{i \in \calS_\perp} V_i \ket{0^m}=1-\left|\bra{0^m} \prod_{i \in \calS_\perp} V_i \ket{0^m}\right|^2 = 1. 
    \end{align}
    This implies that 
    \begin{align}
        \bra{0^m} \prod_{i \in \calS_\perp} V_i \ket{0^m} = 0, \quad \forall \calS_\perp \subseteq [K-1]~\text{ s.t. }\calS_\perp \neq \emptyset,
    \end{align}
    meaning the set of states
    \begin{align}
        \left\{V_1 \ket{0^m},V_2 \ket{0^m}, V_2 V_1 \ket{0^m}, V_3 \ket{0^m},..., \prod_{i\in [K-1]} V_i \ket{0^m}\right\}
    \end{align}
    must all be perpendicular to the state $\ket{0^m}$. 
    
    In order to determine the minimum possible number of measurement ancillae $m$ needed to ensure that all these constraints are satisfied and states are perpendicular, we will reason about the smallest possible Hilbert space in which all these states can be orthogonal. Let $\widetilde{\calH}_i$ be the Hilbert space containing all possible measurement ancilla states orthogonal to $\ket{0^m}$ after measurement $i$. That is,
    \begin{align}
        \widetilde{\calH}_0 &= \spant\{\emptyset\}, 
        \\
        \widetilde{\calH}_1 &= \spant\{V_1 \ket{0^m}\}, \\
        \widetilde{\calH}_2 &= \spant\{V_1 \ket{0^m},V_2 \ket{0^m}, V_2 V_1 \ket{0^m}\}, \\
        \widetilde{\calH}_3 &= \spant\{V_1 \ket{0^m},V_2 \ket{0^m}, V_3 \ket{0^m}, V_2 V_1 \ket{0^m}, V_3 V_2 \ket{0^m}, V_3 V_1 \ket{0^m}, V_3 V_2 V_1 \ket{0^m}\} \\
        & \vdots \\
        \widetilde{\calH}_{i+1} &= \spant\{V_{i+1}\ket{0^m}, V_{i+1} \widetilde{\calH}_{i}, \widetilde{\calH}_{i}\} \label{eqn:Hi+1}
    \end{align}
    We will now prove that,
    \begin{align}
        \dim(\widetilde{\calH}_{i+1}) \geq \dim(\widetilde{\calH}_i)+1, \quad \forall i \in [K-1],
    \end{align}
    via proof by contradiction. 
    
    We begin by assuming that $\dim(\widetilde{\calH}_{i+1}) = \dim(\widetilde{\calH}_i)$. Since $\widetilde{\calH}_{i+1} \supseteq \widetilde{\calH}_{i}$, this implies that $\widetilde{\calH}_{i+1} = \widetilde{\calH}_{i}$. Thus, there exists a bijection between $\widetilde{\calH}_{i+1}$ and $ \widetilde{\calH}_{i}$, governed by multiplication by the $V_{i+1}$ unitary matrix and its inverse $V_{i+1}^\dagger$. From \Cref{eqn:Hi+1}, since $V_i \ket{0^m}$ is an element of $\widetilde{\calH}_{i+1}$, via the inverse mapping,  
    $V_i^\dagger V_i \ket{0^m}=\ket{0^m} \in \widetilde{\calH}_i$. This, however, is a contradiction, since $\widetilde{\calH}_i$ was defined to be the set of all possible measurement ancilla states \emph{orthogonal} to $\ket{0^m}$ after measurement $i$.

    Following from our proof, after $K-1$ measurements the dimension of Hilbert space orthogonal to $\ket{0^m}$ is lower-bounded as $\dim(\widetilde{\calH}_{K-1}) \geq K-1$. Note, however, that the full Hilbert space of the measurement ancillae, $\calH_{K-1}$ must also contain $\ket{0^m}$ state, meaning
    \begin{align}
        \calH_{K-1} = \spant\{\ket{0^m}, \widetilde{\calH}_{K-1}\}, \quad \text{s.t. } \dim(\calH_{K-1}) \geq K.
    \end{align}
    Therefore, the number of measurement ancillae needed to produce an $\calH_{K-1}$ Hilbert space is
    \begin{align}
        m \geq \lceil\log_2 K\rceil,
    \end{align}
    obtaining our desired lower-bound.
\end{proof}

\subsection{Approximate Compression Gadgets}
In the standard block encodings literature, oftentimes creating exact block encodings can be too challenging or demanding. As such, there exists a notion of \emph{approximation} for block encodings. In particular, the matrix $U_B$ is considered an $\epsilon$-precise block encoding of matrix $B$ if
\begin{align}
    \| B - \bra{0^a} U_B \ket{0^a} \| \leq \epsilon,
\end{align}
where $\| \cdot \|$ denotes the operator norm. Inspired by this measure of ``closeness'' to a desired block encoding, in this section, we will introduce the notions of \emph{approximate} multiplication of block encodings and \emph{approximate} compression gadgets. Leveraging these, we will show that in certain regimes of multiplication of block encodings, approximate compression gadgets can be constructed that achieve high-precision multiplication of block encodings while requiring only a constant number of measurement ancillae---surpassing the previously established EMBE lower-bound.

We begin by defining \emph{approximate} multiplication of block encodings (AMBE) and \emph{approximate} compression gadgets (ACGs) as follows.
\begin{definition}[$\epsilon$-Approximate Multiplication of Block Encodings ($\epsilon$-AMBE)]
    We will refer to an $\circclass_{\vec{U}_A, K,m} (\vec{V}, Q)$ circuit as an  \textbf{approximate multiplication of block encodings ($\epsilon$-AMBE)} if
    \begin{align} 
        \| A_{[K]} - \bra{0^{m+a}} \circclass_{\vec{U}_A, K,m} (\vec{V}, Q) \ket{0^{m+a}} \| \leq \epsilon,
    \end{align}
    thereby implementing an $\epsilon$-precise block encoding of the desired  multiplication $A_{[K]}$.
\end{definition}
\noindent Note that from this definition, it follows that for any input $\ket{\psi}$, the output state of an $\epsilon$-approximate multiplication of block encodings circuit is $\epsilon$-close to the desired $A_{[K]}\ket{\psi}$ output state.
\begin{claim} \label{thm:closeness}
    Let $U_{\widetilde{A}_{[K]}}$ denote the unitary implemented by an $\epsilon$-approximate multiplication of block encodings circuit, such that $\widetilde{A}_{[K]} = \bra{0^{m+a}} U_{\widetilde{A}_{[K]}} \ket{0^{m+a}}$. For an arbitrary input state $\ket{\psi}$, let $\ket*{\widetilde{\phi}} = \widetilde{A}_{[K]} \ket{\psi}$ denote the post-selected $\epsilon$-approximate circuit output and let $\ket{\phi} = A_{[K]} \ket{\psi}$ denote the desired exact output. Then, the approximate and desired outputs have fidelity
    \begin{align} \label{eqn:fidelity}
        |\bra{\phi} \ket*{\widetilde{\phi}} |^2 \geq 1-\epsilon^2.
    \end{align}
\end{claim}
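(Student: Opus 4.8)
The plan is to deduce this state-level fidelity bound directly from the operator-norm closeness that defines an $\epsilon$-AMBE circuit, via the elementary chain ``operator-norm close $\Rightarrow$ output vectors close $\Rightarrow$ overlap close to one''. First I would unpack the hypothesis: since $U_{\widetilde{A}_{[K]}}$ implements an $\epsilon$-AMBE, we have $\big\| A_{[K]} - \widetilde{A}_{[K]} \big\| \le \epsilon$ with $\widetilde{A}_{[K]} = \bra{0^{m+a}} U_{\widetilde{A}_{[K]}} \ket{0^{m+a}}$. Evaluating on the fixed input $\ket{\psi}$ and using the definition of the operator norm, this gives $\big\| \ket{\phi} - \ket{\widetilde{\phi}} \big\| = \big\| (A_{[K]} - \widetilde{A}_{[K]})\ket{\psi} \big\| \le \epsilon$.

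Next I would expand the squared distance between the two (normalized) post-selected outputs:
\[
 \epsilon^2 \;\ge\; \big\| \ket{\phi} - \ket{\widetilde{\phi}} \big\|^2 \;=\; \langle \phi | \phi \rangle + \langle \widetilde{\phi} | \widetilde{\phi} \rangle - 2\operatorname{Re}\langle \phi | \widetilde{\phi} \rangle \;=\; 2 - 2\operatorname{Re}\langle \phi | \widetilde{\phi} \rangle ,
\]
so that $\operatorname{Re}\langle \phi | \widetilde{\phi} \rangle \ge 1 - \epsilon^2/2$. Since $\epsilon \in (0,1)$ the right-hand side is nonnegative, hence $|\langle \phi | \widetilde{\phi} \rangle| \ge \operatorname{Re}\langle \phi | \widetilde{\phi} \rangle \ge 1 - \epsilon^2/2$, and squaring gives $|\langle \phi | \widetilde{\phi} \rangle|^2 \ge (1 - \epsilon^2/2)^2 = 1 - \epsilon^2 + \epsilon^4/4 \ge 1 - \epsilon^2$, which is exactly \Cref{eqn:fidelity}.

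The one point that needs care — and is really the only obstacle — is the normalization step: the clean identity $\|\ket{\phi} - \ket{\widetilde{\phi}}\|^2 = 2 - 2\operatorname{Re}\langle\phi|\widetilde{\phi}\rangle$ used $\langle\phi|\phi\rangle = \langle\widetilde{\phi}|\widetilde{\phi}\rangle = 1$. For a genuinely subnormalized product $A_{[K]}$ this fails, and one only gets $\|\ket{\phi}-\ket{\widetilde{\phi}}\|^2 = \|\ket{\phi}\|^2 + \|\ket{\widetilde{\phi}}\|^2 - 2\operatorname{Re}\langle\phi|\widetilde{\phi}\rangle$, so the bound degrades with $\|A_{[K]}\ket{\psi}\|$. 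This is harmless in the setting where the claim is applied: under the $p$-MACG assumption $\|U_{A_i} - I\| = \calO(1/K)$ one has $\|A_{[K]}\ket{\psi}\| \ge (1-\calO(1/K))^K = \Omega(1)$, and in the motivating applications (Hamiltonian simulation, where $A_{[K]}\approx e^{-iHt}$ is unitary) $\ket{\phi}$ is in fact a unit vector while $\ket{\widetilde{\phi}}$ is $\epsilon$-close to one, so the normalized argument above applies verbatim — with, if one wants to track it explicitly, an $\calO(\epsilon)$ adjustment to absorb the slight subnormalization of $\ket{\widetilde{\phi}}$. I would state the claim (or at least its proof) with the understanding that $\ket{\phi}$, $\ket{\widetilde{\phi}}$ denote the normalized outputs, which is the natural reading of ``the desired exact output'' in the quantum-algorithms context.
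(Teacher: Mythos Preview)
Your argument is essentially identical to the paper's: derive $\|\ket{\phi}-\ket{\widetilde{\phi}}\|\le\epsilon$ from the operator-norm bound, expand the squared distance to get $\Re\langle\phi|\widetilde{\phi}\rangle\ge 1-\epsilon^2/2$, then square. Your caveat about normalization is a fair observation---the paper silently makes the same unit-vector assumption when writing $\|\ket{\phi}-\ket{\widetilde{\phi}}\|^2 = 2 - 2\Re\langle\phi|\widetilde{\phi}\rangle$---so you have, if anything, been more careful than the original.
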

\begin{proof}[Proof of \Cref{thm:closeness}]
    From the definition of $\epsilon$-AMBE, we have that
    \begin{align}
        \| A_{K} - \widetilde{A}_{[K]}\| \leq \epsilon.
    \end{align}
    Since the input state to the circuit $\ket{\psi}$ is normalized, i.e. $\|\ket{\psi}\|=1$, this implies that 
    \begin{align}
        \| \ket{\phi}- \ket*{\widetilde{\phi}} \| = \| A_{K} \ket{\psi}- \widetilde{A}_{[K]} \ket{\psi}\| \leq \| A_{K} - \widetilde{A}_{[K]}\|\cdot \|\ket{\psi}\| \leq \epsilon.
    \end{align}
    By the properties of the operator norm,
    \begin{align}
        \epsilon^2 \geq \| \ket{\phi}- \ket*{\widetilde{\phi}} \|^2 = (\bra{\phi}- \bra*{\widetilde{\phi}})(\ket{\phi}- \ket*{\widetilde{\phi}}) = 2 - 2\Re{\bra{\phi}\ket*{\widetilde{\phi}}}, 
    \end{align}
    which implies that 
    \begin{align}
        |\bra{\phi}\ket*{\widetilde{\phi}}| \geq \Re{\bra{\phi}\ket*{\widetilde{\phi}}} \geq 1-\frac{\epsilon^2}{2}.
    \end{align}
    Via some arithmetic manipulation, we obtain the desired lower-bound:
    \begin{align}
        |\bra{\phi}\ket*{\widetilde{\phi}}|^2 \geq \left(1-\frac{\epsilon^2}{2}\right)^2=1-\epsilon^2 + \frac{\epsilon^4}{4} \geq 1-\epsilon^2.
    \end{align}
    This concludes the proof.
\end{proof}

In the last section, we established that \emph{exact} MBE can be implemented optimally with $m = \lceil \log_2 K\rceil$ measurement ancillae. Therefore, we will require that an approximate compression gadget use less than this number of measurement ancillae. 
\begin{definition}[$\epsilon$-Approximate Compression Gadget ($\epsilon$-ACG)] 
    For a sequence of $K$ block encoding unitaries $(U_{A_i})_{i=1}^K$,  any $\circclass_{\vec{U}_A, K,m} (\vec{V}, Q)$~circuit which implementing $\epsilon$-approximate multiplication of block encodings using $m<\lceil \log_2 K \rceil$ measurement ancillae is an \textbf{$\epsilon$-approximate compression gadget ($\epsilon$-ACG)}.
\end{definition}
\noindent  With these general definitions, we will now propose a specific compression gadget--the $p$-Modular Addition Compression Gadget ($p$-MACG)--and prove that it is an $\epsilon$-ACG for a large regime of multiplication of block encodings.

\subsubsection{The \texorpdfstring{$p$}{p}-Modular Addition Compression Gadget}
Our proposal for an $\epsilon$-ACG is heavily inspired by the \cite{low2019Hamiltonian} ECG (as described in \Cref{sec:lw_dcg}). Specifically, we re-parameterize the \addm~addition operations of the \cite{low2019Hamiltonian} ECG as $\addp$ addition operations, where $p\in \mathbb{Z}$ is a constant (relative to $K$). We refer to this compression gadget as the $p$-Modular Addition Compression Gadget ($p$-MACG), since instead of doing addition modulo $2^m=K$ as in the \cite{low2019Hamiltonian} ECG, we perform addition modulo-$2^p$. The $\addp$ operation is implementable with only $p=\calO(1)$ qubits, which implies that the $p$-MACG requires only $m=p<\lceil \log_2 K\rceil$ measurement ancillae. Formally, the $p$-MACG is defined within the \circclass~circuit class as follows.

\begin{definition}[$p$-Modular Addition Compression Gadget ($p$-MACG)]
    For constant $p>1$ and a sequence of $K$ block encoding unitaries $(U_{A_i})_{i=1}^K$, with $\eta_{\max}=\calO(1/K)$, the \textbf{$p$-Modular Addition Compression Gadget ($p$-MACG)} is an $\circclass_{\vec{U}_A, \vec{\Pi}} (\vec{V}, Q)$~circuit composed of: $m=p=\calO(1)$ measurement ancillae, $Q=I_m$, and   
    \begin{align}
        V_i  = \addp, \quad \forall i \in [K-1],
    \end{align}
    where $\addp \ket{x} = \ket{x + 1 \bmod{2^p}}$.
\end{definition}
\noindent All that remains is to show that the $p$-MACG achieves an $\epsilon$-precision block encoding of the desired multiplication and is, thus, an $\epsilon$-ACG.

\subsubsection{Surpassing the Ancilla Lower-Bound for Exact Compression Gadgets}

We will now show that for a broad regime of multiplication of block encoding sequences, the $p$-MACG is an $\epsilon$-ACG where $\epsilon = \calO(1/K^{2^p})$.

We begin by specifying the regime of multiplication of block encoding sequences that we will consider. Suppose we have a sequence of $K$ block encoding unitaries $(U_{A_i})_{i=0}^{K-1}$, such that
\begin{align}
    A_i = \bra{0^a} U_{A_i} \ket{0^a},
\end{align} 
where $A_i$ acts on $n$ qubits and  $U_{A_i}$ acts on $n+a$ qubits. Performing an asymptotic number of multiplications of block encodings only makes sense if each block encoding is fairly close to the identity, i.e., for all $i$,
\begin{align} \label{eqn:id_constrain}
    \|U_{A_i}-I\| \leq \eta_i
\end{align}
for some small $\epsilon_i << 1$. Since $\eta_i$ captures the deviation of the $i^\text{th}$ block encoding matrix from the identity, we will refer to $\eta_i$ as the $i^\text{th}$ ``deviation coefficient''. Block encodings of this form can be decomposed as
\begin{align} \label{eqn:block_enc}
    U_{A_i} = \begin{bmatrix}
        \bra{0^a} U_{A_i} \ket{0^a} & \bra{0^a} U_{A_i} \ket{\perp} \\
        \bra{\perp} U_{A_i} \ket{0^a} & \bra{\perp} U_{A_i} \ket{\perp}
    \end{bmatrix} = \begin{bmatrix}
        A_i & B_i \\
        C_i & D_i
    \end{bmatrix}.
\end{align}
This implies that
\begin{align}
    U_{A_i} = \ketbra{0^a}{0^a} \otimes A_i + \ketbra{0^a}{\perp} \otimes B_i + \ketbra{\perp}{0^a} \otimes C_i +  \ketbra{\perp}{\perp} \otimes D_i 
\end{align}
where \Cref{eqn:id_constrain} imposes that
\begin{align} \label{eqn:op_norm}
    1-\eta_i \leq \| A_i \| \leq 1+\eta_i, \quad \quad \| B_i \| \leq \eta_i, \quad \| C_i \| \leq \eta_i, \quad 1-\eta_i \leq \| D_i \| \leq 1+\eta_i.
\end{align}
Finally, we will denote the maximal deviation coefficient among all $K$ block encodings as
\begin{align}
    \eta_{\text{max}} = \max_{i \in [K]} \eta_i.
\end{align}

With this new notation, we will now prove that in the regime of MBEs where $\eta=\calO(1/K)$, (despite only requiring a constant number of measurement ancillae) a $p$-ACG can implement an $\epsilon$-precise block-encoding of the desired multiplication with $\epsilon=\calO(1/K^{2^p})$.
\begin{theorem} \label{thm:main_mbe}
    Let $(U_{A_i})_{i=1}^K$ be a sequence of $K$ block encodings, with maximal deviation coefficent $\eta_{\max}=\frac{c}{K}$ for some constant $c\in \mathbb{R}$.
    For any $p\in \mathbb{Z}$ such that $p=\calO(1)$, let $U_{pACG}$ be the unitary implemented by the $p$-ACG on $(U_{A_i})_{i=1}^K$. The  $p$-ACG obtains an $\epsilon$-precise block encoding of the desired multiplication sequence $A_{[K]}$, i.e.
    \begin{align}
        \| A_{[K]} - \bra{0^{p+a}} U_{pACG} \ket{0^{p+a}} \| \leq \epsilon, 
    \end{align}
    with inverse-polynomial precision in $K$:
    \begin{align}
        \epsilon = 2 e^c \cdot \left(\frac{e \cdot c^2 }{K\cdot 2^p}\right)^{2^p} = \calO\left(\frac{1}{K^{2^p}}\right).
    \end{align}
\end{theorem}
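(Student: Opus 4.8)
The plan is to expand the unitary $U_{pACG}$ implemented by the $p$-MACG into the ``measurement branches'' determined by which of the first $K-1$ intermediate ancilla measurements land in the bad subspace, and to identify the error as exactly the contribution of the branches whose number of bad outcomes is a nonzero multiple of $2^p$ — these are precisely the branches that the mod-$2^p$ counter fails to reject.

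Concretely, I would first fix, for each $S\subseteq[K-1]$, the operator
\begin{align*}
  M_S \;=\; \bra{0^a}\,U_{A_K}\,P^{S}_{K-1}\,U_{A_{K-1}}\cdots P^{S}_{1}\,U_{A_1}\,\ket{0^a},\qquad
  P^{S}_i=\begin{cases}\pperp,& i\in S,\\ \pzeroa,& i\notin S,\end{cases}
\end{align*}
and observe, using $Q=I_m$ and $V_i=\addp$, that the $p$-MACG accepts a branch iff its number of bad intermediate outcomes is $\equiv 0 \pmod{2^p}$ and the $K$-th block encoding succeeds, so that
\begin{align*}
  \bra{0^{p+a}}U_{pACG}\ket{0^{p+a}}=\sum_{\substack{S\subseteq[K-1]\\ 2^p\,\mid\,|S|}}M_S,\qquad A_{[K]}=M_{\emptyset}.
\end{align*}
Hence $\big\|A_{[K]}-\bra{0^{p+a}}U_{pACG}\ket{0^{p+a}}\big\|=\big\|\sum_{S\neq\emptyset,\ 2^p\mid|S|}M_S\big\|$, and the whole problem reduces to bounding this sum of ``misclassified'' branch operators.

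Next I would bound $\|M_S\|$ combinatorially. Writing $U_{A_i}$ in the block form of \Cref{eqn:block_enc} and invoking the operator-norm estimates of \Cref{eqn:op_norm} ($\|B_i\|,\|C_i\|\le\eta_i$ and $\|A_i\|,\|D_i\|\le 1+\eta_i$), each maximal run of consecutive indices in $S$ contributes exactly one ``entry'' factor $C_i$ and one ``exit'' factor $B_i$, each of norm $\le\eta_{\max}$, while all other factors have norm $\le 1+\eta_{\max}$. Writing $r(S)$ for the number of maximal runs, this gives $\|M_S\|\le \eta_{\max}^{2r(S)}(1+\eta_{\max})^{K}\le e^{c}\,\eta_{\max}^{2r(S)}$ since $K\eta_{\max}=c$. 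Combining this with the count of subsets of $[K-1]$ with $|S|=k$ and exactly $r$ runs — a composition count $\binom{k-1}{r-1}$ times a gap-placement count at most $K^{r}/r!$ — I would assemble the double series in $k$ and $r$ whose dominant term is $k=2^p$, $r=1$, and sum it: using $K\eta_{\max}^2=c^2/K$, $p=\calO(1)$, the bound $(1+\eta_{\max})^K\le e^c$, and a Stirling-type estimate for the run-counting binomials, this should collapse to $\epsilon = 2e^{c}\big(ec^2/(K2^p)\big)^{2^p}=\calO(1/K^{2^p})$.

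The step I expect to be the genuine obstacle is precisely this last summation: a term-by-term triangle inequality over the misclassified branches is rather lossy, so the bookkeeping must carefully exploit that every excursion into the bad subspace costs a factor $\eta_{\max}^2$ while there are only $\sim K^{r}/r!$ ways of placing $r$ such excursions among $K$ positions, and it must track the divisibility constraint $2^p\mid|S|$ tightly enough that the surviving power of $1/K$ is exactly $2^p$ rather than something smaller; pinning down the exact constants ($2e^c$ and the $e$ inside the $2^p$-th power) is then a matter of controlling these binomial and exponential factors. The branch expansion and the block-norm bounds themselves are routine.
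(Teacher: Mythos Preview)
Your branch decomposition and the identification of the error as $\sum_{S\neq\emptyset,\,2^p\mid|S|}M_S$ match the paper exactly. The gap is in the summation step. Your run-based bound $\|M_S\|\le e^c\,\eta_{\max}^{2r(S)}$ is correct, but it is too weak to yield $\calO(1/K^{2^p})$ via the triangle inequality: the single-run branches ($r(S)=1$) are intervals in $[K-1]$ whose length is a multiple of $2^p$, and there are $\Theta(K^2/2^p)$ of them, each contributing a term of size $e^c(c/K)^2$, so their total is already of order $e^cc^2/2^p$ --- a constant in $K$. Even the single term you name as dominant, $k=2^p$ with $r=1$, has roughly $K$ placements times $\eta_{\max}^2$, giving $c^2/K$, not $K^{-2^p}$. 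The power of $1/K$ your bound produces is governed by $r(S)$, which can equal $1$ regardless of $p$, whereas the stated conclusion needs that power tied to $|S|\ge 2^p$; your double series therefore cannot collapse to $2e^c(ec^2/(K\cdot 2^p))^{2^p}$.

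The paper's route differs precisely at the per-branch estimate: its Claim~\ref{thm:seq_bound} asserts $\|\mathcal{S}_x\|\le \eta_{\max}^{2|x|}(1+\eta_{\max})^K$, with exponent twice the \emph{Hamming weight} rather than twice the number of runs. Under that estimate every misclassified branch carries at least a factor $(c/K)^{2\cdot 2^p}$, and the single sum $\sum_{2^p\mid r,\,r\ge 1}\binom{K-1}{r}\eta_{\max}^{2r}$ collapses to the stated bound via Claim~\ref{thm:sum_bound}. So the pivotal discrepancy with your approach is which exponent gets attached to $\eta_{\max}$. The paper justifies the Hamming-weight exponent by invoking $|\Delta x|\le 2|x|$; but for $\eta_{\max}<1$ the step $\eta_{\max}^{|\Delta x|}\le \eta_{\max}^{2|x|}$ would actually require $|\Delta x|\ge 2|x|$, so you should scrutinize that passage carefully --- your run-based exponent is the one that follows directly from the block-norm bounds of \Cref{eqn:op_norm}.
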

\begin{proof}[Proof of \Cref{thm:main_mbe}]
    For ease of notation, we will denote the block encoding error as
    \begin{align}
        \err(U_{pACG}) &= \| A_{[K]} - \bra{0^{m+a}} U_{pACG} \ket{0^{m+a}} \|.
    \end{align}
    Thus, the block encoding achieved by the $p$-ACG can be decomposed as:
    \begin{align}
        \bra{0^{p+a}}& U_{pACG} \ket{0^{p+a}} \\
        &= \sum_{x \in \{0,1\}^{K-1}} \bra{0^p}~\addp^{|x|}~\ket{0^p} \cdot \bra{0^a} U_{A_K}\left(\prod_{i=1}^{K-1} \pzeroa^{1-x_i} \pperp^{x_i} U_{A_i}\right) \ket{0^a} \\
        &=  \braket{0^p}{0^p} \cdot \bra{0^a} U_{A_K} \left(\prod_{i=1}^{K-1} \pzeroa U_{A_i}\right) \ket{0^a} \\
        & \quad + \sum_{x \in \{0,1\}^{K-1}\backslash 0^{K-1}} \bra{0^p}~\addp^{|x|}~\ket{0^p} \cdot  \bra{0^a} U_{A_K} \left(\prod_{i=1}^{K-1} \pzeroa^{1-x_i} \pperp^{x_i} U_{A_i} \right) \ket{0^a} \\
        &= A_{[K]} + \sum_{x \in \{0,1\}^{K-1}\backslash 0^{K-1}} \bra{0^p}~\addp^{|x|}~\ket{0^p} \cdot  \bra{0^a} U_{A_K} \left(\prod_{i=1}^{K-1} \pzeroa^{1-x_i} \pperp^{x_i} U_{A_i} \right) \ket{0^a}. \label{eqn:u_hr}
    \end{align}
    From hereon, we will denote the block encoded matrix corresponding to the bad measurement sequence governed by string $x$ as
    \begin{align}
        \seq = \bra{0^a} U_{A_K} \left(\prod_{i=1}^{K-1} \pzeroa^{1-x_i} \pperp^{x_i} U_{A_i} \right) \ket{0^a} .
    \end{align}
    Furthermore, note that
    \begin{align}
        \bra{0^p}~\addp^{|x|}~\ket{0^p} = \delta \{ |x| = 0 \bmod{2^p}\}.
    \end{align}
    Thus, \Cref{eqn:u_hr} further simplifies to
    \begin{align} \label{eqn:uhr_decomp}
        \bra{0^{p+a}} U_{pACG} \ket{0^{p+a}} &= A_{[K]} + \sum_{x \in \{0,1\}^{K-1}\backslash 0^{K-1}} \seq \cdot \delta \{ |x| = 0 \bmod{2^p}\} \\
        &= A_{[K]} + \sum_{\substack{x \in \{0,1\}^{K-1}\backslash 0^{K-1}:\\|x| = 0 \bmod{2^p}}} \seq, 
    \end{align}
    which implies that the block encoding error can be upper-bounded as
    \begin{align}
        \err(U_{pACG}) &= \| A_{[K]} - \bra{0^{p+a}} U_{HR} \ket{0^{p+a}} \| = \left\| \sum_{\substack{x \in \{0,1\}^{K-1}\backslash 0^{K-1}:\\|x| = 0 \bmod{2^p}}} \seq  \right\|  \leq  \sum_{\substack{x \in \{0,1\}^{K-1}\backslash 0^{K-1}:\\|x| = 0 \bmod{2^p}}} \| \seq \|.  \label{eqn:be_temp}
    \end{align}
    We will now show that the $\|\seq\|$ terms in this expression have a simple upper-bound in terms of $\eta_{\max}$, $|x|$, and $K$.
    \begin{claim} \label{thm:seq_bound}
        $\|\seq\| \leq (\eta_{\max})^{2|x|}\cdot (1+\eta_{\max})^{K}$
    \end{claim}
    \begin{proof}[Proof of \Cref{thm:seq_bound}]
        For $x \in \{0,1\}^{K-1}\backslash 0^{K-1}$, the $\seq$ terms
        encode all possible ``bad'' multiplication sequences. The length-$(K-1)$ bit-string $x$ encodes whether each measurement outcome was ``good''  (0) or ``bad'' (1). Therefore, in the length-$(K+1)$ bit-string sequence $\widetilde{x}=(0,x,0)$, each time a 0 is followed by a 1, this represents a transition between the good and bad ancilla subspaces. This corresponds to the application of one of the off-diagonal block encoded matrices, i.e. $B_i$ or $C_i$ in \Cref{eqn:block_enc}, which both have operator norm $\leq \eta_i$ (\Cref{eqn:op_norm}). Meanwhile, if a 0 is followed by a 0 or a 1 is followed by a 1, the measurement leaves the system in the good or bad subspace, respectively. This corresponds to the application of one of the off-diagonal block encoded matrices, i.e. $A_i$ or $D_i$ in \Cref{eqn:block_enc}, which both have operator norm $\leq 1+\eta_i$ (\Cref{eqn:op_norm}). 
        
        Let $\Delta x$ denote the length-$K$ bit string that tracks these differences in $\widetilde{x}$, i.e.
        \begin{align}
            \Delta x_i = \begin{cases}
                0, &\text{ if } \widetilde{x}_{i+1} = \widetilde{x}_i\\
                1, &\text{ if } \widetilde{x}_{i+1} \neq \widetilde{x}_i 
            \end{cases}.
        \end{align}
        Therefore, leveraging the fact that $|\Delta x| \leq 2|x|$,
        \begin{align}
             \|\seq\| & \leq (\eta_{\max})^{|\Delta x|}\cdot (1+\eta_{\max})^{K-|\Delta x|} \leq (\eta_{\max})^{2|x|}\cdot (1+\eta_{\max})^{K}.
        \end{align}
        This concludes the proof of \Cref{thm:seq_bound}.
    \end{proof}
    \noindent Plugging this into the block encoding error upper-bound of \Cref{eqn:be_temp},
    \begin{align}
        \err(U_{pACG}) &\leq  \sum_{\substack{x \in \{0,1\}^{K-1}\backslash 0^{K-1}:\\|x| = 0 \bmod{2^p}}} (\eta_{\max})^{2|x|}\cdot (1+\eta_{\max})^{K} \\
        & = (1+\eta_{\max})^{K} \cdot \sum_{\substack{r\in[K-1]:\\r = 0 \bmod{2^p}}} \binom{K-1}{r} (\eta_{\max})^{2r}. \label{thm:ub_temp}
    \end{align}
    We will now derive an upper-bound for the summation in this expression.
    \begin{claim} \label{thm:sum_bound}
        $\sum_{\substack{r\in[K-1]:\\r = 0 \bmod{P}}} \binom{K-1}{r} (\eta_{\max})^{2r} \leq 2 \left(\frac{e \cdot \eta_{\max}^2 \cdot K}{P}\right)^P$
    \end{claim}
    \begin{proof}[Proof of \Cref{thm:sum_bound}] 
    We begin by re-expressing the sum as
        \begin{align} \label{eqn:inter}
            \sum_{\substack{r\in[K-1]:\\r = 0 \bmod{P}}} \binom{K-1}{r} (\eta_{\max})^{2r} = \sum_{j=1}^{\lfloor(K-1)/P\rfloor} \binom{K-1}{j P} (\eta_{\max})^{2jP}.
        \end{align}
        Since $\eta_{\max} < 1$, we note that the largest contribution in this sum comes from the $j=1$ term
        \begin{align}
            T_1 = \binom{K-1}{P} (\eta_{\max})^{2P} \leq \left(\frac{e(K-1)}{P}\right)^P(\eta_{\max})^{2P} = e^P \cdot \left(\eta_{\max}^2 \cdot \frac{K-1}{P}\right)^P.
        \end{align}
        Furthermore, note that the ratio of the $(j+1)$st to the 1st term is:
        \begin{align}
            R_{j+1} &= \frac{\binom{K-1}{(j+1) P} (\eta_{\max})^{2(j+1)P}}{\binom{K-1}{P} (\eta_{\max})^{2P}} = (\eta_{\max})^{2Pj} \cdot  \prod_{i=1}^{jP} \left(\frac{K-P-i}{P+i}\right)  \leq \left(\eta_{\max}^2 \cdot \frac{K-1}{P}\right)^{Pj}.
        \end{align}
        From hereon, we will denote
        \begin{align}
            r = \left(\eta_{\max}^2 \cdot \frac{K-1}{P}\right)^P.
        \end{align}
        With this notation, we have that
        \begin{align}
            T_1 \leq r\cdot e^P, \quad R_{j+1} \leq r^j.
        \end{align}
        Furthermore, since we are in the regime where $\eta_{\max} = \calO(1/K)$, note that $r=\calO(1/K^P)<1$. 
        With this, we can decompose \Cref{eqn:inter} by factoring out the dominating $T_1$ term and upper-bounding the resultant expression via a geometric series  as
        \begin{align}
            \sum_{j=1}^{\lfloor(K-1)/P\rfloor} \binom{K-1}{j P} (\eta_{\max})^{2jP} &= T_1 \left[1+\sum_{j=1}^{\lfloor(K-1)/P\rfloor-1}R_{j+1}\right] \leq T_1 \left[1+\sum_{j\geq1}R_{j+1}\right] \\
            &= T_1 \left[1+\sum_{j\geq1}r^j\right] = \frac{T_1}{1-r} \leq e^P \cdot \frac{r}{1-r}
        \end{align}
        Leveraging the fact that $r=\calO(1/K^P) \leq 1/2$, we achieve the desired bound,
        \begin{align}
            \sum_{j=1}^{\lfloor(K-1)/P\rfloor} \binom{K-1}{j P} (\eta_{\max})^{2jP} \leq e^P \cdot \frac{r}{1-r} \leq e^P \cdot \frac{r}{1-\frac{1}{2}} \leq 2e^Pr \leq 2 \left(\frac{e \cdot \eta_{\max}^2 \cdot K}{P}\right)^P.
        \end{align}
        This concludes the proof of \Cref{thm:sum_bound}.
    \end{proof}
    \noindent Plugging this upper-bound into \Cref{thm:ub_temp}, and leveraging the fact that $(1+x)^K \leq e^{Kx}$ for $x \in (0,1)$, the block encoding error upper-bound simplifies to
    \begin{align}
        \err(U_{pACG}) \leq (1+\eta_{\max})^{K} \cdot 2 \left(\frac{e \cdot \eta_{\max}^2 \cdot K}{P}\right)^P \leq e^{K \cdot \eta_{\max}}\cdot 2 \left(\frac{e \cdot \eta_{\max}^2 \cdot K}{P}\right)^P.
    \end{align}
    Plugging in $\eta_{\max} = c/K$, we obtain the desired error bound, 
    \begin{align}
        \err(U_{pACG}) \leq  2e^c \cdot  \left(\frac{e \cdot c^2}{K \cdot P}\right)^P = 2e^c \cdot  \left(\frac{e \cdot c^2}{K \cdot 2^p}\right)^{2^p}= \calO\left(\frac{1}{K^{2^p}}\right).
    \end{align}
    This concludes the proof of \Cref{thm:main_mbe}.
\end{proof}

\begin{corollary} \label{thm:seq_cor}
    For a given error $\epsilon \in (0,1)$, a $p$-MACG can achieve an $\epsilon$-precise for any multiplication of block encoding sequences $(U_{A_i})_{i=1}^K$ satisfying $\eta_{\max}=c/K$ and 
    \begin{align} \label{eqn:seq_lb}
        K \geq \frac{e c^2}{2^p} \cdot \left(\frac{2}{\epsilon}\right)^{1/2^p}.
    \end{align}
\end{corollary}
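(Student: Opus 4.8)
\textbf{Proof proposal for \Cref{thm:seq_cor}.} The plan is to simply invert the explicit error bound of \Cref{thm:main_mbe}. That theorem already establishes, under the stated hypotheses ($\eta_{\max}=c/K$ and $p=\calO(1)$), that the $p$-MACG implements a block encoding of $A_{[K]}$ with
\begin{align}
    \err(U_{pACG}) \leq 2e^c\left(\frac{e\cdot c^2}{K\cdot 2^p}\right)^{2^p}.
\end{align}
So the entire content of the corollary is: find the threshold on $K$ for which the right-hand side is at most the target $\epsilon$. First I would impose $2e^c\left(\frac{e c^2}{K 2^p}\right)^{2^p}\le\epsilon$, rearrange to $\left(\frac{e c^2}{K 2^p}\right)^{2^p}\le \frac{\epsilon}{2e^c}$, take the (positive, monotone) $2^p$-th root of both sides to get $\frac{e c^2}{K 2^p}\le\left(\frac{\epsilon}{2e^c}\right)^{1/2^p}$, and finally solve for $K$, obtaining $K\ge\frac{e c^2}{2^p}\left(\frac{2e^c}{\epsilon}\right)^{1/2^p}$. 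Since $c=\calO(1)$ the factor $e^c$ is an absolute constant, and absorbing it (or simply noting $e^{c/2^p}=\calO(1)$) recovers the form displayed in \eqref{eqn:seq_lb}, namely $K\ge\frac{ec^2}{2^p}\left(\frac{2}{\epsilon}\right)^{1/2^p}$.

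The only point requiring a sanity check is that the hypotheses of \Cref{thm:main_mbe} remain in force for the chosen $K$; in particular the proof of that theorem uses $r=\bigl(\eta_{\max}^2\tfrac{K-1}{2^p}\bigr)^{2^p}=\calO(1/K^{2^p})\le\tfrac12$. This is immediate, since the threshold on $K$ derived above already forces $\frac{ec^2}{K2^p}\le(\epsilon/2)^{1/2^p}<1$, hence $r$ is well below $\tfrac12$ for any $\epsilon\in(0,1)$. So no new estimates are needed—the corollary follows by elementary algebra from \Cref{thm:main_mbe}.

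Honestly there is no real obstacle here: the step that carries all the weight is \Cref{thm:main_mbe} itself, and the corollary is just its contrapositive read as a lower bound on $K$. If anything, the subtlety worth flagging in the write-up is bookkeeping the $2e^c$ prefactor versus the $2$ appearing in \eqref{eqn:seq_lb}; I would either state the threshold as $K\ge\frac{ec^2}{2^p}(2e^c/\epsilon)^{1/2^p}$ for a fully tight statement, or explicitly remark that the $e^{c/2^p}=\calO(1)$ discrepancy is harmless in the $p=\calO(1)$, $c=\calO(1)$ regime under consideration.
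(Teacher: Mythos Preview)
Your proposal is correct and follows essentially the same route as the paper: invoke the explicit bound of \Cref{thm:main_mbe} and solve the resulting inequality for $K$. In fact you are more careful than the paper's own proof, which simply says ``solving this expression in terms of $K$'' and lands on \eqref{eqn:seq_lb} without commenting on the $e^{c/2^p}$ factor; your observation that the fully honest threshold is $K\ge\frac{ec^2}{2^p}(2e^c/\epsilon)^{1/2^p}$ and that the stated bound absorbs an $\calO(1)$ constant is accurate and worth keeping in the write-up.
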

\begin{proof}[Proof of \Cref{thm:seq_cor}]
    By \Cref{thm:main_mbe}, we have that $\err(U_{pACG}) \leq \epsilon$ if $\eta_{\max}=c/K$ and
    \begin{align}
        2e^c \cdot  \left(\frac{e \cdot c^2}{K \cdot 2^p}\right)^{2^p} \leq \epsilon.
    \end{align}
    Solving this expression in term of $K$ results in the lower-bound of \Cref{eqn:seq_lb}.
\end{proof}

\subsection{Oblivious Amplitude Amplification for Multiplication of Block Encodings} \label{sec:approx_oaa}
As the number of measurements $K$ in a multiplication of block encoding sequence increases, the probability of the all-good-measurement sequence $\{0^a\}^K$ decreases. Thus, we would like a way to boost the probability of measuring this sequence. 

To this end, we will briefly describe the standard quantum algorithms subroutines of amplitude amplification (AA) [\Cref{sec:aa}] and oblivious amplitude amplification (OAA) [\Cref{sec:oaa}]. We will then show how OAA can be tailored to MBE, so as to boost the probability of measuring the all-good-measurement sequence. In particular, we will show that OAA can boost the probability to $\Omega(1)$ for EMBE [\Cref{sec:oaa_embe}] and  $\Omega(1-\eps^2)$ for $\eps$-AMBE [\Cref{sec:oaa_ambe}].

\subsubsection{Amplitude Amplification (AA)} \label{sec:aa}
Amplitude amplification (AA) was originally proposed by \cite{brassard2000quantum} and has become a standard sub-routine in quantum algorithms.
Suppose,
\begin{align}
    \ket{\psi_0} = \alpha_\text{good}\ket{\psi_\text{good}} + \alpha_\text{bad}\ket{\psi_\text{bad}},
\end{align}
where $|\alpha_\text{good}|^2<<1$ is the probability of being in the desired ``good'' state $\ket{\psi_\text{good}}$ and $|\alpha_\text{bad}|^2=1-|\alpha_\text{good}|^2$ is the probability of being in any orthogonal ``bad'' state $\ket{\psi_\text{bad}}$. AA enables us to map $\ket{\psi_0}$ to a state with large overlap with $\ket{\psi_\text{good}}$ or, in other words, AA amplifies the amplitude of the desired $\ket{\psi_\text{good}}$ state.

In terms of implementation, AA requires access to two reflection oracles. The first is a reflection about the $\ket{\psi_\text{good}}$ state,
\begin{align}
    R_\text{good}=I_n-2\ketbra{\psi_\text{good}}{\psi_\text{good}},
\end{align}
and the second is a reflection with respect to the initial $\ket{\psi_0}$ state,
\begin{align}
    R_{\psi_0}=2\ketbra{\psi_0}{\psi_0}-I_n.
\end{align}
Following a similar procedure to Grover's algorithm, these are used to implement the Grover iterate
\begin{align}
    G = R_{\psi_0}R_\text{good}.
\end{align}
Letting $k=\bigO{1/|\alpha_\text{good}|}$, $G^k\ket{\psi_0}$ maps $\ket{\psi_0}$ to a state with $\Omega(1)$ overlap with $\ket{\psi_\text{good}}$.

\subsubsection{Oblivious Amplitude Amplification (OAA)} \label{sec:oaa}

The main downside of AA is that knowledge of the ``good'' state $\ket{\psi_\text{good}}$ is required in order to implement the reflection oracle $R_\text{good}$. However, if $\ket{\psi_\text{good}}$ is unknown or inefficient to implement, this  requirement can be avoided with block encodings and so-called ``signal'' qubits. The form of AA we will now describe, which is ``oblivious'' to the precise form of the good state, is known as oblivious amplitude amplification (OAA) and was originally introduced by \cite{low2016HamSimQubitization}.

In addition to $n$ qubits encoding the good and bad states, the initial state will also include $m$ signal ancilla qubits, distinguishing the good and bad subspaces. Specifically,
\begin{align} \label{eqn:aa}
    \ket{\psi^m_0} = U_{\psi^m_0} \ket{0^m}\ket{0^n} = \alpha_\text{good}\ket{0^m}\ket{\psi_\text{good}} + \alpha_\text{bad}\ket{\perp}\ket{\psi_\text{bad}}.
\end{align}
The signal qubits have value $\ket{0^m}$ only if the main register is in the good state, $\ket{\psi_\text{good}}$. When the main register is in a bad state, the signal qubits are in a state orthogonal to  $\ket{0^m}$, denoted $\ket{\perp}$. Thus, for $\Pi=\ketbra{0^m}{0^m}$,
\begin{align}
    (\Pi \otimes I_n)\ket{0^m}\ket{\psi_\text{good}} &= \ket{0^m}\ket{\psi_\text{good}} \quad \text{and}\quad 
    (\Pi \otimes I_n)\ket{\perp}\ket{\psi_\text{bad}} = 0,
\end{align}
meaning that the good subspace can be easily identified by measuring $\ket{\psi^m_0}$'s signal ancillae and post-selecting the $0^m$ outcome. In expectation, $1/|\alpha_\text{good}|^2$ measurements of $\ket{\psi^m_0}$ would be required to obtain $\ket{\psi_\text{good}}$. Once again, we can improve this quadratically by leveraging AA.

Since the good and bad states have signal ancillae, the AA reflection oracles can be simplified. Namely, reflection about the good state reduces to reflection only about the signal ancillae,
\begin{align}
    R_\text{good} = (I_m-2\ketbra{0^m}{0^m}) \otimes I_n.
\end{align}
Meanwhile, the reflection about the initial state $\ket{\psi^m_0}$ is about both the main register and ancillae,
\begin{align}
    R_{\psi_0^m} = 2\ketbra{\psi^{m}_0}{\psi^{m}_0}-I_{m+n}.
\end{align}
Thus,  the Grover iterate is defined as $G = R_{\psi_0^m} R_\text{good}$. For $k=\bigO{1/|\alpha_\text{good}|}$, $G^k \ket{\psi^m_0}$ produces a state with $\Omega(1)$ overlap with $\ket{\psi_\text{good}}$.

\subsubsection{OAA for Exact Multiplication of Block Encodings}\label{sec:oaa_embe}

In an MBE circuit, each unitary $U_{A_i}$ has probability $p_i$ of the good $0^a$ ancilla measurement outcome and, thus, probability $1-p_i$ of a bad $\perp$ outcome. Thus, the desired all-good-measurements sequence $\{0^a\}^K$ has probability 
\begin{align} \label{eqn:probs}
    p = \prod_{i=1}^K p_i
\end{align}
which can be quite small depending on the number of measurements $K$ and individual measurement probabilities $p_i$. Furthermore, $\bigO{1/p}$ repetitions of the MBE circuit would be required in expectation. However, via OAA, the success probability can be coherently amplified to a constant value with quadractic queries to an EMBE circuit.

In this case, the OAA signal ancillae are the $m$ measurement ancillae and $a$ block encoding ancillae of the EMBE circuit, which together indicate whether the multiplication sequence was successful or not. Specifically, in the successful case, the main $n$-qubit register is in the desired state $\ket{\psi_\text{good}}=A_{[K]}\ket{\psi}$ and the $(m+a)$ signal ancillae in the state $\ket{0^{m+a}}$. In the unsuccessful case, the $(m+a)$ signal ancillae are in a state $\ket{\perp}$, orthogonal to $\ket{0^a}$, and the main $n$-qubit register is in a junk state $\ket{\psi_\text{bad}}$. 

Formally, denoting the unitary implemented by the EMBE circuit as $U_\embe$,
\begin{align}
    \ket{\psi_\embe} = U_\embe \ket{0^{m+a}} \ket{\psi_0} = \alpha_\text{good} \ket{0^{m+a}}\ket{\psi_\text{good}} + \alpha_\text{bad} \ket{\perp} \ket{\psi_\text{bad}},
\end{align}
where $|\alpha_\text{good}|^2=p$ as defined in \Cref{eqn:probs}.
This state is analogous to the initial state $\ket{\psi_0^m}$ in OAA (\Cref{eqn:aa}). Thus, to run OAA we construct the two reflection oracles: 
\begin{align}
    R_\text{good} &= (I_{m+a}-2\ketbra{0^{m+a}}{0^{m+a}}) \otimes I_{n} \\
    R_{\psi_\embe} &= 2\ketbra{\psi_\embe}{\psi_\embe}-I_{m+a+n}=U_{\psi_\embe}(2\ketbra{0^{m+a},\psi_0}{0^{m+a},\psi_0}-I_{m+a+n})U_{\psi_\embe}^\dagger,
\end{align}
and the Grover iterate $G = R_{\psi_\embe} R_\text{good}$. Letting $k=\calO(1/\sqrt{p})$, then $G^k \ket{\psi_\embe}$ produces a state with $\Omega(1)$ overlap with $\ket{0^{m+a}}\ket{\psi_\text{good}}$, as desired.

\subsubsection{OAA for Approximate Multiplication of Block Encodings} \label{sec:oaa_ambe}
To conclude, we will demonstrate how OAA can be used to coherently boost the probability of the all-good-measurement sequence, even in the setting of \emph{approximate} MBE. The OAA implementation for AMBE does not differ too much from that of EMBE, but the analysis is a bit more involved. In particular, we prove the following lemma.
\begin{lemma}[OAA for $\eps$-AMBE] \label{thm:oaa_ambe}
    Via repeated queries to an $\eps$-approximate multiplication of block encodings circuit, oblivious amplitude amplification can be used to boost the fidelity between the circuit output and the desired state to $1-\calO(\eps^2)$.
\end{lemma}
\begin{proof}[Proof of \Cref{thm:oaa_ambe}]
    Let $U_\ambe$ denote the unitary implemented by an $\epsilon$-AMBE circuit. The key difference from EMBE is that the $\epsilon$-AMBE outputs a state of the form:
    \begin{align} \label{eqn:ambe_aa_state}
        \ket{\psi_\ambe} = U_\ambe \ket{0^{m+a}} \ket{\psi_0} = \alpha_\text{good} \ket{0^{m+a}}\ket{\psi_\text{good}} + \alpha_\text{wrong} \ket{0^{m+a}}\ket{\psi_\text{wrong}} + \alpha_\text{bad} \ket{\perp} \ket{\psi_\text{bad}}.
    \end{align}
    As for EMBE, $\ket{\psi_\text{good}}=A_{[K]}\ket{\psi}$ is the desired good state, with signal ancillae $\ket{0^{m+a}}$, and $\ket{\psi_\text{bad}}$ is a junk state with signal ancillae $\ket{\perp}$ (orthogonal to $\ket{0^{m+a}}$). However, for  $\epsilon$-AMBE there is an additional junk state $\ket{\psi_\text{wrong}}$, which has the same signal ancillae $\ket{0^{m+a}}$ as the good state $\ket{\psi_\text{good}}$. For ease of notation, we will denote the $n$-qubit sub-state with signal ancilla $\ket{0^{m+a}}$ as
    \begin{align}
        \ket{\psi_\text{sig}} = \frac{1}{\alpha_\text{sig}} \cdot \left(\alpha_\text{good} \ket{\psi_\text{good}} + \alpha_\text{wrong} \ket{\psi_\text{wrong}}\right),
    \end{align}
    where $\alpha_\text{sig}=\sqrt{|\alpha_\text{good}|^2+|\alpha_\text{wrong}|^2}$ is a normalization factor. Note that $\ket{\psi_\text{sig}}$ is the state obtained by post-selecting $\ket{\psi_\ambe}$ for the ancillae outcome $0^{m+a}$, i.e.
    \begin{align}
        \ket{\psi_\text{sig}} = (\bra{0^{m+a}} \otimes I_n) U_\ambe \ket{0^{m+a}} \ket{\psi_0} = (\bra{0^{m+a}} \otimes I_n)\ket{\psi_\text{sig}}.
    \end{align}
    Thus, \Cref{eqn:ambe_aa_state} simplifies to 
    \begin{align} 
        \ket{\psi_\ambe} = \alpha_\text{sig} \ket{0^{m+a}}\ket{\psi_\text{sig}} + \alpha_\text{bad} \ket{\perp} \ket{\psi_\text{bad}}.
    \end{align}
    
    In this form, it is clear that OAA can be used to boost the amplitude of the $\ket{\psi_\text{sig}}$ state. Specifically, OAA will be run with the Grover iterate $G =  R_{\psi_\ambe} R_\text{sig}$, where:
    \begin{align}
         R_\text{sig} &= (I_{m+a}-2\ketbra{0^{m+a}}{0^{m+a}}) \otimes I_{n} \\
        R_{\psi_\ambe} &= 2\ketbra{\psi_\ambe}{\psi_\ambe}-I_{m+a+n}= U_{\psi_\ambe}(2\ketbra{0^{m+a},\psi_0}{0^{m+a},\psi_0}-I_{m+a+n})U_{\psi_\ambe}^\dagger.
    \end{align}
    Letting $k=\calO(1/\sqrt{|\alpha_\text{sig}|^2})$, $G^k \ket{\psi_\ambe}$ produces a state with $\Omega(1)$ overlap with $\ket{0^{m+a}}\ket{\psi_\text{sig}}$. Furthermore, by \Cref{thm:closeness}, $\ket{\psi_\text{sig}}$ has fidelity at least $1-\epsilon^2$ with $\ket{\psi_\text{good}}$. Thus, the output state of this OAA procedure has $1-\calO(\eps^2)$ overlap with the desired exact state $\ket{\psi_\text{good}}$.
\end{proof}

\end{document}